\documentclass{article}

\usepackage[top=1in, bottom=1in, left=1in, right=1in]{geometry}
\usepackage[utf8]{inputenc}

\usepackage{times}
\usepackage{comment}
\usepackage{enumitem}
\usepackage{amsmath, amsthm, amssymb, natbib, graphicx, url, algorithm2e}
\usepackage{bbm}
\usepackage{xcolor}
\usepackage{braket}
\usepackage{hyperref} 
\usepackage{parskip}
\usepackage{url}

\usepackage[nameinlink,capitalize]{cleveref}
\definecolor{dgreen}{rgb}{0,0.5,0}
\hypersetup{
  colorlinks=true,
  linkcolor=blue,
  filecolor=blue,
  citecolor = dgreen,      
  urlcolor=cyan,
}

\theoremstyle{plain}
\newtheorem{theorem}{Theorem}
\newtheorem*{theorem*}{Theorem}

\newtheorem{lemma}[theorem]{Lemma}
\newtheorem{corollary}[theorem]{Corollary}

\newtheorem{proposition}[theorem]{Proposition}

\theoremstyle{definition}
\newtheorem{definition}[theorem]{Definition}

\newtheorem{remark}[theorem]{Remark}

\newcommand{\R}{\mathbb{R}}
\newcommand{\Z}{\mathbb{Z}}
\newcommand{\N}{\mathbb{N}}
\newcommand{\C}{\mathbb{C}}

\newcommand{\mS}{\mathbb{S}}

\newcommand{\poly}{\text{poly}}

\DeclareMathOperator{\sech}{sech}

\newcommand{\E}{\mathbb{E}}
\newcommand{\Tr}{\operatorname{Tr}}

\newcommand{\Cov}{\operatorname{Cov}}

\newcommand{\op}{\mathrm{op}}

\newcommand{\1}{\mathbf{1}}

\newcommand{\tilt}{\mathcal{T}}

\newcommand{\lineartilt}{T}

\newcommand{\NP}{\textnormal{NP}}
\newcommand{\BQP}{\textnormal{BQP}}
\newcommand{\RP}{\textnormal{RP}}
\newcommand{\norm}[1]{\| #1\|}

\begin{document}
\title{A computational phase diagram for the transverse field Ising model}
\author{Thuy-Duong Vuong\thanks{Department of Computer Science and Engineering, UC San Diego. \texttt{thvuong@ucsd.edu}}}
\maketitle

\begin{abstract}
We study the transverse field Ising model, defined by the Hamiltonian $H =\frac{1}{2}\sum_{i, j\in [n]} J_{ij} Z_i Z_j +\sum_{i=1}^n h_i^z Z_i + \eta\sum_{i} X_i$ where $J $ is the symmetric interaction matrix, and $\eta$ is the transverse field strength. Let $\Delta(J)=\lambda_{\max}(J)-\lambda_{\min}(J)$ be the spectral width of $J.$
When the inverse temperature $\beta\geq0$ satisfies $\Delta(J)\cdot\frac{\tanh(\beta\eta)}{\eta}\leq1$, we give a randomized classical algorithm that approximates the partition function $Z(\beta)=\Tr(e^{-\beta H})$ to a given relative error $\epsilon\in(0,1)$ in time polynomial in $n$, $\beta$, the model parameters, and $\epsilon^{-1}$. When
$ \Delta(J) \cdot \frac{\tanh(\beta \eta)}{\eta} > 1 ,$ we show that approximating $ Z(\beta)$ within an $\exp(o(n))$-multiplicative factor is $\textsc{NP}$-hard, and thus unlikely to admit an efficient classical or quantum algorithms under standard complexity theoretic assumptions. Furthermore, in the regime $\Delta (J)\cdot \frac{\tanh(\beta \eta)}{\eta}\leq 1,$ we provide an efficient randomized classical algorithm that approximates Pauli string observables of the Gibbs state $ \rho_\beta = \frac{e^{-\beta H}}{\Tr(e^{-\beta H})}$ within an arbitrarily small additive error. In the special case when the observable is also diagonal in the $X$-basis, i.e. $P \in \{I, X\}^{\otimes n}$, the algorithm further achieves arbitrarily small relative error.
\end{abstract}
\section{Introduction}

Developing efficient algorithms to compute the partition function and observables of quantum Gibbs states is a central problem in quantum statistical mechanics, quantum information, and complexity theory. In this work, we study the problem of approximating the partition function and Gibbs observables for the transverse-field Ising model (TFIM). The TFIM has been extensively studied in quantum simulation and quantum annealing~\cite{Bravyi2014MonteCS,king2018observation,tindall2026dynamics,dutta2015quantum,sachdev1999quantum,Boixo2013EvidenceFQ,ronnow2014defining,shin2014quantum}. It is the simplest and most studied stoquastic Hamiltonian~\cite{Bravyi2006TheCO} and characterizes \text{StoqMA}~\cite{bravyi2017complexity}, one of the four complexity classes in Cubitt and Montanaro~\cite{cubitt2016complexity}'s classification of $2$-local quantum Hamiltonians.

The TFIM Hamiltonian is defined by
\begin{equation}\label{eq:TFIM}
    H = \frac{1}{2}\sum_{i, j\in [n]} J_{ij} Z_i Z_j + \sum_{i=1}^n h_i^z Z_i + \eta\sum_{i=1}^n X_i,\footnote{Here for $P\in \set{X,Y,Z,I}$ and $i\in [n],$
$P_i := I^{\otimes (i-1)} \otimes P \otimes I^{\otimes (n - i)}.$ $X, Y, Z, I$ are the $2\times 2$ Pauli matrices:
\begin{align*}
X &:= 
\begin{bmatrix}
0 & 1 \\
1 & 0
\end{bmatrix}, \quad
Y := 
\begin{bmatrix}
0 & -i \\
i & 0
\end{bmatrix}, \quad
Z := 
\begin{bmatrix}
1 & 0 \\
0 & -1
\end{bmatrix}, \quad I:=\begin{bmatrix}
1 & 0 \\
0 & 1
\end{bmatrix}
\end{align*}.}
\end{equation}
where $J_{ij}\in \R$ is the interaction term, $h_i^z\in \R$ is the longitudinal field, $\eta\in \R$ is the transverse field. 


Without loss of generality, we may assume that the \emph{interaction matrix} $J=(J_{ij})_{ij} \in\R^{n\times n}$ is symmetric, as replacing $J$ with $ \frac{J + J^\intercal}{2}$ leaves the Hamiltonian unchanged. Since $J$ is symmetric, its eigenvalues are real.  We write $\lambda_{\max}(J)$ and $\lambda_{\min}(J)$ for its largest and smallest eigenvalues, respectively, and $\Delta(J):=\lambda_{\max}(J)-\lambda_{\min}(J)$ for its spectral width.


We consider the problem of computing the partition function $Z(\beta)=\Tr(e^{-\beta H})$ at inverse temperature $\beta\geq 0$. Prior works either require the system to be ferromagnetic \cite{Bravyi2014MonteCS} (i.e., $-J_{ij} \geq 0$ and $h_i^z=0$) or at high temperature \cite{Crosson2020ClassicalSO,SV26}. Specifically, Crosson and Slezak \cite{Crosson2020ClassicalSO} obtained an efficient classical approximation algorithm for $Z(\beta)$ when $\beta$ satisfies the Dobrushin-type condition
\[
\beta < \frac{1}{2(\Delta+2)B},\quad 
\text{where} \quad
B = \max_{i \ne j} |J_{ij}| \quad \text{and} \quad
\Delta = \max_i |\{j \ne i : J_{ij} \ne 0\}|.
\] 
Very recently, \cite{SV26} extended this result to a broader regime defined by the spectral condition $\beta \Delta(J) \leq 1$.\footnote{In particular, \cite{SV26} applies when $ \beta \leq \frac{1}{2 \Delta B},$ a condition that strictly improves upon \cite{Crosson2020ClassicalSO}'s condition that $\beta < \frac{1}{2(\Delta+2)B}$. See \cite{SV26} for further discussion.} They also proved that this condition is tight when the transverse field strength $|\eta|$ is sufficiently small relative to $\Delta(J)$. However, characterizing the tractability region under strong transverse fields remained an open problem (see \cite[Remark 3]{SV26}). We remark that understanding the TFIM in the strong transverse field regime and the qualitative changes induced by varying transverse field are of central importance to quantum annealing and condensed matter physics \cite{harris2018phase,bernaschi2024quantum,narasimhan2024simulating}.

Our first result gives a computational phase diagram for the TFIM partition function for all transverse field strengths. In \cref{thm:easy regime informal}, Part 1,  we give a randomized classical algorithm that approximates $Z(\beta)$ to a relative error $\epsilon\in(0,1)$ in time polynomial in $n$, $\beta$, the model parameters, and $\epsilon^{-1}$ when $\Delta(J)\cdot\frac{\tanh(\beta\eta)}{\eta}\leq1$,
\footnote{Throughout, we adopt the convention $ \frac{\tanh x}{x} = 1$ when $x =0.$ In particular, $\frac{\tanh(\beta \eta)}{\eta} =\beta$ when $ \eta =0.$} a weaker condition than $\beta \Delta(J)\leq 1$ from \cite{SV26}. In \cref{thm:hard regime informal}, we show that when 
$\Delta(J) \cdot \frac{\tanh(\beta \eta)}{\eta} > 1$, approximating the partition function within an $\exp(o(n))$-multiplicative factor is \NP-hard (\cref{thm:hard regime informal}) and thus is unlikely to admit an efficient classical or quantum algorithm under the standard complexity theoretic assumptions that $ \NP \not\subseteq \BQP$ and $ \NP \neq \RP.$
When the transverse field dominates the interaction strength, specifically $\Delta(J)\leq |\eta|$, our algorithm applies at \emph{any} inverse temperature $\beta \geq 0$, and consequently enables approximating the \emph{ground state energy} $E_0 = \lambda_{\min}(H)$ to arbitrary additive precision (\cref{remark:ground state energy approximation}).

As a further application of our techniques, we provide classical algorithms for estimating observables of the Gibbs state $\rho_\beta = \frac{e^{-\beta H}}{Z(\beta)}$ in the tractable regime $\Delta(J) \frac{\tanh(\beta \eta)}{\eta} \leq 1$. Specifically, we obtain an efficient classical algorithm to approximate $ \Tr(P\rho_\beta)$ for any Pauli string $P \in \{I, X, Y, Z\}^{\otimes n}$ to arbitrary additive error (see \cref{thm:easy regime informal}, Part 2). For Pauli strings that are diagonal in the $X$-basis, our algorithm achieves the stronger guarantee of arbitrarily small relative error. Because Pauli strings form a basis for observables, our technique extends to any observable with an explicitly given polynomial-size Pauli expansion $O=\sum_{P\in\{I,X,Y,Z\}^{\otimes n}}c_PP$, yielding an additive estimator whose runtime has an additional $\poly(\sum_P|c_P|)$ dependency.
Finally, combining our techniques with the state preparation framework of Wong~\cite[Section 3.2]{wong2026lee}, we also obtain an efficient \emph{quantum} algorithm to prepare the Gibbs state (see \cref{thm:Gibbs state preparation}).

\begin{theorem}[Informal, see formal version at \cref{thm:main theorem tractable}]\label{thm:easy regime informal}
    Fix $\beta\geq 0$ such that $0\leq \Delta (J) \cdot \frac{\tanh(\beta \eta)}{\eta} \leq 1.$ 
    
    Fix $ \epsilon,\delta \in (0,1).$ There exists
    an efficient randomized classical algorithm that runs in time \[\poly(n, \beta, \max_{i,j} |J_{ij}| , \max|h_i^z|,  |\eta|, \epsilon^{-1}) \cdot \log \delta^{-1}\]
    and:
    \begin{enumerate}
\item Computes $ \widehat{Z}$ s.t. 
$\Pr[(1-\epsilon) Z(\beta) \leq \widehat{Z} \leq (1+\epsilon) Z(\beta)]\geq 1-\delta.$
\item Given a Pauli string $O=\prod_{i=1}^n P_i,$ computes $ \widehat{\Tr(O\rho_\beta)}$ s.t.
$\Pr\left[\left|\widehat{\Tr(O \rho_\beta)} -\Tr(O \rho_\beta)\right|\leq \epsilon \right] \geq 1-\delta.$

In the special case when $ O$ is also diagonal in the $X$-basis, i.e., $P_i\in \set{X_i, I_i}\forall i\in [n],$ we have the stronger guarantee $ \Pr\left[\left|\widehat{\Tr(O \rho_\beta)} -\Tr(O \rho_\beta)\right|\leq \epsilon|\Tr(O \rho_\beta)| \right] \geq 1-\delta.$
\end{enumerate}
    
\end{theorem}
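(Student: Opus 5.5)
The plan is to decouple the interaction by a Hubbard--Stratonovich transform in imaginary time, which turns $Z(\beta)$ into an integral over continuous field paths against a density that is log-concave exactly when $\Delta(J)\tanh(\beta\eta)/\eta\le 1$. We then approximate $Z(\beta)$ by log-concave sampling combined with annealing, and read off Gibbs observables as conditional single-site expectations.

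\textbf{Step 1: shift and decouple.} Shifting $J \mapsto J - \lambda_{\min}(J) I$ changes $H$ only by the constant $-\frac{\lambda_{\min}(J)}{2}\, n\, I$, since $Z_i^2=I$. This makes the interaction matrix PSD with $\lambda_{\max}=\Delta(J)$, at the cost of an explicit factor in $Z(\beta)$ (the sign depends on orientation; if needed one uses $\lambda_{\max}$ instead). We Trotterize $e^{-\beta H}$ into $L=\poly(n,\beta,\|J\|,\|h\|,|\eta|,\epsilon^{-1})$ slices. On each slice we replace $e^{-\frac{\beta}{2L} Z^\top J Z}$ by a Gaussian integral over $y^{(\ell)}\in\R^n$ of $e^{\sqrt{\beta/L}\, \langle y^{(\ell)}, J^{1/2}Z\rangle}$ (with the appropriate sign convention). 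This gives
\[
Z(\beta)\approx C\int \prod_\ell \gamma(dy^{(\ell)}) \prod_{i=1}^n z_i\big(u_i(\cdot)\big),
\]
where $u_i$ is the resulting time-dependent longitudinal field on site $i$ (including $h_i^z$). Here $z_i(u)=\Tr\,\mathcal{T}\exp\big(-\int_0^\beta (u(t)Z+\eta X)\,dt\big)$ is a single-qubit time-ordered partition function. It is positive, because in the $X$ basis with the sign of $\eta$ fixed by a local rotation the propagator is stoquastic. The Trotter error is controlled in relative terms by standard bounds.

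\textbf{Step 2: log-concavity (the main obstacle).} After the change of variables $u = J^{1/2}y$, the density on paths is $\exp\!\big(-\tfrac12\langle u,(\beta J)^{-1}u\rangle_{L^2} + \sum_i \log z_i(u_i)\big)$, restricted to the range of $J$. The negative Hessian of the first term is $\succeq (\beta\Delta(J))^{-1}$ times the identity in the time-normalized $L^2$ inner product. The key lemma I would prove is a uniform bound on the single-site susceptibility kernel: for every field path $u$,
\[
\nabla^2 \log z_i(u) \preceq \tfrac{\tanh(\beta\eta)}{\beta\eta}
\]
in the same normalization. The Hessian is the connected imaginary-time correlation $\langle Z(t)Z(s)\rangle_c$. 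The plan is to show first that its operator norm is maximized at $u\equiv 0$: a longitudinal field only suppresses fluctuations, which I would prove via a Griffiths/GKS-type or Lieb--Robinson-free convexity argument in the path-integral (Poisson jump) representation of a single spin. Second, at $u\equiv0$ the kernel is $\cosh(\eta(\beta-2|t-s|))/\cosh(\beta\eta)$. It is translation-invariant on the circle, so its top eigenfunction is the constant, whose eigenvalue is $\int_0^\beta \langle Z(t)Z(0)\rangle\,dt = \tanh(\beta\eta)/\eta$. The first part, uniformity over all $u$, is the hard part; if a direct argument fails, I would try to bound the Hessian via a Bakry--\'Emery-type comparison of the jump process, or via the Loewner order of the Duhamel correlation. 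Combining the two bounds, the log-density is concave whenever $\Delta(J)\tanh(\beta\eta)/\eta\le 1$, with a polynomial amount of added Gaussian regularization to make it strongly log-concave.

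\textbf{Step 3: algorithms.} Each $\log z_i$ and its gradient is computable exactly in time $\poly(L)$ as a product of $2\times2$ matrices. For Part 1, we introduce an annealing sequence (scaling the coupling, $J\mapsto sJ$ for $s\in[0,1]$, which preserves log-concavity). We estimate each ratio $Z_{s_{k+1}}/Z_{s_k}$ by sampling with a polynomial-time log-concave sampler (e.g.\ Langevin or ball walk), and multiply the ratios; relative variance $O(1)$ per step gives relative error $\epsilon$ with $\poly$ many samples, and median-of-means yields the $\log\delta^{-1}$ factor. For Part 2, $\Tr(O\rho_\beta)$ equals an expectation over the path measure of a product of single-site quantities $\Tr\,\mathcal{T}[\cdots P_i\cdots]/z_i$, each bounded by $1$ in absolute value. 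Plain Monte Carlo therefore gives additive error $\epsilon$. When $P_i\in\{I,X\}$, each factor is nonnegative after the local rotation (a ratio of stoquastic traces). So $\Tr(O\rho_\beta)$ is itself a ratio of partition-function-like integrals of the same log-concave type, with an $X$ insertion playing the role of a modified single-site weight. Applying the Part 1 annealing to numerator and denominator then gives relative error.
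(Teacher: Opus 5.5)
Your Part 1 architecture (Hubbard--Stratonovich, then sample the field) is a legitimate dual of the paper's route. The paper keeps the spin paths, runs block Glauber dynamics on $\mu_\lambda=\tilt_{\lambda W}\pi$, and gets mixing from the trickle-down theorem of \cite{AKV24}. Both routes rest on the same estimate: a bound on $\norm{\Cov(\lineartilt_w\pi_i)}_{\op}$ that is uniform over \emph{all} external fields $w$. In your setting, the Hessian of $-\log$ of the field density is the inverse coupling minus the covariance of the $u$-tilted path measure, so you need exactly this bound, and you leave it open.

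The route you suggest does not supply it. GKS/GHS monotonicity controls covariances only for ferromagnets with fields of one sign, whereas HS fields take both signs. The paper proves the bound (\cref{thm:covariance bound for each block}) on the discretized cycle by an entrywise estimate valid for every field:
\begin{itemize}
\item[(i)] cut the cycle at sites $1$ and $j$ into two paths;
\item[(ii)] reduce each path to an effective two-spin model whose coupling obeys $|\tanh J_t|\le\prod_k|\tanh K^{(k)}|$ (\cref{prop:pairwise covariance bound Ising on path});
\item[(iii)] combine the two couplings with the $\tanh$ addition formula to get $|\Cov_{1j}|\le (a^{j-1}+a^{M-j+1})/(1+a^M)$;
\item[(iv)] bound the operator norm by the row sum $\tanh\eta\coth(\eta/M)$.
\end{itemize}
Note also that this discrete constant $\alpha_M(\eta)$ is strictly larger than $\tanh\eta/\eta$. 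So exactly at $\Delta(J)\tanh(\beta\eta)/\eta=1$, the Trotterized field density at full coupling need not be log-concave. ``Adding Gaussian regularization'' changes the target. Instead, sample only at couplings $s\le 1-1/L$ and choose $M$ so that $(1-1/L)\Delta(J)\alpha_M(\eta)\le 1-\frac{1}{2L}$, as the paper does.

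Part 2 contains a step that fails outright. For $P_i\in\{X,Y\}$, the single-site factor $\Tr(P_iU_i(u))/\Tr(U_i(u))$ is \emph{not} bounded by $1$. The time-ordered propagator $U_i(u)$ is entrywise positive in the $Z$ basis but not Hermitian. A field path with a large positive pulse followed later by a large negative pulse gives $U_i\approx e^{aZ}e^{bX}e^{-aZ}$, for which $\Tr(XU)/\Tr(U)=\tanh(b)\cosh(2a)$. So plain Monte Carlo under the untwisted field measure has no variance bound. Since the estimator is a product over the sites with $P_i\in\{X,Y\}$, nothing stops its second moment from growing exponentially in their number.

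The paper avoids this by sampling from the measure built from $\Tr(X_S\,\cdot)$ rather than $\Tr(\cdot)$. Sites in $S$ carry the antiperiodic cycle $\mathbf K_{\eta_i,(1)}$ (one negative bond), under which the leftover observable $f=\prod_{i:P_i\in\{Y,Z\}}\sigma_i(0)$ lies in $[-1,1]$. In your language, $\Tr(YU)/\Tr(XU)$ and $\Tr(ZU)/\Tr(U)$ have modulus at most $1$. The answer is then $\mathrm{i}^{|\{i:P_i=Y_i\}|}AB$, where:
\begin{itemize}
\item[(i)] $A=\E_{\mu_\lambda}[e^{W/L}f]$ satisfies $|A|\le e$ and is estimated additively;
\item[(ii)] $B=O(1)$, essentially $\Tr(X_S\rho)$, is estimated to relative error.
\end{itemize}
This needs the covariance bound for frustrated cycles, i.e.\ couplings of both signs with $|K^{(k)}|\le K$. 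The paper's entrywise argument covers this, but your lemma, stated for $z_i$ and justified by ferromagnetic monotonicity around $u\equiv0$, does not.

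The same omission undermines your relative-error claim for $P\in\{I,X\}^{\otimes n}$. The assertion that $\prod_{i\in S}\Tr(XU_i)\prod_{i\notin S}z_i$ yields a field measure ``of the same log-concave type'' is precisely this twisted bound. In addition, relative Trotter error for the indefinite observable $X_S$ needs its own argument (\cref{prop:multiplicative approx of trotterized X observable}).
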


Our algorithm extends more generally to systems with site-dependent transverse fields (see \cref{thm:main theorem tractable}).

\begin{theorem}[Informal, see formal version at \cref{thm:main hardness of approximation}]
\label{thm:hard regime informal}
Fix $ \eta ,\beta > 0.$ For any $\varepsilon >0,$ there exists a family of symmetric matrices $J\in \R^{n\times n}$ with $ \Delta(J) \cdot \frac{\tanh (\beta \eta) }{\eta} \in (1, 1+\varepsilon),$ such that approximating $\Tr\left(e^{-\beta\left( \frac{1}{2}\sum_{ij} J_{ij} Z_i Z_j + \eta \sum_i X_i \right)}\right)$
    within an $\exp(c n)$-multiplicative factor is $\NP$-hard, and thus unlikely to admit an efficient classical or quantum algorithm assuming $ \NP \not\subseteq \BQP$ and $ \NP \neq \RP$.  
    Here $c$ is a constant dependent only on $\varepsilon, \eta,\beta.$ Moreover, the matrices in this family can be chosen to have zero diagonal and rational entries of $O_{\eta,\beta,\varepsilon}(1)$ bit length (see \cref{remark:hardness finite precision and zero diagonal}).
\end{theorem}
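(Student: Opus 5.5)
Our plan is to reduce to hardness of approximating the partition function of a \emph{classical} Ising model with rescaled couplings, and then use a known hardness result for classical antiferromagnetic Ising models just beyond the uniqueness (spectral) threshold. The starting point is the mean-field identity behind the algorithmic side: for each site the single-qubit transverse part acts like a classical spin with an effective coupling. Concretely, we decouple the $ZZ$ interaction using a Hubbard--Stratonovich transform. Write $J = \lambda_{\max}(J) I - A$ with $A\succeq 0$ (shifting the diagonal only changes $Z(\beta)$ by a known factor, since $Z_i^2=I$), or the reverse sign, so that $e^{-\beta\frac12 Z^\top J Z}$ becomes a Gaussian average over fields $y\in\R^n$ of products of single-site terms. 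We then evaluate single-site quantities exactly: $\Tr e^{\beta(y_i Z+\eta X)}$-type factors give $2\cosh(\beta\sqrt{y_i^2+\eta^2})$. This is not a product formula, since $ZZ$ and $X$ do not commute, so instead we use Trotterization or path-integral comparisons. The cleanest route is a two-sided bound. Let $\kappa=\tanh(\beta\eta)/\eta$. We show that $\log Z(\beta)$ is within $O(n)$ additive terms of, and in fact tightly related to, $\log Z_{\mathrm{cl}}(\kappa J)$, where $Z_{\mathrm{cl}}(K)=\sum_{\sigma\in\{\pm1\}^n} e^{-\frac12\sigma^\top K\sigma}$.

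For the hard instances, we take $J = -\frac{\theta}{\kappa}\cdot\frac{A_G}{d}$-type matrices from random $d$-regular bipartite graphs, or the constructions of Sly--Sun and Galanis--Štefankovič--Vigoda. These show that for antiferromagnetic Ising on $d$-regular graphs beyond uniqueness, approximating $Z_{\mathrm{cl}}$ within $e^{cn}$ is NP-hard via max-cut gadgets. The spectral width of $A_G/d$ for random regular graphs is close to $1+2\sqrt{d-1}/d$, and the uniqueness threshold in terms of the spectral width matches $\Delta(K)=1$ up to $o_d(1)$. Alternatively, we could use the simpler dense construction of \cite{SV26}, which the paper says proves tightness of $\beta\Delta(J)\le1$ for small $\eta$. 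We would reuse exactly that classical family $K$ with $\Delta(K)\in(1,1+\varepsilon)$ and set $J=K/\kappa$, so that $\Delta(J)\kappa=\Delta(K)$.

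The key technical step is transferring hardness from $Z_{\mathrm{cl}}(K)$ to $Z(\beta)$ for $J=K/\kappa$ up to $e^{o(n)}$ or $e^{c'n}$ with $c'<c$. Our approach is to show that the NP-hard gap is witnessed by the free-energy of a phase-coexistence/max-cut structure, and that the quantum free energy satisfies $\frac1n\log Z(\beta)=\frac1n\log(2\cosh\beta\eta)^n+\Phi(\kappa J)+o(1)$ on these mean-field-like dense instances. Here $\Phi$ is the same variational (Gaussian/mean-field) functional governing the classical model. This holds because the instances are dense with entries $O(1/n)$, so a quantum de Finetti / mean-field argument applies: product states are asymptotically optimal in the Gibbs variational principle. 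The single-site optimization then gives the susceptibility $\tanh(\beta\eta)/\eta$ at zero field, which is exactly why $\kappa$ appears.

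The main obstacle is making this mean-field approximation uniform with error $o(n)$ while still encoding an NP-hard problem (max-cut on a bounded-degree graph embedded as a low-rank perturbation). We would try the gadget used in \cite{SV26}: $J=-\frac{\alpha}{n}\mathbf 1\mathbf 1^\top+\gamma L_G/n$-type matrices. Here the leading rank-one term sits slightly past the threshold, creating a symmetry-broken two-phase structure, and the $O(1/n)$-scale graph term selects the cut value. For these instances we would show the quantum free energy is the classical one with $J\mapsto\kappa J$ plus controlled error, using Bogoliubov/Peierls upper bounds and product-state trial lower bounds. Finally, we round entries to rationals of $O(1)$ bit length and zero the diagonal, which changes $Z$ by an explicit factor.
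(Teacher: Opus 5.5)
There is a genuine gap, and it sits at the step you yourself flag as key.

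\textbf{The quantum model is not the classical model at $\kappa J$.} You claim that $\log Z(\beta)$ at coupling $J$ equals the classical free energy at coupling $\kappa J$ up to $o(n)$, because the two share ``the same variational functional''. This is false; only linear response matches. Take one transverse-field spin in a longitudinal field $h$, with $\beta$ absorbed. Its cumulant generating function is $\log\frac{\cosh\sqrt{\eta^2+h^2}}{\cosh\eta}$. The zero-field curvature of this function is $\tanh\eta/\eta$, which is why the thresholds coincide at $\kappa\Delta(J)=1$. Its higher-order terms, however, differ from the classical $\log\cosh$.

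Consequently, beyond the threshold the following are all different from the classical model at $\kappa J$: the location of the wells (in the paper, $m_*$ with $\tanh R_*/R_*=1/b$), their depth, and the effective interaction between gadgets. The reduction has to recover $\mathrm{maxcut}(G)$ from a log-ratio of partition functions with error below $\zeta\,\mathrm{maxcut}(G)=\Theta(n)$. A quantum--classical discrepancy of ``$O(n)$'' with an uncontrolled constant therefore cannot transfer classical $\NP$-hardness.

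\textbf{The geometry does not suit a de Finetti or mean-field argument.} Sly--Sun/GSV gadgets are sparse random bipartite graphs, where product-state asymptotics say nothing. In a single dense instance with $O(1/n)$ couplings, a $\gamma L_G/n$ term moves $\log Z$ by at most the operator norm of that term, which is $O(\gamma)$. That is far too little to encode a cut within $e^{cn}$. Hard instances need $\Theta(n)$ inter-gadget weight spread over many constant-size gadgets, and there per-site asymptotics give no usable error bounds.

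\textbf{Smaller errors.} These are secondary but worth fixing:
\begin{itemize}
\item For bipartite $d$-regular $G$ one has $\Delta(A_G/d)=2$, not $1+2\sqrt{d-1}/d$.
\item An antiferromagnet on a bipartite graph is gauge-equivalent to a ferromagnet, which is tractable.
\item Under the statement's sign convention, $J=-\frac{\theta}{\kappa}\frac{A_G}{d}$ with $\theta>0$ is already ferromagnetic.
\end{itemize}

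\textbf{What is missing.} You need a direct proof of phase coexistence for the \emph{quantum} gadget, with no classical comparison. The paper proceeds as follows.
\begin{itemize}
\item It starts from the exact path-integral identity $\Tr e^{H}=C\,\E_\nu\bigl[e^{\frac12\sum_{ij}J_{ij}\langle\sigma_i,\sigma_j\rangle}\bigr]$ over Poisson jump paths.
\item It proves that the transverse-field Curie--Weiss block with $b>\eta/\tanh\eta$ is bimodal in $L^2(\mathbb{S})$. Cram\'er plus Varadhan give an LDP for the path average $\bar\sigma$, with rate at least $\Phi_{b,\eta}(bm_*\mathbf 1)-\Phi_{b,\eta}(bx)\ge c\min_s\norm{x-sm_*\mathbf 1}^2$.
\item Handling non-constant paths in that bound requires $\Lambda(f)\le\int\Lambda(f(t)\mathbf 1)\,dt$ and the sign-change inequality.
\end{itemize}

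What is needed is concentration of the \emph{path} average, not merely the free energy, because the boundary spins $\tau$ entering each gadget are themselves paths. This concentration yields $Q_s(\tau)=e^{\pm3bt\delta}A_0\,e^{bsm_*\sum_j\langle\tau_j,\mathbf 1\rangle}$. After that:
\begin{itemize}
\item Each matching edge of weight $w<0$ contributes an exactly computable two-qubit factor.
\item The inequality $F(v)>F(-v)$ for $v>0$ shows that cut edges are rewarded.
\item The spectral width is pinned in $(b_{\min},b')$ using $\Delta(\hat J)=bn_1/r$ and Weyl's inequality.
\end{itemize}
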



\begin{remark}[Approximating the ground state energy when $ \Delta(J)\leq |\eta|$]\label{remark:ground state energy approximation}
For $\beta > 0$, let $F(\beta) := -\frac{1}{\beta}\log Z(\beta)$ be the free energy. $F(\beta)$ approximates the ground state energy $E_0 = \lambda_{\min}(H)$:
$F(\beta) \le E_0 \le F(\beta) + \frac{n\log 2}{\beta}.$ Set $\beta = \frac{3 n\log 2}{\epsilon},$ and use \cref{thm:easy regime informal} to compute $ \hat{Z}$ in $\poly(n, \max_{i,j} |J_{ij}| , \max|h_i^z|,  |\eta|, \epsilon^{-1}) \cdot \log \delta^{-1}$ time, where
\[\Pr [\frac{1}{2} Z(\beta)\leq  \hat{Z}\leq 2 Z(\beta)] \geq 1-\delta \] 
and output  $\hat{E}_0= -\frac{1}{\beta}\log \hat{Z}.$ Then $ \Pr[|\hat{E}_0 - E_0| \leq \epsilon]\geq 1-\delta.$ 
\end{remark}
\subsection{Other related works}

Most algorithms for computing partition functions and Gibbs observables of quantum spin systems apply only in the high-temperature regime \cite{harrow2020classical, HKT22, BLMT24, Bakshi2026RapidMF, Putterman2026WhenQT, mann2026fastalgorithmsstoquasticspin, yin2023polynomial}. When specialized to the TFIM, these general results are weaker than those of Crosson and Slezak~\cite{Crosson2020ClassicalSO} and Sinclair and Vuong~\cite{SV26}.
Helmuth and Mann~\cite{Helmuth_2023} consider low-temperature quantum systems that are sufficiently small perturbations of classical spin systems; for the TFIM, their result requires a sufficiently weak transverse field. Mann and Waite~\cite{mann2026efficient} consider a weak-interaction condition that, for the TFIM, implies the Dobrushin condition of \cite{Crosson2020ClassicalSO}.

\subsection{Technical overview}
For ease of exposition, we absorb $-\beta$ into the Hamiltonian
coefficients and assume $\eta >0$. 
The goal is to approximate $\Tr(e^H)$ and $ \Tr(O \rho)$ with $ \rho =\frac{e^H}{\Tr(e^H)}$ in the tractable region $ \Delta(J)\cdot \frac{\tanh \eta}{\eta} \leq 1$ and show hardness of approximating $\Tr(e^H)$ beyond this region.

Fix an accuracy parameter $\epsilon > 0,$ and a Trotter number
$M= \poly(n, \max_{i,j} |J_{ij}| , \max|h_i^z|,  \eta, \epsilon^{-1}) ,$ 
and consider the Lie-Trotter approximation $Q_M$ of $ e^H$, i.e.,
\[ Q_M =  (e^{H_z/M} e^{H_x/M})^M \text{ with } H_z=\frac{1}{2}\sum_{i,j} J_{ij} Z_i Z_j+\sum_i h_i^z Z_i \text{ and } H_x=\eta\sum_i X_i.\]
For a Pauli string observable $O = \prod_i P_i,$ we show that $\Tr(Oe^H)$ is well approximated by $ \Tr(O Q_M)$, and write:
\begin{equation}\label{eq:partition function informal}
  \Tr(O Q_M) \propto \E_{\boldsymbol{\sigma} \sim \pi} \left[ e^{W(\boldsymbol{\sigma})} f (\boldsymbol{\sigma}) \right] , \text{ } W (\boldsymbol{\sigma}) = \frac{1}{2 M} \sum_{i,j} J_{ij} \langle \sigma_i, \sigma_j \rangle,\text{ } 
f (\boldsymbol{\sigma})\in  [-1,1] \forall \boldsymbol{\sigma}, \text{ } \pi=\bigotimes_{i=1}^n \pi^{(i)} 
\end{equation}
Here, each $\pi^{(i)}$ is an Ising model on a cycle graph on $M$ vertices (a.k.a. a 1D Ising model with closed boundary conditions), with couplings $\pm K_M(\eta)$ where $K_M (\eta) =\frac{1}{2}\log \coth(\eta/M)$. The couplings take both positive and negative values when $P_i\in \set{X_i, Y_i}.$ 
For $\lambda \in[0,1]$, let $ \mu_\lambda$ be the tilt of $ \pi$ by $\lambda W,$ i.e., $\mu_\lambda(\boldsymbol{\sigma})  = \frac{e^{\lambda W(\boldsymbol{\sigma}) } \pi(\boldsymbol{\sigma})}{\E_\pi[e^{\lambda W }]}.$

Using the telescoping identity, we obtain:
\begin{align*}
   \E_{ \pi} [e^W f] 
   &=\E_{\pi} [e^{W/L}] \E_{\mu_{1/L}}[e^{W/L}] \cdots \E_{\mu_{(L-2)/L}}[e^{W/L}]  \cdot \E_{\mu_{(L-1)/L}}[e^{W/L} f] 
\end{align*}
Choosing $L\geq n^2 \max_{i,j} |J_{ij}|$ ensures that $\max \{ |e^{W/L} f|  , |e^{W/L}|\} = O(1).$ We can then estimate $ \E_{ \pi} [e^W f]$ using samples from $ \mu_\lambda$ with $\lambda\in \set{ \frac{i-1}{L}\mid i\in [L]}.$

To sample from $\mu_{\lambda}$, we use the Glauber dynamics: each step of this Markov chain chooses \(i\in[n]\) uniformly and resamples $\sigma_i\in \set{\pm 1}^M$ conditioned on $\boldsymbol{\sigma}_{-i} = (\sigma_j)_{j\neq i};$ resampling $\sigma_i$ is equivalent to sampling from a 1D Ising model, and thus can be efficiently implemented. 

Approximating the TFIM partition functions and Gibbs observables in the tractable region ($ \Delta(J)\cdot \frac{\tanh \eta}{\eta} \leq 1$) thus reduces to proving rapid mixing of the Glauber dynamics on $\mu_{\lambda}.$ By the trickle down method of \cite{AKV24}, this
reduces to showing that the covariance matrices of all linear tilts of $\pi^{(i)}$ have operator norm at most
\[
\frac{\tanh\eta}{\tanh(\eta/M)}
\approx M\cdot\frac{\tanh\eta}{\eta}.
\]
In \cref{subsec:covariance bound for 1D}, we prove the more general result that this covariance bound holds for any Ising model on a cycle with coupling terms in $[-K_M(\eta), K_M(\eta)]$ and arbitrary external fields. The key idea is to bound each entry of the covariance matrix by relating the Ising model on the cycle to Ising models on paths.

We sketch the proof of hardness of approximation. 
We can relate the quantum partition function $\Tr(e^H)$ to the partition function $Z^{\text{classical}}_{J}$ of a classical distribution $ \tilde{\mu}^{\text{classical}}_{ J}$.
The hardness of approximation proof relies on the fact that the intractable regime is precisely where the Curie-Weiss TFIM is bimodal:  
 the average $\bar{\sigma} = \frac{1}{n}\sum_i \sigma_i ,$  where $(\sigma_1, \cdots, \sigma_n)$ is sampled from the corresponding classical distribution,  is either close to $ m_* \mathbf{1}$ or $-m_* \mathbf{1}$ with overwhelming probability, for some $m_*\in (0,1)$. We prove this key property in \cref{subsec:quantum curies-weiss} by establishing a large deviation principle for $ \bar{\sigma}$.
 Inspired by gadget construction techniques from \cite{sly2010computational,SS12,galanis2024sampling}, we use the bimodality of the Curie-Weiss TFIM to construct a hard TFIM instance from a hard \textsc{MAXCUT} instance $G.$ We build a base gadget graph $\hat{\mathcal{G}}$ by replacing each vertex $v$ in $G$ with a Curie-Weiss gadget $K_v$, and another gadget graph $\mathcal{G}$  by adding edges between these Curie-Weiss gadgets to represent the edges of $G.$ We let $ \mathcal{V}$ denote the common vertex set of $\hat{\mathcal{G}}$ and $\mathcal{G},$ and $\hat{J}$ and $\tilde{J}$ be the adjacency matrices  of $\hat{\mathcal{G}} $ and $\mathcal{G}$ respectively. 
 For each Curie-Weiss gadget $K_v,$ we use a subset of vertices $R_v\subseteq V(K_v)$ to encode the spin value of $v$:  we set $Y(\sigma_{R_v})=+1$ if $\bar{\sigma}_{R_v}=|R_v|^{-1}\sum_{i\in R_v}\sigma_i$ is strictly closer to $m_* \mathbf{1}$ than to $-m_* \mathbf{1}$, and set it to $-1$ otherwise.
 Each configuration $\boldsymbol{\sigma} = (\sigma_i)_{i\in \mathcal{V}}$ thus maps to a configuration $\mathcal{Y}(\boldsymbol{\sigma} ) = (Y(\sigma_{R_v}))_{v\in V_G} \in \set{\pm 1}^{V_G}.$ 
For $ J \in \set{\hat{J}, \tilde{J}}$ and $ \mathbf{y}\in \set{\pm 1}^{V_G},$ let $Z_{J}(\mathbf{y})$ be the contribution of configurations $\boldsymbol{\sigma} = (\sigma_i)_{i\in \mathcal{V}}$ with $\mathcal{Y}(\boldsymbol{\sigma} ) = \mathbf{y}$ to $Z^{\text{classical}}_{ J} ;$ using the bimodality of the Curie-Weiss gadget, we can 
relate $ \log \frac{ Z_{\tilde{J}}(\mathbf{y})}{ Z_{\hat{J} }(\mathbf{y}) }$ to the size of the cut defined by $\mathbf{y}.$
By adding sufficiently many antiferromagnetic edges between gadgets, we can make the contribution from the maximum cut(s) of $G$ dominate, so that $ \log \frac{Z^{\text{classical}}_{ \tilde{J}}  }{Z^{\text{classical}}_{ \hat{J}}  }$ encodes $\textsc{MAXCUT}(G)$, the maximum cut value of $G.$ 
By a suitable choice of the gadget construction's parameters, we can ensure that both matrices $\tilde{J}$ and $\hat{J}$ lie just beyond the tractable region, i.e., $ \Delta(J) \cdot \frac{\tanh ( \eta) }{\eta} \in (1, 1+\varepsilon)$  for $ J \in \set{\tilde{J}, \hat{J}}$, while sufficiently accurate approximations to both $ Z^{\text{classical}}_{ \tilde{J}}$ and $ Z^{\text{classical}}_{ \hat{J}} $ would give an approximation of $\textsc{MAXCUT}(G)$ to a sufficiently small relative error.
The \textsc{APX}-hardness of \textsc{MAXCUT} \cite{alimonti1997hardness} then implies the hardness of approximation result in \cref{thm:hard regime informal}. 
\subsection{AI methodology}
The author developed the high-level strategies for the approximation algorithms and the hardness of approximation proof. 
AI tools (GPT and Gemini) were used to develop the proofs of \cref{thm:covariance bound for each block}, \cref{thm:concentration on two wells}, \cref{prop:multiplicative approx of trotterized X observable}, and \cref{prop:addition apporoximation of Pauli observables}. The author also noted that elements of the AI-generated proof for \cref{thm:concentration on two wells} previously appeared in \cite{CCIV08}. AI tools assisted with proofreading and typesetting. The author retains full responsibility for the paper's content.

\subsection{Acknowledgement}
We thank Benjamin Wong 
for insightful discussions.
\subsection{Organization}
In \cref{sec:expression for observables}, we relate the partition function and Gibbs observables of the TFIM to the partition functions and expectations of classical distributions. 
We prove \cref{thm:easy regime informal} in \cref{sec:tractable}, and \cref{thm:hard regime informal} in \cref{sec:lower bound}. In \cref{sec:Gibbs state preparation}, we analyze the Gibbs state preparation algorithm.

\section{Preliminaries}
For a measure $\nu$ and function $V$ on an inner product space $\mathbb{V}$, the \textbf{tilt} of $\nu$ by $V$ is the measure $\mathcal{T}_V \nu$ on $\mathbb{V}$ defined by
\[
\mathcal{T}_V \nu (d\sigma) = \frac{e^{V(\sigma)} \nu(d\sigma)}{\mathbb{E}_{\sigma \sim \nu} [e^{V(\sigma)}]}
\]
For $w \in \mathbb{V}$,
we let $\lineartilt_w \nu \equiv \tilt_{\sigma\mapsto \langle w,\sigma \rangle}\nu $ denote the \textbf{linear tilt} of $\nu$ by $w.$

For a matrix $A\in \R^{n\times n},$ let $\norm{A}_{\infty} = \max_{i} \sum_{j} |A_{ij}|$ denote its row norm, $\norm{A}_{\operatorname{op}}$ denote its operator norm, and $ \norm{A}_{1}$ denote its Schatten-$1$ norm. 

For symmetric $A$, let $\lambda_{\max}(A)$ and $\lambda_{\min}(A)$ denote its (real) maximum and minimum eigenvalues, and $\Delta(A) = \lambda_{\max}(A) - \lambda_{\min}(A)$ denote its \emph{spectral width}.

For a vector $ v\in \R^n,$ let $ \norm{v}_\infty =\max_i |v_i|,$ $\norm{v}_2 = \sqrt{\sum_{i \in [n]} v_i^2 } $ and $ \norm{v}_1= \sum_{i \in [n]} |v_i|.$ Unless otherwise noted, we use $ \norm{v}$ to denote the $L_2$-norm $\norm{v}_2.$

Let $ \mS= \R/\Z$ be the circle of unit length; we identify $\mathbb{S}$ with \([0,1)\) with $0$ and $1$ identified.
Let $L^2(\mS)$ denote
the collection of equivalence classes of square-integrable measurable functions $f:\mS \to \R$\footnote{where $\mS$ is equipped with the Lebesgue measure} where two functions are equivalent if they agree almost everywhere. 
We view $L^2(\mS)$ as a vector space, so that for any function $f\in L^2(\mS)$ and scalar $ c\in \R,$ $c f$ is the function that maps $t$ to $ c f(t)$ for $t\in \mS.$ We let $\textbf{1}: \mS\to \R$ be the function that maps $t$  to $1$ for all $t\in \mS. $
This vector space is equipped with the inner product $ \langle f,g \rangle = \int_{\mS} f(t)g(t) dt$ and $L_2$-norm $\norm{f}^2_2 =\langle f , f\rangle.$
We also define the $L_1$ norm, $ \norm{f}_1 = \int_{\mS} |f(t)| dt$ and note that $ \norm{f}_1 \leq \norm{f}_2$ by Holder's inequality. Unless otherwise noted, we use $ \norm{f}$ to denote the $L_2$-norm $\norm{f}_2.$

Throughout, \(A=\exp(\pm r)B\) means that \(A=\theta B\) for some \(\theta\in[e^{-r},e^r]\).

\textbf{Computational model.} 
Unless otherwise noted, we analyze our algorithms in the classical Real RAM model of computation \cite{borodin1975computational,preparata2012computational,bringmann2017efficient}, where
the following operations take constant time: (1)
accessing the content of any memory cell, (2) generating a uniformly distributed
real number in $[0, 1]$, and (3) performing basic arithmetical operations
involving real numbers like addition, multiplication, division, comparison, truncation, exponential, and logarithm. 
In \cref{sec:finite precision}, we discuss how to implement our algorithms with finite-precision arithmetic and analyze their costs in the Word RAM model.

\section{Relating Quantum Observables to Classical Expectations}\label{sec:expression for observables}


In this section, we rewrite observables of the TFIM as expectations over classical distributions. To this end, we introduce the following definitions.
\begin{definition}\label{def:general path definition}

For $\eta\geq 0$ and $p\in\{0,1\}$, define
\begin{equation}\label{eq:normalization factor for each index}
Z_{\eta, (p)}= \frac{1}{2}\Tr(X^p e^{\eta X}) = \sum_{\substack{d\geq 0\\d\equiv p \mod 2}}  \frac{\eta^d }{d!} =  \begin{cases} \cosh\eta, &\text{ if } p = 0\\ \sinh \eta, &\text{ if } p = 1\end{cases}.   
\end{equation}

Construct the probability measure $\nu_{\eta,(p)}$ over functions $ \sigma:\mS \to \set{\pm 1}$ as follows:
\begin{enumerate}
    \item Sample $D\sim \mathrm{Pois}(\eta, p),$  where $\mathrm{Pois}(\eta, p)$ is the Poisson distribution $\mathrm{Pois}(\eta)$ conditioned on $D\equiv p \mod 2$. 
    
    For $\eta > 0$, $\mathrm{Pois}(\eta, p) $ is defined by
    \[
    \Pr_{D\sim \mathrm{Pois}(\eta, p) } [D=d]=\frac{\eta^d}{d!\,Z_{\eta, (p)}},
    \qquad d\geq0,\quad d\equiv p\mod2.
    \]
    For $ \eta =0,$ taking the limit as $\eta\to 0$, we have $\Pr_{D\sim \mathrm{Pois}(0, p) } [D=d] = \mathbbm{1}[d = p];$ thus   $\mathrm{Pois}(0, p) $ places all probability mass on $p.$
    \item Conditional on $D=d$, sample $d$ independent uniform
    points $t^{(1)},\ldots,t^{(d)}$ in $(0,1)$ and, independently,
    a uniform sign $u\in\{-1,+1\}$. Define $ \sigma: \mS \to \set{\pm 1}$\footnote{Recall that $ \mS= \R/\Z\equiv [0,1)$ with $0$ and $1$ identified.} by: for $t\in [0,1),$ set
    $
    \sigma(t)=u(-1)^{\#\{j\in [d]:t^{(j)}\leq t\}}
    .
    $  Let
    $
    \mathcal J(\sigma)
    :
    =\{t^{(1)},\ldots,t^{(d)}\}.
    $

    Let $\nu_{\eta,(p)}$ be the law of $\sigma$.

    

   
\end{enumerate}

\end{definition}

\begin{definition}\label{def:1D Ising model CBC}
For interaction $\mathbf{K} = (K^{(k)})_{k=0}^{M-1}\in \mathbb{R}^M$ and external field $\mathbf{w} = (w^{(k)})_{k=0}^{M-1} \in \mathbb{R}^M$, the 1D Ising model with closed boundary conditions (CBC) $\pi_{\mathbf{K}, \mathbf{w}}^{\mathrm{1D Ising, CBC}} $ is the probability distribution  over $\set{\pm 1}^M$ defined by: for any configuration $\mathbf{s} = (s^{(0)}, \dots, s^{(M-1)}) \in \set{\pm 1}^M$,
\begin{equation}\label{def:1D Ising with closed boundary}
\pi_{\mathbf{K}, \mathbf{w}}^{\mathrm{1D Ising, CBC}}(\mathbf{s}) 
= \frac{1}{Z_{\mathbf{K}, \mathbf{w}}^{\mathrm{1D Ising, CBC}}}
\exp\left( \sum_{k=0}^{M-1} K^{(k)} s^{(k)} s^{(k+1)} + \sum_{k=0}^{M-1} w^{(k)} s^{(k)} \right), \quad \text{where } s^{(M)} = s^{(0)}.
\end{equation}
This definition extends to allow infinite fields, i.e., $ w^{(k)} \in \set{\pm \infty}$ by taking limits. Infinite fields are interpreted as
conditioning on $s^{(k)} =\operatorname{sign}(w^{(k)})$ whenever
$w^{(k)}\in \set{\pm \infty}.$
\end{definition}
Fix $M\in \N_{\geq 3}.$ We let $\hat{\nu}_{\eta, (p), M} $ be the push-forward distribution of $ \nu_{\eta, (p)}$ by the map \[\sigma \mapsto \mathrm{Val}_M(\sigma) = (\sigma(0), \sigma(\frac{1}{M}), \cdots, \sigma(\frac{M-1}{M}) ) \in \set{\pm 1}^M.\]
We will show that $\hat{\nu}_{\eta, (p),M}$ is a 1D Ising model with closed boundary conditions.
\begin{proposition}\label{prop:approximation of path functional}
     $\hat{\nu}_{\eta, (p), M} $, the push-forward distribution  of $ \nu_{\eta, (p)}$ by the map $\sigma \mapsto \mathrm{Val}_M(\sigma)$, is exactly $\pi_{\mathbf{K}_{\eta, (p)}, \mathbf{0}}^{\mathrm{1D Ising, CBC}}$, where $\mathbf{K}_{\eta, (p)}  = (K^{(k)})_{k=0}^{M-1}$ s.t. $K\equiv K_M (\eta) =\frac{1}{2}\log \coth(\eta/M),$ 
     \[K^{(0)} = \cdots = K^{(M-2)} = K_M (\eta), \quad K^{(M-1)} = K_M (\eta) (-1)^p. \]
     At $\eta=0$, the Ising representation is understood as the limit when
$\eta\downarrow0$.

\end{proposition}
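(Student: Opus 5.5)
We compute the law of $\mathrm{Val}_M(\sigma)$ directly from the Poisson construction. By the standard Poisson-process fact, sampling $D\sim\mathrm{Pois}(\eta)$ and then $D$ i.i.d.\ uniform points in $(0,1)$ produces a Poisson point process of rate $\eta$ on $(0,1)$. Conditioning on $D\equiv p \pmod 2$ gives $\nu_{\eta,(p)}$. Partition $[0,1)$ into the $M$ arcs $I_k=(\tfrac{k}{M},\tfrac{k+1}{M}]$ for $k=0,\dots,M-1$, with the convention $\sigma(t)=u(-1)^{\#\{j:t^{(j)}\le t\}}$. Then the counts $N_k$ of jump points in $I_k$ are independent $\mathrm{Pois}(\eta/M)$ variables.

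The first step is to read off the spins. We have $s^{(0)}=\sigma(0)=u$, and for $0\le k\le M-2$,
\[
s^{(k+1)}=s^{(k)}(-1)^{N_k}.
\]
Only the parities $b_k=N_k \bmod 2$ of $N_0,\dots,N_{M-2}$ are visible in $\mathbf{s}$. The last arc $I_{M-1}$ is tied down by the global constraint $\sum_k N_k\equiv p$. A single $\mathrm{Pois}(\eta/M)$ count is even with probability proportional to $\cosh(\eta/M)$ and odd with probability proportional to $\sinh(\eta/M)$.

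The second step computes the probability of a configuration. Up to normalization, the unconditioned weight of a parity vector $(b_0,\dots,b_{M-1})$ is $\prod_k \cosh(\eta/M)^{1-b_k}\sinh(\eta/M)^{b_k}$. Conditioning on $\sum_k b_k\equiv p$ forces
\[
b_{M-1}\equiv p+\sum_{k\le M-2}b_k \pmod 2 .
\]
Since $u$ is uniform and independent, $\mathbf{s}$ determines $(b_0,\dots,b_{M-2})$, and therefore $b_{M-1}$, via $(-1)^{b_k}=s^{(k)}s^{(k+1)}$ for $k\le M-2$. We also have
\[
(-1)^{b_{M-1}}=(-1)^p\prod_{k\le M-2}s^{(k)}s^{(k+1)}=(-1)^p s^{(M-1)}s^{(0)}.
\]
Hence
\[
\hat\nu(\mathbf{s})\propto \prod_{k=0}^{M-1} w\bigl(\epsilon_k s^{(k)}s^{(k+1)}\bigr),
\]
where $s^{(M)}=s^{(0)}$, $\epsilon_k=1$ for $k\le M-2$, $\epsilon_{M-1}=(-1)^p$, and $w(+1)=\cosh(\eta/M)$, $w(-1)=\sinh(\eta/M)$.

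The third step rewrites each factor in exponential form. Set $x=s s'\in\{\pm1\}$. Then
\[
w(x)=\sqrt{\cosh(\eta/M)\sinh(\eta/M)}\;e^{Kx}, \qquad K=\tfrac12\log\coth(\eta/M),
\]
because $e^{2K}=\cosh/\sinh$. Absorbing the constants gives exactly $\pi^{\mathrm{1D Ising, CBC}}_{\mathbf{K}_{\eta,(p)},\mathbf{0}}$ with the stated couplings. The uniform sign $u$ is consistent with this, since the Ising weight is invariant under global flip.

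The final step is the case $\eta=0$ (and the general $\eta\downarrow0$ limit). At $\eta=0$ the construction puts exactly $p$ jumps at a uniform point, so the limit statement is obtained by letting $K\to\infty$ in the formula above. Both sides converge to the same law: all spins equal for $p=0$, and one uniformly placed disagreement at the twisted edge for $p=1$, which needs a brief check.

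The main care point is the boundary bookkeeping in the second step. The conditioning on total parity must translate into the sign flip on the edge $(M-1,0)$ and nowhere else, and the endpoint convention ($t^{(j)}\le t$, with points in the open interval $(0,1)$) must assign the arcs correctly. The rest is the routine Poisson superposition and thinning computation.
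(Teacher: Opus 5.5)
Your proof is correct and is essentially the paper's argument: the paper computes $E_{d_1,\dots,d_M}=\tfrac12 Z_{\eta,(p)}^{-1}\prod_{k}\frac{(\eta/M)^{d_k}}{d_k!}$ via the multinomial formula, which is your Poisson-splitting step, and sums over arc counts $d_k$ with the parities forced by $\mathbf{s}$ under the same convention $s^{(M)}=(-1)^p s^{(0)}$; it then rewrites $Z_{\eta/M,(0)}=\cosh(\eta/M)$ and $Z_{\eta/M,(1)}=\sinh(\eta/M)$ as $\sqrt{\sinh(\eta/M)\cosh(\eta/M)}\,e^{\pm K}$, just as you do. For your $\eta=0$, $p=1$ check, the limiting law is uniform over the $2M$ configurations with exactly one frustrated edge anywhere on the twisted cycle ($s^{(k)}=u$ for $k<r$ and $-u$ for $k\ge r$, with $r$ uniform in $[M]$, so $r=M$ gives a constant configuration), rather than a disagreement pinned to the twisted edge as your wording might suggest.
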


In what follows, fix a symmetric matrix
$J=(J_{ij})_{i,j=1}^n\in\mathbb R^{n\times n}$, a longitudinal field
$\boldsymbol h^z=(h_i^z)_{i=1}^n\in\mathbb R^n$ and $\boldsymbol\eta=(\eta_i)_i\in \R_{\geq 0}^n.$ 
Define the Ising potential function $v \equiv v_{J,\boldsymbol h^z}: \set{\pm 1}^n\to \R$ by:
\begin{equation}\label{eq:function Ising}
    v(s)
:=
\frac12\sum_{i,j=1}^nJ_{ij}s_is_j
+\sum_{i=1}^nh_i^zs_i
\end{equation}

Let
\begin{equation}\label{eq:Hz and Hx}
   H_z =\frac{1}{2}\sum_{i, j} J_{ij} Z_i Z_j +\sum_{i=1}^n h_i^z Z_i, \qquad H_x = \sum_{i} \eta_i X_i, \qquad  H = H_z + H_x. 
\end{equation}

Consider an observable $O$  that is a product of Pauli matrices, i.e., $O=\prod_{i=1}^n P_i.$

Let $S = \set{i: P_i\in \set{X_i, Y_i}},$ $X_S=\prod_{i\in S}X_i$.
For $i\in [n]$, let $p_i = \mathbbm{1}_{i\in S}$, $ \nu^{(i)} = \nu_{\eta_i,(p_i)}$ as in \cref{def:general path definition}. 
Fix an integer $M\geq 3.$ Let $\hat{\nu}^{(i)} = \hat{\nu}_{\eta_i,(p_i), M} = \pi_{\mathbf{K}_{\eta_i, (p_i)}, \mathbf{0}}^{\mathrm{1D Ising, CBC}}$ as in \cref{prop:approximation of path functional} and \[\tilde{\nu}^{(i)} = \lineartilt_{(h_i^z/M) \cdot \mathbf{1}}\hat{\nu}^{(i)} =\pi_{\mathbf{K}_{\eta_i, (p_i)}, (h_i^z/M) \cdot \mathbf{1}}^{\mathrm{1D Ising, CBC}} . \]

Let $\nu = \bigotimes_{i=1}^n\nu^{(i)}$, $\hat{\nu} = \bigotimes_{i=1}^n\hat{\nu}^{(i)}$  and $\tilde{\nu} =\bigotimes_{i=1}^n\tilde{\nu}^{(i)}. $

For $\boldsymbol{\sigma} =  (\sigma_i)_{i\in [n]}\sim \nu,$ let $ \boldsymbol{\hat{\sigma}}=\mathrm{Val}_M(\boldsymbol{\sigma}) = (\hat{\sigma}_i)_{i=1}^n$ where $\hat{\sigma}_i = \mathrm{Val}_M(\sigma_i).$ Then $ \boldsymbol{\hat{\sigma}}\sim \hat{\nu} .$

Write $\boldsymbol\sigma (t) =(\sigma_1(t), \cdots, \sigma_n(t))$ and let
\begin{equation}\label{eq: VM and W}
    \begin{split}
    V^{(M)}(\boldsymbol{\sigma})  \equiv V^{(M)} (\boldsymbol{\hat{\sigma}})
:&=  \frac{1}{M }\sum_{k=0}^{M-1} v(\boldsymbol\sigma(k/M)) = W(\boldsymbol{\hat{\sigma}}) +\sum_i \frac{h_i^z}{M}  \langle\mathbf{1}, \hat{\sigma}_i\rangle,\qquad
        W(\boldsymbol{\hat{\sigma}}) = \frac{1}{2M}\sum_{i,j=1}^n J_{ij}\langle\hat{\sigma}_i, \hat{\sigma}_j\rangle\\
        V(\boldsymbol{\sigma})
:&= \int_{\mS} v(\boldsymbol \sigma(t))dt, \qquad f (\boldsymbol \sigma)\equiv f(\boldsymbol{\hat{\sigma}})=\prod_{i: P_i \in \set{Y_i,Z_i} } \sigma_i(0)
    \end{split}
\end{equation}

The following proposition follows from standard applications of the Lie--Trotter product formula (cf.\ \cite{AB20,LRRS21,Bravyi2014MonteCS,Crosson2020ClassicalSO,SV26}). In what follows, let $Q_M = (e^{H_z/M} e^{H_x/M} )^M .$
\begin{proposition}\label{prop:derive expression for observables}
We have, for $ C=\Tr(X_S e^{H_x} ) \geq 0$
 \begin{equation}\label{eq:derivation of observables continuous}
 \begin{split}
\Tr(O \exp (H) ) &=  \mathrm{i}^{|\set{i: P_i= Y_i}|} C\, \E_{\boldsymbol\sigma \sim \nu} [e^{V(\boldsymbol \sigma)} f (\boldsymbol \sigma) ] 
\end{split}
\end{equation}
Let $D = \Tr (X_S ( e^{\frac{1}{M}\sum_i h_i^z Z_i } e^{H_x/M})^M )\geq 0.$ We have:
\begin{equation}\label{eq:derivation of observables discretized}
\Tr(O Q_M) = \mathrm{i}^{|\set{i: P_i = Y_i}|} C\, \E_{\boldsymbol{\sigma} \sim \nu} \left[ e^{V^{(M)}(\boldsymbol{\sigma})} f(\boldsymbol{\sigma}) \right] = \mathrm{i}^{|\set{i: P_i = Y_i}|}  D \,\E_{\boldsymbol{\hat{\sigma}} \sim \tilde{\nu}} \left[ e^{W(\boldsymbol{\hat{\sigma}})} f(\boldsymbol{\hat{\sigma}}) \right] 
\end{equation}
\end{proposition}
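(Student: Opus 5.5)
The plan is to expand $e^{H}$ in the $Z$-basis via the Dyson/interaction-picture series with respect to $H_x$, recognize the resulting path integral as an expectation over the Poisson jump process $\nu$, and then handle the discretized version $Q_M$ the same way, with $V^{(M)}$ replacing $V$.

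\textbf{Continuous formula.} For each $i$, the one-site operator $e^{\eta_i X_i}$ has the expansion $\sum_{d}\frac{\eta_i^d}{d!}X_i^d$. More usefully, I will use the Dyson expansion
\[
e^{H}=\sum_{d\ge0}\int_{0<t_1<\dots<t_d<1} e^{(1-t_d)H_z}H_x e^{(t_d-t_{d-1})H_z}\cdots H_x e^{t_1H_z}\,dt_1\cdots dt_d .
\]
Since $H_x=\sum_i\eta_iX_i$ is a sum of independent single-site flips, the jump times split across sites as independent Poisson processes of rates $\eta_i$. Concretely, write $e^{H}=\lim_M Q_M$ and expand each $e^{H_x/M}=\bigotimes_i e^{\eta_iX_i/M}$ in the computational basis. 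Alternatively, apply the Feynman--Kac formula for the generator $H_x$ of independent spin-flip chains.

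Take the trace of $O e^H$ in the $Z$-basis. Write $O=c\,X_S Z_T$, where $T=\{i:P_i\in\{Y_i,Z_i\}\}$ and $Y=\mathrm{i}XZ$, so $c=\mathrm{i}^{|\{i:P_i=Y_i\}|}$. The factor $Z_T$ evaluated at time $0$ contributes $\prod_{i\in T}\sigma_i(0)=f(\boldsymbol\sigma)$. The factor $X_S$ forces path $i$ to have odd (resp.\ even) total flip parity when $i\in S$ (resp.\ $i\notin S$). Each $H_z$ factor, accumulated along the path, contributes $e^{\int v(\boldsymbol\sigma(t))dt}=e^{V(\boldsymbol\sigma)}$. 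The weight of the paths on site $i$ with $d$ jumps is $\eta_i^d\,dt_1\cdots dt_d$, summed over the two initial signs. Normalizing gives exactly $2Z_{\eta_i,(p_i)}\,\nu^{(i)}$, and the product of these normalizations equals $\prod_i\Tr(X_i^{p_i}e^{\eta_iX_i})=\Tr(X_Se^{H_x})=C\ge 0$. This proves the first formula. I will justify the Dyson series by absolute convergence, since all operators are bounded.

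\textbf{Discretized formula.} For $Q_M$, the same computation applies with $H_z$ acting only at times $k/M$. Indeed, $Q_M=(e^{H_z/M}e^{H_x/M})^M$. Inserting the Dyson expansion of each $e^{H_x/M}$ factor on the interval $(k/M,(k+1)/M)$ (the $H_z$ part of that factor is now absent) yields jumps at uniform times. The diagonal factor $e^{H_z/M}$ is evaluated at the grid points, which gives $e^{V^{(M)}(\boldsymbol\sigma)}$. The ordering convention must be checked so that evaluation at $k/M$ matches $\sigma(k/M)$ with right-continuity; this is handled by the $\le$ in the definition of $\sigma$. The observable contributes $f$ at time $0$ exactly as before. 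This gives $\Tr(OQ_M)=c\,C\,\E_\nu[e^{V^{(M)}}f]$.

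For the last equality, $e^{V^{(M)}}f$ depends only on $\boldsymbol{\hat\sigma}=\mathrm{Val}_M(\boldsymbol\sigma)$, so by \cref{prop:approximation of path functional} the expectation can be taken over $\hat\nu$. Split $V^{(M)}=W+\sum_i\frac{h_i^z}{M}\langle\mathbf 1,\hat\sigma_i\rangle$ and absorb the linear part as a tilt, which turns $\hat\nu$ into $\tilde\nu$. The normalizing factor $\E_{\hat\nu}[e^{\sum_i (h_i^z/M)\langle \mathbf1,\hat\sigma_i\rangle}]$ times $C$ must equal $D$. To see this, apply the already-proved discrete identity with $J=0$ and $O=X_S$: this gives $D=C\,\E_\nu[e^{\sum_i(h_i^z/M)\langle\mathbf1,\hat\sigma_i\rangle}]$, which is exactly the needed constant. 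Nonnegativity of $C$ and $D$ follows because they are traces of $X_S$ against matrices with nonnegative entries in the $Z$-basis, as $e^{H_x/M}$ and diagonal exponentials are entrywise nonnegative.

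\textbf{Main obstacle.} The delicate step is the careful bookkeeping in the continuous path-integral identity. This includes the convergence of the Dyson series, the matching of the factor $2$ from the uniform initial sign $u$ with the trace over both initial spins, and the parity constraint. I would carry this out site by site using the tensor structure of $H_x$.
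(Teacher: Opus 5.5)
Your proof is correct, but it runs in the opposite direction from the paper's. The paper never uses a Dyson series. It first proves the discrete identity by inserting $Z$-basis resolutions into $Q_M$ and evaluating $\bra{s}e^{\eta_iX/M}\ket{s'}=C_i e^{K_M(\eta_i)ss'}$. This exhibits closed-boundary 1D Ising chains, so the sum becomes $C\,\E_{\hat\nu}[e^{V^{(M)}}f]$, which equals $C\,\E_{\nu}[e^{V^{(M)}}f]$ by \cref{prop:approximation of path functional}; $C$ and $D$ are computed directly as products of Ising partition functions. The continuous formula then follows by letting $M\to\infty$: the Lie--Trotter formula handles the left side, dominated convergence ($V^{(M)}\to V$ a.s.) handles the right side, and $\eta_i=0$ is treated by limits.

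You instead get the continuous identity exactly from the interaction-picture expansion, by matching labeled ordered jump times with independent per-site Poisson processes. Your discrete computation re-derives inline the multinomial splitting behind \cref{prop:approximation of path functional}, and obtaining $D$ by specializing the discrete identity to $J=0$, $O=X_S$ is a clean shortcut. Your route is more self-contained for the continuous formula, since it needs no Trotter limit or dominated convergence. The paper's route reuses the finite transfer-matrix computation it needs anyway, because the algorithm relies on $\tilde\nu$ being an explicit product of CBC Ising chains with couplings $K_M(\eta_i)$.

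Two bookkeeping fixes.

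First, your displayed Dyson series puts the earliest time $t_1$ on the right. Then in $\Tr(Oe^H)$ with $O=\mathrm{i}^{|\{i:P_i=Y_i\}|}X_SZ_T$ on the left, the factor $Z_T$ reads the configuration at time $1^-$. That gives $\prod_{i\in T}\sigma_i(1^-)=(-1)^{|\{i:P_i=Y_i\}|}f(\boldsymbol\sigma)$, not $f(\boldsymbol\sigma)$. Use the mirrored, equally valid expansion $\sum_d\int e^{t_1H_z}H_xe^{(t_2-t_1)H_z}\cdots H_xe^{(1-t_d)H_z}$, so that time runs left to right as in the paper's expansion of $\Tr(OQ_M)$. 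The same convention is what makes the factors $e^{H_z/M}$ read $\boldsymbol\sigma(k/M)$ for $k=0,\dots,M-1$. For the continuous formula you could alternatively note that $\nu$ and $V$ are invariant under the time reversal $t\mapsto 1-t$, so the expectations agree anyway.

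Second, your normalization step divides by $Z_{\eta_i,(p_i)}$, which is $0$ when $\eta_i=0$ and $i\in S$. In that case $C=D=0$, and both $\Tr(Oe^H)$ and $\Tr(OQ_M)$ vanish: $e^H$ and $Q_M$ commute with $Z_i$, while $P_i$ is off-diagonal in the $Z$-basis. The identities therefore hold trivially, but this case should be split off explicitly.
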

The following propositions show that, for a suitable choice of $M$, $\Tr(O Q_M)$ is a good approximation of $\Tr(O e^H).$ 
\begin{proposition}\label{prop:multiplicative approx of trotterized X observable}
Let $c= 2 \max_i (\sum_{j\neq i} |J_{ij}| + |h_i^z|).$
Fix $\epsilon\in (0,1).$ Suppose
$M\ge
 \frac{c((c+1) n +2 \sum_i |\eta_i|)}{\epsilon} 
.$ Then  
\[\Tr(X_S e^{H} )e^{-\epsilon}\leq  \Tr(X_S Q_M) \leq\Tr(X_S e^{H} )e^\epsilon .\]

\end{proposition}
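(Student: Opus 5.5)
The plan is to work entirely in the path representation of \cref{prop:derive expression for observables}. For $O=X_S$ there are no $Y$ or $Z$ factors, so $f\equiv 1$ and the phase is $1$. Hence
\[
\Tr(X_S e^H)=C\,\E_{\nu}[e^{V}],\qquad \Tr(X_S Q_M)=C\,\E_{\nu}[e^{V^{(M)}}],
\]
with the same constant $C\ge 0$. It therefore suffices to show $\E_\nu[e^{V^{(M)}}]=e^{\pm\epsilon}\,\E_\nu[e^{V}]$.

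\textbf{Step 1: pathwise Riemann-sum bound.} Let $N(\boldsymbol\sigma)=\sum_i|\mathcal J(\sigma_i)|$ be the total number of jumps. On each interval $[k/M,(k+1)/M)$, $v(\boldsymbol\sigma(t))$ differs from $v(\boldsymbol\sigma(k/M))$ only through jumps occurring in that interval. Flipping a single spin $s_i$ changes $v$ by $2s_i(\sum_{j\ne i}J_{ij}s_j+h_i^z)$; the diagonal term $J_{ii}s_i^2$ is constant. This change has absolute value at most $c$. Integrating over the interval and summing over $k$ gives
\[
|V(\boldsymbol\sigma)-V^{(M)}(\boldsymbol\sigma)|\le \frac{c\,N(\boldsymbol\sigma)}{M}.
\]
Consequently
\[
\E_\nu[e^{V-cN/M}]\le \E_\nu[e^{V^{(M)}}]\le \E_\nu[e^{V+cN/M}].
\]

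\textbf{Step 2: reduce to an expected jump count.} Define $\Phi(s)=\log\E_\nu[e^{V+sN}]$. This function is convex, and $\Phi'(s)=\E_{\mu_s}[N]$, where $\mu_s\propto e^{V+sN}\nu$; in particular $\Phi'(s)\ge 0$. By the mean value theorem,
\[
|\Phi(\pm c/M)-\Phi(0)|\le \frac{c}{M}\sup_{|s|\le c/M}\E_{\mu_s}[N].
\]
Given the hypothesis on $M$, it then suffices to prove
\[
\E_{\mu_s}[N]\le (c+1)n+2\sum_i\eta_i\qquad\text{for } |s|\le c/M\le 1.
\]
Note that $c/M\le \epsilon/n<1$, so this restriction on $s$ is automatic. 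Reweighting by $e^{sN}$ is the same as replacing each $\eta_i$ by $\eta_ie^{s}$, up to the normalizations $Z_{\eta,(p)}$. So $\mu_s$ is again a TFIM path measure, with fields $\eta_i e^s\le e\,\eta_i$.

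\textbf{Step 3 (main obstacle): bounding the expected number of jumps under the tilted path measure.} I would work one spin at a time: condition on $\boldsymbol\sigma_{-i}$ and bound $\E[N_i\mid\boldsymbol\sigma_{-i}]$. Conditionally, $\sigma_i$ is a Poisson jump process with rate $\eta_ie^s$ (parity-constrained), tilted by $\exp\bigl(\int_{\mS} b(t)\sigma_i(t)\,dt\bigr)$ with $|b(t)|\le c/2$.

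I would use a Mecke/Palm-type identity. The expected number of jumps of spin $i$ equals $\eta_ie^s\int_{\mS}\E[\text{ratio of weights after inserting a jump at }t]\,dt$. Inserting a jump at $t$ flips the sign of $\sigma_i$ on $[t,\text{next jump})$; to preserve parity I would pair the insertion with another flip or use a two-jump insertion. The resulting weight ratio is at most $\exp\bigl(c\cdot(\text{length of the flipped segment})\bigr)$.

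A crude bound of $e^{c}$ on this ratio would give $\E[N_i]\lesssim e^{c+1}\eta_i$, which is too weak. I would instead aim for the additive form by splitting jumps into two classes. The first class is jumps "supported by the Poisson rate", whose expected count is at most about $2\eta_i$ (the factor $2$ absorbs $e^s$ and the parity constraint). The second class is jumps enabled by the field $b$. For these, flipping short segments costs at most $c$ per unit length in total, and I would try to show their count is at most about $c+1$ per spin. One route is to compare, in the $k$-th cell of a partition of $\mS$, the probability of $\ge 2$ jumps with that of $0$ jumps.

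Summing over $i$ gives the stated bound $(c+1)n+2\sum_i\eta_i$, and plugging into Step 2 yields $e^{\pm\epsilon}$. Step 3 is where I expect the real work to lie. If the additive bound proves elusive, the fallback is to bound $\E_{\mu_s}[N]$ directly through $\frac{d}{d\log\eta}\log\Tr(X_Se^{H_z+e^s H_x})$ as a quantum expectation, $\E_{\mu_s}[N]=e^s\sum_i\eta_i\Tr(X_SX_i e^{H_s})/\Tr(X_Se^{H_s})$, and control this using operator inequalities.
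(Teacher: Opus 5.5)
There is a genuine gap at Step 3. Steps 1 and 2 are sound: Step 1 is exactly the paper's pathwise bound $|V-V^{(M)}|\le cN/M$, and Step 2 is a valid reduction. But Step 3 carries the entire content of the proposition, and it is only an aim; the bound on $\E_{\mu_s}[N]$ is never established.

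The Mecke/Palm route as sketched does not get off the ground. Because of the parity constraint, inserting a single jump moves you to the opposite parity class. The Palm ratio is then a ratio of even/odd twisted partition functions, which is already $\coth\eta_i\sim 1/\eta_i$ at $V=0$. A two-jump insertion controls $\E[N(N-1)]$, not $\E[N]$. The split into two classes of jumps is never defined. Your fallback via $\Tr(X_SX_ie^{H_s})/\Tr(X_Se^{H_s})$ faces the same twisted-ratio issue and is not carried out either. Step 2 also asks for more than you need, namely a first-moment bound under every tilt $\mu_s$ with $|s|\le c/M$.

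The missing idea is to work with exponential moments. There the bounded density of the Gibbs tilt enters additively rather than multiplicatively. Flipping one spin changes $v$ by at most $c$, so $\sup V-\inf V\le cn$, and hence $d\mu/d\nu=e^V/\E_\nu[e^V]\le e^{cn}$ for $\mu=\tilt_V\nu$.

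Now use your own convex $\Phi$, but bound it by the chord from $0$ to $1$ rather than a tangent at $c/M$. For $\theta=c/M\in[0,1]$,
\[
\Phi(\theta)-\Phi(0)\le\theta\bigl(\Phi(1)-\Phi(0)\bigr)=\theta\log\E_\mu[e^N]\le\theta\bigl(cn+\log\E_\nu[e^N]\bigr).
\]
Here $\E_{\nu_{\eta,(p)}}[e^N]=Z_{e\eta,(p)}/Z_{\eta,(p)}$. Since $\frac{d}{d\eta}\log Z_{\eta,(p)}\in\{\tanh\eta,\coth\eta\}$ is at most $1+1/\eta$, integrating gives $\log\E_\nu[e^N]\le\sum_i\bigl((e-1)\eta_i+1\bigr)\le n+2\sum_i\eta_i$. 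Thus $\Phi(\theta)-\Phi(0)\le\frac{c}{M}\bigl((c+1)n+2\sum_i\eta_i\bigr)\le\epsilon$. Convexity also gives $\Phi(-\theta)-\Phi(0)\ge-(\Phi(\theta)-\Phi(0))\ge-\epsilon$.

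This is exactly the paper's argument. The paper phrases the chord bound as Jensen for the concave map $x\mapsto x^{c/M}$, i.e. $\E_\mu[e^{cN/M}]\le(\E_\mu[e^N])^{c/M}$, and uses $\E_\mu[e^{-cN/M}]\ge(\E_\mu[e^{cN/M}])^{-1}$ for the lower side.

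Your worry that an $e^{c}$ factor is ``too weak'' applies only to first-moment bounds; after taking logarithms it becomes an additive $cn$. The same trick would also rescue your tangent version, since $\E_{\mu_s}[N]\le\log\E_{\mu_s}[e^N]\le cn+\log\E_{\nu_s}[e^N]$, where $\nu_s$ is the free path measure with rates $\eta_ie^s$. At $s=c/M$, however, this yields $(e-1)e^{c/M}\sum_i\eta_i$ in place of $2\sum_i\eta_i$. So it only reproduces the stated constant when $e^{c/M}\le 2/(e-1)$.
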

\begin{proposition}\label{prop:addition apporoximation of Pauli observables}

Fix $\epsilon \in (0,1).$  
Let $c = \norm{J}_{\infty} + \norm{\mathbf{h}^z}_{\infty} + \norm{\mathbf{\eta}}_{\infty}.$ 
Suppose $M\geq C \left(\max\left\{ \frac{nc }{\epsilon}, \sqrt{\frac{(nc)^3}{\epsilon}}\right\}\right)$ where $C = 15.$
Let $\rho =e^H/\Tr(e^H).$
We have:
\[\left| \frac{\Tr(O Q_M)}{\Tr(Q_M)} - \Tr(O \rho)\right| \leq\epsilon.  \]
\end{proposition}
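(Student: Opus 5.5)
The plan is to reduce the claim to a relative trace-norm bound,
\[
\norm{Q_M - e^{H}}_1 \le \epsilon' \,\Tr(e^H), \qquad \epsilon' = \epsilon/2 .
\]
Since $\norm{O}_{\op}=1$, this suffices. We have $|\Tr(OQ_M)-\Tr(Oe^H)|\le \epsilon'\Tr(e^H)$ and $|\Tr(Q_M)-\Tr(e^H)|\le\epsilon'\Tr(e^H)$. Also $|\Tr(OQ_M)|\le\norm{Q_M}_1$, and the second bound gives $\norm{Q_M}_1\le(1+\epsilon')\Tr(e^H)$. Writing
\[
\frac{\Tr(OQ_M)}{\Tr Q_M}-\frac{\Tr(Oe^H)}{\Tr e^H} = \frac{\Tr(O(Q_M-e^H))}{\Tr e^H} + \frac{\Tr(OQ_M)}{\Tr Q_M}\cdot\frac{\Tr e^H - \Tr Q_M}{\Tr e^H},
\]
we need $|\Tr(OQ_M)|/|\Tr Q_M| \le (1+\epsilon')/(1-\epsilon')$. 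This bounds the difference by $\epsilon'(1+(1+\epsilon')/(1-\epsilon'))$, which is at most $\epsilon$ after adjusting constants (this is where $C=15$ gets absorbed).

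Let $T=e^{H/M}$ and $S=e^{H_z/M}e^{H_x/M}$. I would use the telescoping identity
\[
e^H-Q_M=\sum_{k=0}^{M-1}T^{M-1-k}(T-S)S^{k}.
\]
Write $T-S = T R$ with $R := I-T^{-1}S$, and set $\delta:=\norm{R}_{\op}$. Each summand then has the form $T^{M-k}RS^k$. Hölder's inequality for Schatten norms with exponents $M/(M-k)$, $\infty$, and $M/k$ gives
\[
\norm{T^{M-k}RS^k}_1\le \norm{T}_M^{M-k}\,\delta\,\norm{S}_M^{k}.
\]
Moreover $\norm{S}_M=\norm{T(I-R)}_M\le(1+\delta)\norm{T}_M$, and $\norm{T}_M^M=\Tr(e^H)$. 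Hence
\[
\norm{e^H-Q_M}_1\le M\delta(1+\delta)^{M}\Tr(e^H).
\]
It therefore suffices to show that $M\delta\le\epsilon'/2$, say, which also makes $(1+\delta)^M\le e^{M\delta}=O(1)$.

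The main obstacle is bounding $\delta=\norm{I-e^{-H/M}e^{H_z/M}e^{H_x/M}}_{\op}$ sharply enough to match the stated threshold. The crude estimate $\delta=O((nc)^2/M^2)$ would require $M\gtrsim (nc)^2/\epsilon$, which is too large. To do better, write $F(s)=e^{-sH}e^{sH_z}e^{sH_x}$, so that $F(0)=I$ and
\[
F'(s)=e^{-sH}\big(e^{sH_z}H_xe^{-sH_z}-H_x\big)e^{sH_z}e^{sH_x}.
\]
The middle factor equals $\int_0^s e^{uH_z}[H_z,H_x]e^{-uH_z}\,du$. Because each $X_i$ fails to commute only with the terms of $H_z$ involving $i$,
\[
\norm{[H_z,H_x]}_{\op}\le 2\sum_i|\eta_i|\Big(\sum_j|J_{ij}|+|h^z_i|\Big)\le 2nc^2 .
\]
This gives a leading contribution of order $nc^2/M^2$ to $\delta$. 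The conjugating exponentials are not unitary, so I would control them by norms like $e^{O(s\norm{H})}$ with $\norm{H}\le nc$. I expect the remainder to be of order $(nc)^3/M^3$, obtained from a second-order expansion in which one further factor of $H$ appears. Combining, $M\delta\lesssim nc^2/M+(nc)^3/M^2$. Both terms are at most $\epsilon/C'$ precisely when $M\ge C\max\{nc/\epsilon,\sqrt{(nc)^3/\epsilon}\}$, using $c\le nc$ for the first term. This matches the hypothesis, so the remaining work is tracking constants to confirm that $C=15$ suffices.
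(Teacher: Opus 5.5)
\textbf{There is a genuine gap in the final step.} Your estimate $M\delta\lesssim nc^2/M+(nc)^3/M^2$ is correct, and it is essentially sharp. Indeed $e^{-sH}e^{sH_z}e^{sH_x}=I+\frac{s^2}{2}[H_z,H_x]+O(s^3)$, so $\delta\approx\norm{[H_z,H_x]}_{\op}/(2M^2)$. This commutator really can have norm of order $nc^2$: take $J=0$ and $h_i^z=\eta_i=c/2$, so $[H_z,H_x]=\frac{\mathrm{i}c^2}{2}\sum_i Y_i$ has norm $nc^2/2$.

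But $nc^2/M\le\epsilon/C'$ needs $M\gtrsim nc^2/\epsilon$. The hypothesis $M\ge 15\max\{nc/\epsilon,\sqrt{(nc)^3/\epsilon}\}$ implies this only when $c\lesssim\max\{1,n\epsilon\}$. The remark ``using $c\le nc$'' does not supply the missing factor of $c$. Concretely, take $n=1$, fixed $\epsilon$, and $M\approx 15\sqrt{c^3/\epsilon}$. Then $M\delta\approx\sqrt{c\epsilon}/60\to\infty$ as $c\to\infty$, so the bound $M\delta(1+\delta)^M$ is vacuous.

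Better constant tracking will not fix this, because the loss is structural. Bounding each local Lie--Trotter error by its operator norm discards a cancellation. To first order in $1/M$ you have $Q_M=e^{H+\frac{1}{2M}[H_z,H]+O(M^{-2})}$. Hence the accumulated error is $\frac{1}{2M}\int_0^1 e^{(1-s)H}[H_z,H]e^{sH}\,ds=\frac{1}{2M}[H_z,e^H]=-\frac{1}{2M}[H_x,e^H]$. Its trace norm is at most $\frac{\norm{H_x}_{\op}}{M}\Tr(e^H)$. So your target bound $\norm{Q_M-e^H}_1\le\epsilon'\Tr(e^H)$ does hold under the hypothesis; it is your route to it that fails.

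\textbf{The missing idea is the paper's symmetrization.} Write $Q_M=A^{-1}Q_M'A$ with $A=e^{H_x/(2M)}$, $Q_M'=S_M^M$ and $S_M=e^{H_x/(2M)}e^{H_z/M}e^{H_x/(2M)}$. For the symmetric step the $s^2$ term cancels, so $\norm{I-e^{-H/M}S_M}_{\op}=O((nc)^3/M^3)$. Your telescoping-plus-H\"older argument applied to $Q_M'$ then gives $\norm{Q_M'-e^H}_1=O((nc)^3/M^2)\Tr(e^H)$. That is a legitimate substitute for the paper's use of \cite[Lemma 3]{Bravyi2014MonteCS} together with \cref{prop:compare tv distance between gibbs state}.

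The conjugation by $A$ is handled exactly. One has $\Tr(Q_M)=\Tr(Q_M')$ and $\Tr(OQ_M)=\Tr(e^{B/M}OQ_M')$ with $B=\sum_{i:P_i\in\{Y_i,Z_i\}}\eta_iX_i$. This costs only $e^{\norm{B}_{\op}/M}-1\le e^{nc/M}-1$, which is exactly where the condition $M\gtrsim nc/\epsilon$ comes from.

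As written, your argument proves the proposition only under the stronger assumption $M\gtrsim nc^2/\epsilon+\sqrt{(nc)^3/\epsilon}$. That is of the same order as the $M$ chosen in the proof of \cref{thm:main theorem tractable}, but it is not the stated hypothesis.
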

We defer the proofs to \cref{subsec:defer from observables}.
\section{Tractable regime}\label{sec:tractable}

In this section, we will prove the following theorem, which immediately implies \cref{thm:easy regime informal}:

\begin{theorem} \label{thm:main theorem tractable}
Consider a symmetric matrix $J = (J_{ij})_{i,j\in [n]}\in \R^{n\times n}, \boldsymbol{h^z}=(h_i^z)_{i=1}^n\in\mathbb R^n, \boldsymbol{\eta} =(\eta_i)_i\in \R^n.$  

Let $ \eta_{\min} =\min_{i} |\eta_i|.$ Let $\alpha(\eta) = \frac{\tanh \eta}{\eta}$\footnote{Here, we follow the convention $\frac{\tanh(x)}{x}= 1$ when $ x=0.$}. 
Suppose $\Delta(J) \cdot \alpha(\eta_{\min}) \leq 1.$

Let  $H =\frac{1}{2}\sum_{i, j} J_{ij} Z_i Z_j +\sum_{i=1}^n h_i^z Z_i + \sum_{i} \eta_i X_i$ and $\rho = e^H/\Tr(e^H).$

Fix $\epsilon,\delta\in (0,1).$ There exists a randomized classical algorithm that runs in $\poly(n, \max_{i,j} |J_{ij}| , \max|h_i^z|,  \max |\eta_i|, \epsilon^{-1}) \cdot \log \delta^{-1}$ time and:
\begin{enumerate}
\item Compute $ \widehat{\Tr( e^H)}$ s.t. 
$\Pr[\widehat{\Tr(e^H)} = e^{\pm \epsilon} \Tr(e^H)]\geq 1-\delta.$
    \item Given $ S\subseteq [n]$ and $X_S=\prod_{i\in S} X_i$, compute $ \widehat{\Tr(X_S \rho)}$ s.t. 
$\Pr[\widehat{\Tr(X_S \rho)} = e^{\pm \epsilon} \Tr(X_S \rho)]\geq 1-\delta.$
\item Given a Pauli string $O=\prod_{i=1}^n P_i,$ compute $ \widehat{\Tr(O\rho)}$ s.t.
$\Pr\left[\left|\widehat{\Tr(O \rho)} -\Tr(O \rho)\right|\leq \epsilon \right] \geq 1-\delta.$
\end{enumerate}
\end{theorem}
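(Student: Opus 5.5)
By \cref{prop:derive expression for observables}, \cref{prop:multiplicative approx of trotterized X observable} and \cref{prop:addition apporoximation of Pauli observables}, all three parts reduce to estimating classical expectations of the form $\E_{\tilde\nu}[e^{W}f]$ for a polynomially large Trotter number $M$. We may assume every $\eta_i\ge 0$, since conjugating by $Z_i$ flips the sign of $\eta_i$ and maps $X_S$ to $\pm X_S$. For Part 1 we take $S=\emptyset$, and for Part 2 we take $O=X_S$. In both cases $f\equiv1$ and $D>0$, and $D$ is computable in closed form because $e^{\frac1M\sum_i h_i^zZ_i}e^{H_x/M}$ is a tensor product of $2\times2$ matrices. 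Part 2 is then the ratio of two multiplicative approximations, each with accuracy $\epsilon/4$.

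For Part 3, the prefactor $\mathrm{i}^{|\{i:P_i=Y_i\}|}D$ has to be combined with the $S=\emptyset$ prefactor. We write
\[
\frac{\Tr(OQ_M)}{\Tr(Q_M)}=\mathrm{i}^{\cdot}\,\frac{D_S}{D_\emptyset}\cdot\frac{\E_{\tilde\nu_S}[e^Wf]}{\E_{\tilde\nu_\emptyset}[e^W]} .
\]
The quotient $D_S/D_\emptyset$ is explicit. The ratio of expectations equals $\E_{\mu}[g]$, where $\mu$ is the tilt of $\tilde\nu_\emptyset$ by $W$ and $g=f\cdot\prod_{i\in S}(\text{sign of the twisted bond } \hat\sigma_i^{(M-1)}\hat\sigma_i^{(0)})\cdot(\text{explicit ratio of normalizers})$. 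Equivalently, $g$ is the Radon–Nikodym factor relating the $p_i=1$ cycle to the $p_i=0$ cycle, so $|g|$ is bounded by a known constant and $\E_\mu[g]$ is the bounded quantity $\Tr(O\rho)$ up to additive error. Given samples from $\mu$, additive accuracy $\epsilon$ therefore costs $O(\epsilon^{-2}\log\delta^{-1})$ samples.

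It remains to estimate $\E_{\tilde\nu}[e^W f]$ multiplicatively when $f\equiv1$, and to sample from $\mu_\lambda=\tilt_{\lambda W}\tilde\nu$. For the estimation we use the telescoping product from the overview, $\E[e^W]=\prod_{\ell=0}^{L-1}\E_{\mu_{\ell/L}}[e^{W/L}]$ with $L\ge n^2\max|J_{ij}|$. Since $|W|\le \frac12\sum|J_{ij}|M$, we must first normalize. Replace $J$ by $J-\lambda_{\min}(J)I$, or better by $J-cI$ for a suitable constant $c$. The diagonal contributes $\frac{J_{ii}}{2M}\langle\hat\sigma_i,\hat\sigma_i\rangle=J_{ii}/2$, a constant, so this shift only changes a known factor, and it does not change $\Delta(J)$. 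We choose $L$ polynomial so that each factor $e^{W/L}$ lies in $[e^{-1},e]$. Each factor then needs $O(L\epsilon^{-2})$ samples, and standard product-estimator variance bounds (Dyer–Frieze style) combined with median-of-means give relative error $\epsilon$ with probability $1-\delta$.

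Sampling from $\mu_\lambda$ uses block Glauber dynamics, which resamples $\hat\sigma_i$ given the other blocks. The conditional law is a 1D Ising model on a cycle with couplings $\pm K_M(\eta_i)$ and external field $h_i^z/M+\frac{\lambda}{M}\sum_{j\ne i}J_{ij}\hat\sigma_j$. It can be sampled exactly by transfer matrices in $O(M)$ time.

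The main step, and the main obstacle, is proving polynomial mixing (spectral gap or modified log-Sobolev) of this block dynamics. We plan to apply the trickle-down / spectral-independence criterion for tilts of product measures by quadratic potentials $\frac{\lambda}{2M}\langle\hat\sigma,(J\otimes I_M)\hat\sigma\rangle$ (\cite{AKV24}). After shifting $J$ so that its spectrum lies in $[0,\Delta(J)]$, or symmetrically about a suitable point, the criterion requires
\[
\frac{\lambda\Delta(J)}{M}\cdot\sup_{w}\|\Cov(\lineartilt_w\tilde\nu^{(i)})\|_{\op}\le 1
\]
with a little slack, or exactly $\le1$ with polynomial mixing at the boundary. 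The needed estimate is therefore that every linear tilt of a 1D cycle Ising model with couplings in $[-K_M(\eta),K_M(\eta)]$ has covariance operator norm at most $\tanh\eta/\tanh(\eta/M)$. Since $\tanh(\eta/M)\le \eta/M$, combining this with $\Delta(J)\alpha(\eta_{\min})\le1$ gives exactly the required bound. We would prove the covariance bound entrywise. On a path, $|\Cov(s_a,s_b)|\le\prod_{\text{edges}}\tanh|K|$ under arbitrary fields, by the standard correlation-decay argument through the Markov chain transfer. On the cycle, we condition on one spin to reduce to paths, combine the two arcs, and sum over a row: the row sum $\sum_b(\tanh K)^{\min(|a-b|,M-|a-b|)}$ is bounded by $\frac{1+t^{M}\cdots}{1-t}$-type geometric sums, which we expect to equal $\tanh\eta/\tanh(\eta/M)$ when $t=\tanh K_M=e^{-2\eta/M}$-related quantities are substituted. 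Verifying this identity and handling the cycle correlation, which is the delicate part, is where we expect the real work to be. Finally, the boundary case of equality needs the version of the trickle-down theorem that still yields polynomial (rather than constant-gap) mixing, and we would invoke the $\le1$ form from \cite{AKV24}.
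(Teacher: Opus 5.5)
\textbf{Gap 1: your Part 3 reweighting has exponentially large variance.} Flipping the sign of the last bond does not multiply the unnormalized cycle weight by ``sign times a constant''. It multiplies it by $e^{-2K_i\hat\sigma_i^{(M-1)}\hat\sigma_i^{(0)}}$, so the correct identity is
\[
\frac{\Tr(OQ_M)}{\Tr(Q_M)}=\mathrm{i}^{|\{i:P_i=Y_i\}|}\,\E_{\mu}\Bigl[f\prod_{i\in S}e^{-2K_i\hat\sigma_i^{(M-1)}\hat\sigma_i^{(0)}}\Bigr],\qquad \mu=\tilt_W\tilde\nu_\emptyset ,
\]
where each factor equals $\tanh(\eta_i/M)$ or $\coth(\eta_i/M)\approx M/\eta_i$. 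Hence $|g|$ can be as large as $\prod_{i\in S}\coth(\eta_i/M)$, and its second moment is comparably large. Already for $J=0$, $h^z=0$, $\eta_i\equiv\eta$ one computes $\E_\mu[g]=\tanh^{|S|}\eta$ but $\E_\mu[g^2]\approx(M\tanh\eta/\eta)^{|S|}$, because the twisted bond is broken with probability about $\eta\tanh\eta/M$. Additive estimation therefore needs $M^{\Theta(|S|)}\epsilon^{-2}$ samples, which is exponential for strings like $X^{\otimes n}$; for $\eta_i=0$ the two measures are even mutually singular.

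The missing idea is to sample the twisted measure directly. The covariance bound for couplings in $[-K,K]$ already covers the antiperiodic bond. Then split, as the paper does,
\[
\frac{\Tr(OQ_M)}{\Tr(Q_M)}=\mathrm{i}^{|\{i:P_i=Y_i\}|}AB,\qquad A=\E_{\tilt_{\lambda W}\tilde\nu_S}\bigl[e^{W/L}f\bigr],\qquad B=\frac{D_S\,\E_{\tilde\nu_S}[e^{\lambda W}]}{\Tr(Q_M)},\qquad \lambda=1-\tfrac1L .
\]
Here $|A|\le e$ is estimated additively, and $B$ is estimated multiplicatively by the Part 1 machinery. Moreover $B\le e\,\Tr(X_SQ_M)/\Tr(Q_M)\le e(1+\epsilon')$ by \cref{prop:addition apporoximation of Pauli observables} applied to $X_S$. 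This bound on $B$ is what keeps the additive error in $A$ from being amplified.

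\textbf{Gap 2: the mixing condition goes the wrong way at the boundary.} Since $\tanh(\eta/M)\le\eta/M$, you get $\alpha_M(\eta)=\frac{\tanh\eta}{M\tanh(\eta/M)}\ge\alpha(\eta)$, not $\le$. So $\Delta(J)\alpha(\eta_{\min})\le 1$ does not give $\lambda\Delta(J)\alpha_M(\eta_{\min})\le 1$ at $\lambda=1$. In the boundary case $\Delta(J)\alpha(\eta_{\min})=1$ with $\eta_{\min}>0$, the left side exceeds $1$ for every $M$, so even a non-strict trickle-down criterion would not apply. Note also that \cref{thm:semilogconcave} needs strict inequality, with constant $(1-\norm{J}_{\op}\rho)^{-1}$.

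The paper never samples at $\lambda=1$. The telescoping product only uses $\lambda\le 1-1/L$, and taking $M\ge 10L\eta_{\min}/\epsilon$ gives $\tanh(\eta/M)\ge\frac{\eta}{M}(1-\eta^2/M^2)$. Hence $(1-1/L)\Delta(J)\alpha_M(\eta_{\min})\le 1-\frac{1}{2L}$, and the ATE constant is at most $2L$. Your Part 3 plan samples from $\tilt_W\tilde\nu_\emptyset$, i.e.\ exactly $\lambda=1$. This is the second reason the paper works at $\lambda=1-1/L$ and moves $e^{W/L}$ into the observable.

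\textbf{A smaller point on the cycle covariance.} The entrywise bound cannot be $t^{\min(d,M-d)}$: the homogeneous zero-field cycle already has covariance $\frac{t^d+t^{M-d}}{1+t^M}$, which is larger. The correct combination of the two arcs is $\tanh(\tanh^{-1}A+\tanh^{-1}B)=\frac{A+B}{1+AB}$, because the arcs' effective two-spin couplings add. Its row sum is exactly $\tanh\eta\coth(\eta/M)$.
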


We need the following result:
\begin{theorem}\label{thm:main mixing theorem}
Fix an integer $M\geq 3.$ Let $\Omega = \set{\pm 1}^M .$ 

For $i \in [n],$ consider  $K_i = (K_i^{(k)})_{k=0}^{M-1}, w_i= (w_i^{(k)})_{k=0}^{M-1},$ and let $ \pi_i =\pi_{K_i,w_i}^{\mathrm{1D Ising, CBC}}. $ 

Consider a symmetric matrix $J = (J_{ij})_{i,j\in [n]}\in \R^{n\times n},$ and let $W$ be as in \eqref{eq: VM and W}, i.e.,
$ W (\mathbf{s}) = \frac{1}{2M} \sum_{i,j} J_{ij} \langle \mathbf{s} _i, \mathbf{s}_j \rangle.$ 

Let $ \pi = \bigotimes_{i=1}^n \pi_i$ and for $\lambda\in [0,1],$ let $\mu_\lambda = \tilt_{\lambda W} \pi.$ We view $\mu_\lambda$ as a distribution over $\Omega^n.$ We let  $\mu \equiv \mu_1.$  

     Suppose $\forall i, k: |K_i^{(k)}|\leq K$ for $K \in (0,+\infty].$ Let  $\eta \geq 0$ be s.t. $ K = \frac{1}{2} \ln \coth (\eta/M)$, and let $\alpha_M(\eta) = \frac{\tanh \eta}{M\tanh(\eta/M)}.$\footnote{Here, we follow the convention $\frac{\tanh(x)}{x}= 1$ when $ x=0,$ so that $ \alpha_M(\eta) = 1 $ when $\eta =0.$} 
     
     In particular, this condition holds when $ K_i \in \set{\mathbf{K}_{\eta_i, (p)} \mid p\in \set{0,1} }$ with $\eta_i \geq \eta,$ where $ \mathbf{K}_{\eta_i, (p)}$ is as defined in \cref{prop:approximation of path functional}. 

   Let $ L = n \lceil \norm{J}_\infty \rceil +1.$ 
  The following holds:
  \begin{enumerate}
      \item 
Fix $\lambda \in [0,1].$  
Suppose $ \lambda \Delta(J) \alpha_M (\eta) < 1.$   
Let $\rho^{(t)}$ be the distribution of the Glauber dynamics\footnote{Each step of the Glauber dynamics chooses \(i\in[n]\) uniformly and resamples $\mathbf{s}_i\in \set{\pm 1}^M$ conditioned on $\mathbf{s}_{-i} = (\mathbf{s}_j)_{j\neq i}.$} with respect to $\mu_\lambda$ initialized at $\rho^{(0)}\ll \mu_\lambda$ after $t$ steps. Each step of the Glauber dynamics contracts KL-divergence by $\left(1-\frac1{Cn}\right)$, i.e.,
\[
D_{\operatorname{KL}}(\rho^{(t+1)}\|\mu_\lambda)
\leq \left(1-\frac1{Cn}\right)
D_{\operatorname{KL}}(\rho^{(t)}\|\mu_\lambda),
\quad C = \frac{1}{1 - \lambda \Delta(J) \alpha_M (\eta) }.\]  
Let $\rho^{(0)}=\pi.$    
Fix $\epsilon_{TV}\in(0,1).$
If 
$t\geq 2 (C n \log (L/\epsilon_{TV})),$ 
then
$ d_{TV}(\rho^{(t)} , \mu_\lambda) \leq \epsilon_{TV} .$ Sampling from $\pi$ and
   implementing  each step of the Glauber dynamics take $O(n M)$ (classical) time. Hence, there exists a classical algorithm that samples from $ \hat{\mu}_\lambda$ with $ d_{TV}(\hat{\mu}_\lambda, \mu_\lambda)\leq \epsilon_{TV}$ in time $O\left(\frac{n \log (L/\epsilon_{TV})}{1 - \lambda \Delta(J) \alpha_M (\eta) } \cdot nM \right).$
     \item  Suppose $ (1-1/L) \Delta(J) \alpha_M (\eta) < 1.$   
    For $ \epsilon,\delta\in (0,1),$ there exists a classical algorithm that runs in time \[T = O\left(\frac{n \log (L/\epsilon)}{1 - (1-1/L) \Delta(J) \alpha_M (\eta) } \cdot nM \cdot \frac{L^2}{\epsilon^2}\cdot \log\frac{1}{\delta}\right) \] and outputs an approximation $\hat{Z} $ of  $ Z =\E_{\pi}[e^W] $  so that $\Pr[ \hat{Z} = \exp(\pm \epsilon) Z ] \geq 1-\delta.$
  \end{enumerate}

\end{theorem}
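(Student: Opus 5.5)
The proof has two parts: a mixing bound for the Glauber dynamics, obtained from the block trickle-down framework of \cite{AKV24}, and a telescoping-product estimator for $Z$ built from the resulting samples.

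\textbf{Part 1: mixing.} We view $\mu_\lambda$ as a distribution on $\Omega^n$ whose $n$ blocks $\mathbf{s}_i\in\{\pm1\}^M$ interact only through the quadratic form $\lambda W(\mathbf{s})=\frac{\lambda}{2M}\sum_{i,j}J_{ij}\langle\mathbf{s}_i,\mathbf{s}_j\rangle$. The base measure $\pi$ is a product of 1D cycle Ising models.

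\begin{enumerate}
\item Shift $J$ to $J-\lambda_{\min}(J)I$, which is PSD with $\lambda_{\max}=\Delta(J)$. The diagonal term $\frac{1}{2M}\lambda_{\min}\|\mathbf{s}_i\|^2=\lambda_{\min}/2$ is constant on $\{\pm1\}^M$, so $\mu_\lambda$ is unchanged.
\item Write $e^{\lambda W}$ as a Gaussian mixture (Hubbard--Stratonovich). Then $\mu_\lambda$ is a mixture of linear tilts $\lineartilt_{w}\pi=\bigotimes_i \lineartilt_{w_i}\pi_i$, and the mixing variable has covariance $\frac{\lambda}{M}(J-\lambda_{\min}I)\otimes I_M$, of operator norm at most $\lambda\Delta(J)/M$.
\item Apply the block-factorization / approximate-conservation-of-entropy criterion of \cite{AKV24}. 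Block Glauber dynamics contracts KL divergence by $1-\frac{1}{Cn}$ with $C=(1-\lambda\Delta(J)\alpha_M(\eta))^{-1}$, provided
\[
\frac{\lambda}{M}\,\Delta(J)\,\sup_{w}\norm{\Cov(\lineartilt_{w}\pi_i)}_{\op}\le \lambda\Delta(J)\alpha_M(\eta)<1 .
\]
\item This reduces Part 1 to the key lemma: for every cycle Ising model with $|K^{(k)}|\le K=\frac12\log\coth(\eta/M)$ and arbitrary (possibly infinite) fields,
\[
\norm{\Cov}_{\op}\le \frac{\tanh\eta}{\tanh(\eta/M)}=M\alpha_M(\eta).
\]
\end{enumerate}

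\emph{Plan for the key lemma.} Since the matrix is symmetric, it suffices to bound row sums $\sum_k|\Cov(s^{(0)},s^{(k)})|$. Condition on one spin, or cut the cycle at a site, to reduce to path Ising models. On a path, $|\Cov(s^{(a)},s^{(b)})|\le\prod_{k}\tanh|K^{(k)}|\le \tanh(K)^{|a-b|}$ by the standard correlation-decay bound under arbitrary fields. With $\tanh K=e^{-2\eta/M}$-type identities (indeed $\tanh K_M(\eta)=\tanh(\eta/M)$ via $e^{-2K}=\tanh(\eta/M)$), the cycle contribution sums the two arcs,
\[
\frac{\sum_k t^{k}t^{M-k}\text{-type terms}}{1+t^M},
\]
and this should give exactly $\frac{1+t^M}{1-t^M}\cdot\frac{1-t}{1+t}$-style expressions, which equal $\tanh\eta/\tanh(\eta/M)$ once $t=e^{-2\eta/M}$ is substituted. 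Getting the sharp constant on the cycle with fields, rather than merely up to a factor of 2, is the main obstacle. My first attempt would use the worst case of zero field and uniform coupling $K$, justified by a Griffiths/GHS-type monotonicity or by induction over edges; the fallback is to compute the row sum of the cycle with $\ell^1$-contraction couplings.

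\emph{Remaining claims in Part 1.} The TV bound follows from Pinsker's inequality applied with the starting bound $D_{\rm KL}(\pi\|\mu_\lambda)\le 2\lambda\max|W|\le 2L$, which gives $t\ge 2Cn\log(L/\epsilon_{TV})$. Each Glauber step resamples a 1D cycle Ising model with fields $\frac{\lambda}{M}\sum_{j\ne i}J_{ij}\mathbf{s}_j$. This takes $O(M)$ time by transfer matrices after conditioning on $s^{(0)}$, plus $O(nM)$ time to compute the fields.

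\textbf{Part 2: estimating $Z$.} Use the telescoping identity
\[
Z=\prod_{\ell=0}^{L-1}\E_{\mu_{\ell/L}}\!\left[e^{W/L}\right].
\]
Since $|W|\le n\norm{J}_\infty/2$, we have $e^{W/L}\in[e^{-1},e]$. For each $\ell$, draw $O(L/\epsilon^2)$ approximate samples from Part 1 with $\epsilon_{TV}=\epsilon/L$; this needs only $\ell/L\le 1-1/L$. By Chebyshev, each factor then has relative variance $O(\epsilon^2/L)$, so the product is within $e^{\pm\epsilon}$ with constant probability. A median of $O(\log\delta^{-1})$ independent repetitions boosts this to probability $1-\delta$, giving the stated running time $T$.
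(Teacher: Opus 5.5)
Your architecture matches the paper's. You shift $J$ by $\lambda_{\min}(J)I$, apply the trickle-down criterion of \cite{AKV24} with the per-block bound $\sup_w\norm{\Cov(\lineartilt_w\pi_i)}_{\op}\le M\alpha_M(\eta)$, and reduce to row sums. You then use Pinsker with $D_{\operatorname{KL}}(\pi\|\mu_\lambda)\le 2L$ and the telescoping estimator; there, bias plus Chebyshev with $\epsilon_{TV}=\epsilon/L$ is a fine substitute for the paper's coupling with $d_{TV}\le 1/(8K)$.

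\textbf{The gap.} The one step that carries the content is the sharp covariance bound for the cycle Ising model. You leave it as the \emph{main obstacle}, and the routes you suggest do not deliver it.
\begin{itemize}
\item \emph{Summing the two arcs.} Cutting the cycle and adding the two path contributions (or using an $\ell^1$/influence coupling) gives $|\Cov_{1,1+d}|\le a^d+a^{M-d}$ with $a=\tanh K$. This loses exactly the factor $1+a^M=1+e^{-2\eta}$. It yields only $\norm{\Cov}_{\op}\le(1+e^{-2\eta})M\alpha_M(\eta)$, so contraction would need the strictly stronger hypothesis $\lambda\Delta(J)\alpha_M(\eta)(1+e^{-2\eta})<1$, not the stated $C$.
\item \emph{Griffiths/GHS monotonicity.} This cannot reduce you to the zero-field uniform case. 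Semi-log-concavity quantifies over all tilts $w\in\R^M$, which have mixed signs, while GHS needs nonnegative fields. Also, couplings with $|K^{(k)}|\le K$ of arbitrary sign (e.g.\ $\mathbf K_{\eta,(1)}$, which has one negative bond) make the cycle frustrated, so it cannot be gauged to a ferromagnet.
\end{itemize}

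\textbf{The missing idea} is a two-spin reduction.
\begin{itemize}
\item For $i<j$, the unnormalized joint marginal $\rho_0(s^{(i)},s^{(j)})$ of the cycle is the pointwise product $\rho_1\rho_2$ of the marginals of the two arcs. Each arc is an open-boundary path model, with each site's field assigned to exactly one arc.
\item Any positive weight on $\{\pm1\}^2$ can be written $C\exp(J_t xy+v_1x+v_2y)$. A direct computation gives $\sup_{w\in\R^2}|\Cov(\lineartilt_w\rho_t)_{1,2}|=|\tanh J_t|$.
\item Effective couplings add: $J_0=J_1+J_2$. The path correlation bound holds under every tilt, so $\tanh|J_1|\le A$ and $\tanh|J_2|\le B$, where $A,B$ are the products of $|\tanh K^{(k)}|$ along the two arcs.
\item Hence, with $d=j-i$,
\[
|\Cov_{ij}|\le\tanh(\tanh^{-1}A+\tanh^{-1}B)=\frac{A+B}{1+AB}\le\frac{a^{d}+a^{M-d}}{1+a^M}.
\]
This is exactly the zero-field uniform value you guessed.
\item Summing over $d$ gives $\frac{1+a}{1-a}\cdot\frac{1-a^M}{1+a^M}=\coth(\eta/M)\tanh\eta$, as required.
\end{itemize}

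\textbf{Two slips to fix.} First, $\tanh K_M(\eta)=e^{-2\eta/M}$; it is $e^{-2K}$, not $\tanh K$, that equals $\tanh(\eta/M)$. Second, the expression $\frac{1+t^M}{1-t^M}\cdot\frac{1-t}{1+t}$ you wrote is the reciprocal of the target.
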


To prove 
\cref{thm:main mixing theorem}, we need the following variant of \cite[Theorem 9]{AKV24}.

\begin{theorem}[Variant of {\cite[Theorem 9]{AKV24}}]\label{thm:semilogconcave}
Suppose that $\mu = \bigotimes_{i = 1}^n \mu^{(i)}$ is a product measure where each $\mu^{(i)}$ is a probability measure on $(\mathbb S^{N-1})\subseteq \R^N$. Let $J \succeq 0$, and define the measure $\nu$ by its density
$\frac{d\nu}{d\mu}(x) \propto \exp\left(\frac{1}{2} \sum_{i,j} J_{ij} \langle x_i, x_j \rangle \right). $
Suppose that for all $i \in [n]$, the measure $ \mu^{(i)}$ is $ \rho$ semi-log concave, 
meaning that for all $w ,$ $ \norm{\Cov(\lineartilt_w \mu^{(i)})}_{\operatorname{op}}\leq \rho ,$ with $ \norm{J}_{\operatorname{op}} \rho <1.$
Then approximate tensorization of entropy (ATE) for $\nu$ holds with constant at most $ C = \frac{1}{1 - \norm{J}_{\operatorname{op}} \rho},$ and thus each step of the Glauber dynamics on $ \nu$ contracts KL-divergence by $\left(1-\frac1{Cn}\right).$ \footnote{For background on ATE and Glauber dynamics, see \cite[Sections 2.2 and 2.3]{AKV24}.}
\end{theorem}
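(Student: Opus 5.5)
The plan is to follow the stochastic-localization (equivalently, Hubbard--Stratonovich) route of \cite{AKV24}, checking that each ingredient only uses the hypothesis $\norm{\Cov(\lineartilt_w\mu^{(i)})}_{\op}\le\rho$ for all $w$.

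\textbf{Decomposition into product measures.} Since $J\succeq 0$, write $\frac12\sum_{i,j}J_{ij}\langle x_i,x_j\rangle=\frac12\norm{(J^{1/2}\otimes I_N)x}^2$. Then run the localization $\nu_t$, $t\in[0,1]$, in which $\nu_t$ is $\mu$ tilted by the quadratic $\frac{1-t}{2}\sum_{ij}J_{ij}\langle x_i,x_j\rangle$ plus a random linear term driven by a Brownian motion with covariance $J\otimes I_N$. At $t=0$ we have $\nu_0=\nu$. At $t=1$, $\nu_1$ is a linear tilt of the product measure $\mu$, hence itself a product measure $\bigotimes_i\lineartilt_{w_i}\mu^{(i)}$. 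A product measure satisfies ATE with constant $1$, by subadditivity of entropy. Since the $\nu_t$ form a martingale of measures, for any $f\ge 0$
\[
\operatorname{Ent}_\nu(f)=\E\bigl[\operatorname{Ent}_{\nu_1}(f)\bigr]+\int_0^1\E\bigl[\text{entropy dissipation at time } t\bigr]\,dt.
\]
So it suffices to show entropic stability: $\operatorname{Ent}_\nu(f)\le C\,\E[\operatorname{Ent}_{\nu_1}(f)]$ with $C=(1-\norm{J}_{\op}\rho)^{-1}$. Combining this with ATE for $\nu_1$ and the fact that conditional entropies on coordinate blocks of $\nu_1$ average back to those of $\nu$ then gives ATE for $\nu$ with constant $C$. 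The KL contraction of Glauber dynamics by $1-\frac1{Cn}$ is the standard consequence of ATE.

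\textbf{Entropic stability via covariance.} By the usual Itô computation, $\frac{d}{dt}\E\operatorname{Ent}_{\nu_t}(f)$ is controlled by the quantity $\norm{(J^{1/2}\otimes I)\Cov(\nu_t)(J^{1/2}\otimes I)}_{\op}$. So the key estimate is
\[
\Cov(\nu_t)\preceq \bigl(\rho^{-1}I-(1-t)J\otimes I_N\bigr)^{-1},
\]
valid for every realization of the linear tilt. This yields $\norm{(J^{1/2}\otimes I)\Cov(\nu_t)(J^{1/2}\otimes I)}_{\op}\le \frac{\norm{J}_{\op}\rho}{1-(1-t)\norm{J}_{\op}\rho}$. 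Integrating this over $t\in[0,1]$ in a Grönwall-type argument gives an entropy growth factor of exactly $\exp\bigl(\int_0^1\frac{a}{1-(1-t)a}\,dt\bigr)=\frac1{1-a}$, where $a=\norm{J}_{\op}\rho$. This is the claimed $C$.

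\textbf{The main obstacle} is the covariance bound for quadratic tilts of a product of measures that are only $\rho$-semi-log-concave in the covariance sense, since the $\mu^{(i)}$ are not genuinely log-concave (they live on spheres). I would prove it with a second, inner Gaussian decomposition. Write the quadratic tilt by $(1-t)J$ as a mixture over $y\sim\mathcal N(0,(1-t)J\otimes I)$ of linear tilts of $\mu$, which are product measures. Each such product has covariance $\preceq\rho I$ by hypothesis, applied blockwise with block-diagonal structure. By the law of total covariance, $\Cov(\nu_t)=\E_y[\Cov(\text{product}_y)]+\Cov_y(\text{mean}_y)$. Bounding the second term requires a self-consistent argument: differentiating the mean in $y$ reproduces $\Cov$, and Gaussian integration by parts on the posterior of $y$ gives an inequality of the form $\Sigma\preceq\rho I+\rho\,(1-t)J\,\Sigma$ in the appropriate symmetrized sense. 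Solving this inequality, using $(1-t)\norm{J}_{\op}\rho<1$, gives the displayed bound. If the symmetrization step proves delicate, my fallback is to cite the trickle-down covariance inequality of \cite{AKV24} directly, noting that their proof uses only the operator-norm covariance bound for all linear tilts, which is precisely our hypothesis.
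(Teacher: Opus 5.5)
Your proof is correct in outline, and its skeleton is the paper's. Both use the same localization driven by $(J\otimes I_N)^{1/2}$, entropic stability from a covariance bound that is uniform over linear tilts, the factor $\exp\bigl(\int_0^1\frac{a}{1-(1-t)a}\,dt\bigr)=\frac{1}{1-a}$, tensorization for the product $\nu_1$, and the supermartingale property of the expected block conditional entropies. That last property is an inequality, not an exact ``averaging back'', but it goes in the direction you need.

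The two routes differ in the covariance estimate. The paper proves only the scalar bound $\norm{\Cov(\nu_t)}_{\op}\le\frac{\rho}{1-\rho\norm{J}_{\op}(1-t)}$, and does so dynamically: trickle-down equation, $\Cov(\nu_1)\preceq\rho I$, the scalar comparison lemma of \cite{AKV24}, then the argument is rerun from each tilt $\lineartilt_w\nu_t$. Your fallback is exactly this. Your primary route instead targets the static matrix bound $\Cov(\nu_t)\preceq(\rho^{-1}I-(1-t)J\otimes I_N)^{-1}$ via Hubbard--Stratonovich. That bound is stronger, covers all tilts at once because the linear field is arbitrary, and avoids the stochastic comparison lemma.

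The matrix bound is true, but the ``self-consistent, symmetrized'' step you sketch is not justified as stated. An inequality like $\Cov_\pi(m)\preceq\rho\cdot\mathrm{sym}\,\Cov_\pi(y,m)$ does not follow from $0\preceq\nabla m\preceq\rho I$ by the usual co-coercivity argument. Pointwise, $\bar C uu^\top\bar C\not\preceq\frac{\rho}{2}(uu^\top\bar C+\bar C uu^\top)$ whenever $\bar C u\not\parallel u$.

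Use Brascamp--Lieb instead. Set $s=1-t$ and replace $J$ by $J+\epsilon I$; on spheres this changes the measures only by constants, and it makes $A=(s\,J\otimes I_N)^{-1}$ well defined. Let $h$ be the current field, $\pi(dy)\propto\E_\mu[e^{\langle h+y,x\rangle}]\,\mathcal N(0,sJ\otimes I_N)(dy)$, $m(u)=\E_{\lineartilt_u\mu}[x]$, and $C_y=\Cov(\lineartilt_{h+y}\mu)\preceq\rho I$.

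\begin{itemize}
\item The potential of $\pi$ has Hessian $A-C_y\succeq\bigl((s\norm{J}_{\op})^{-1}-\rho\bigr)I\succ 0$, so $\pi$ is strongly log-concave.
\item Brascamp--Lieb applied to $y\mapsto\langle v,m(h+y)\rangle$, whose gradient is $C_y v$, gives $\Cov_\pi(m)\preceq\E_\pi[C_y(A-C_y)^{-1}C_y]$.
\item By the law of total covariance, $\Cov(\nu_t)\preceq\E_\pi\bigl[C_y+C_y(A-C_y)^{-1}C_y\bigr]=\E_\pi\bigl[(C_y^{-1}-A^{-1})^{-1}\bigr]$, perturbing $C_y$ if it is singular.
\item Operator monotonicity of inversion then gives $\Cov(\nu_t)\preceq(\rho^{-1}I-A^{-1})^{-1}$.
\end{itemize}

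With this substitution your proof is complete and gives a matrix-level estimate. The paper's route needs only operator-norm control, and it is the generic trickle-down machinery that transfers to other localization schemes.
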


To apply \cref{thm:semilogconcave}, we need the following covariance bound:
\begin{lemma}
\label{thm:covariance bound for each block}
Let $\tilde{\pi}_{\mathbf K,\mathbf w}$ be the push-forward distribution of $\pi_{\mathbf K,\mathbf w}^{\mathrm{1D Ising, CBC}}$ in \cref{def:1D Ising model CBC} by the scaling map $\mathbf{s} \mapsto \frac{1}{\sqrt{M}}\mathbf{s}.$
Let $K > 0$ be s.t.
$
|K^{(k)}|\le K \forall k, $
and $\eta> 0 $ be s.t. \(\tanh K = e^{-2\eta/M}\), i.e., $ K = \frac{1}{2} \ln \coth (\eta/M)$.
 Then $\norm{\Cov(\tilde{\pi}_{\mathbf K,\mathbf w})}_{\mathrm{op}} \leq\alpha_M(\eta).$

\end{lemma}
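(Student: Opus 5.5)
The plan is to bound the operator norm of the covariance by its maximum absolute row sum, and then to bound each entry using the explicit correlation structure of a cycle. Since $\Cov(\tilde\pi_{\mathbf K,\mathbf w}) = \frac1M \Cov(\pi^{\mathrm{1D Ising, CBC}}_{\mathbf K,\mathbf w})$ is symmetric, $\norm{\cdot}_{\op}\le\norm{\cdot}_\infty$. It therefore suffices to show, for every site $a$,
\[
\sum_{b=0}^{M-1}\bigl|\Cov(s^{(a)},s^{(b)})\bigr| \le \frac{\tanh\eta}{\tanh(\eta/M)} .
\]
Write $r=\tanh K=e^{-2\eta/M}$. The target has a clean form. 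For zero field and uniform coupling $K$, the transfer matrix gives
\[
\E[s^{(a)}s^{(b)}]=\frac{r^k+r^{M-k}}{1+r^M}, \qquad k=|a-b| .
\]
Summing over $k=0,\dots,M-1$ yields
\[
\frac{(1+r)(1-r^M)}{(1-r)(1+r^M)}=\coth(\eta/M)\tanh(\eta)=\frac{\tanh\eta}{\tanh(\eta/M)} .
\]
So the goal is the entrywise bound
\[
\bigl|\Cov(s^{(a)},s^{(b)})\bigr|\le \frac{r^k+r^{M-k}}{1+r^M}
\]
for arbitrary couplings with $|K^{(j)}|\le K$ and arbitrary fields. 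Summing this bound over $b$ gives the lemma. The case $K=\infty$ or $\eta=0$ follows by taking limits.

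To prove the entrywise bound, I would write
\[
\Cov(s^{(a)},s^{(b)})=\Var(s^{(a)})\cdot\tfrac12\bigl(\E[s^{(b)}\mid s^{(a)}=+1]-\E[s^{(b)}\mid s^{(a)}=-1]\bigr).
\]
Conditioning on $s^{(a)}$ cuts the cycle into a path from $a+1$ to $a-1$. On this path, $b$ sits at distance $k$ from one end and $M-k$ from the other. Flipping $s^{(a)}$ changes only the effective fields at the two endpoints, each by at most $2K^{(\cdot)}$ in absolute value.

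I would then use the standard path (tree) estimate: on a path with arbitrary fields, $\partial\E[s^{(b)}]/\partial w^{(c)}=\Cov(s^{(b)},s^{(c)})$, and this is bounded in absolute value by $\prod_{e\in[b,c]}\tanh|K^{(e)}|$ times $(1-\E[s^{(c)}]^2)$. This is proved by induction along the path, conditioning on the intermediate spin. Integrating over the endpoint field shifts bounds the two contributions by roughly $r^k$ and $r^{M-k}$, and hence the entry by $r^k+r^{M-k}$ up to the $\Var(s^{(a)})$ factor.

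The main obstacle is recovering the exact denominator $1+r^M$: the naive two-path argument loses it. To fix this I would not treat the two endpoints independently. Instead I would compute exactly in the $2\times2$ transfer-matrix formalism. Gauge transformations $s^{(j)}\mapsto\pm s^{(j)}$ make all couplings nonnegative except possibly one frustrated bond. I would then argue that, for fixed $|K^{(j)}|$, the quantity $|\Cov|$ is maximized at zero field. For each path, the field-dependent factors $(1-m^2)$ and $\Var(s^{(a)})$ only shrink the bound, and a joint treatment, writing $\E[\cdot\mid s^{(a)}]$ as a ratio of path partition functions and using $\tanh$-addition, should produce $\frac{r^k+r^{M-k}}{1+r^M}$ in the unfrustrated case. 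In the frustrated case the exact zero-field value is $\frac{|r^k-r^{M-k}|}{1-r^M}$, which is at most $\frac{r^k+r^{M-k}}{1+r^M}$ by a direct check. Finally I would use monotonicity of both expressions in $r$ to pass from $|K^{(j)}|\le K$ to all couplings equal to $K$. If the zero-field reduction proves delicate, my first fallback is an interpolation in $\mathbf w$, controlling the derivative of the covariance via third cumulants on the cycle.
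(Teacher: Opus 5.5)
Your reduction is the same as the paper's: bound the operator norm by the maximum row sum, prove $|\Cov(s^{(a)},s^{(b)})|\le \frac{r^k+r^{M-k}}{1+r^M}$ uniformly in the fields, and sum the geometric series. That entrywise bound, however, is the whole content of the lemma, and your route to it does not work.

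\textbf{The zero-field claim fails for frustrated cycles.} Your pivotal step asserts that, for fixed $|K^{(j)}|$, $|\Cov|$ is maximized at zero field. This is false for frustrated cycles. The lemma covers them, and the application needs them: $\mathbf K_{\eta,(1)}$ contains a bond $-K$, and semi-log-concavity requires every linear tilt. Take $M=3$ with couplings $K,K,-K$ on the bonds $\{1,2\},\{2,3\},\{3,1\}$.
\begin{itemize}
\item At zero field, $\Cov(s^{(1)},s^{(2)})=\frac{r-r^2}{1-r^3}=\frac{r}{1+r+r^2}$, which matches your frustrated formula.
\item Now let $w^{(3)}\to+\infty$, pinning $s^{(3)}=+1$, and set $w^{(1)}=K$, $w^{(2)}=-K$. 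The weight becomes $e^{Ks^{(1)}s^{(2)}}$, so the covariance tends to $\tanh K=r>\frac{r}{1+r+r^2}$.
\end{itemize}
A field on one arc removes the cancellation between the two arcs. The target value $\frac{r+r^2}{1+r^3}\ge r$ still holds, but your zero-field computation in the frustrated case does not bound the supremum.

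\textbf{The unfrustrated case and the fallback are not justified either.} In the unfrustrated case the claim is true, but you give no argument for it. Your remark that $(1-m^2)$ and $\operatorname{Var}(s^{(a)})$ ``only shrink the bound'' applies to the lossy bound $r^k+r^{M-k}$, not to the exact covariance. The third-cumulant fallback has the same defect. GHS-type sign control needs ferromagnetic couplings and same-sign fields, and in the frustrated case the maximizer is not at zero field anyway.

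\textbf{The missing idea: additivity of effective couplings across the two arcs.} Instead of conditioning on $s^{(a)}$ alone, take the joint marginal of $(s^{(a)},s^{(b)})$; this makes any zero-field reduction unnecessary.
\begin{itemize}
\item Cutting the cycle at $a$ and $b$ leaves two arcs, so this marginal is the pointwise product $\rho_1(x,y)\rho_2(x,y)$ of the two arcs' two-spin marginals.
\item Hence the effective couplings $J_t=\frac14\ln\frac{\rho_t(+,+)\rho_t(-,-)}{\rho_t(+,-)\rho_t(-,+)}$ add: $J_0=J_1+J_2$.
\item A two-spin Ising model with coupling $J$ has $|\Cov_{12}|\le|\tanh J|$ for all fields, with equality at zero effective field. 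So $\tanh|J_t|=\sup_w|\Cov(T_w\rho_t)_{12}|$.
\item The open-path bound then gives $\tanh|J_1|\le A$ and $\tanh|J_2|\le B$, whatever the fields inside each arc.
\end{itemize}
Therefore $|\Cov(s^{(a)},s^{(b)})|\le|\tanh(J_1+J_2)|\le\tanh(|J_1|+|J_2|)\le\frac{A+B}{1+AB}$. Frustration is absorbed by $|J_1+J_2|\le|J_1|+|J_2|$, and arbitrary fields by the supremum over tilts.
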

We also need the following propositions: 
\begin{proposition}\label{prop:1D Ising sampling time}
     Given $\mathbf K,\mathbf w,$ there exists a classical algorithm that samples from $\pi_{\mathbf K,\mathbf w}^{\mathrm{1D Ising, CBC}}:\set{\pm 1}^M \to \R_{\geq 0} $ in $O(M)$ time.
     In particular, this applies when $\mathbf K\in \set{ \mathbf{K}_{\eta, (p)}\mid p\in \set{0,1}}$ where $\eta \geq 0$ and $ \mathbf{K}_{\eta, (p)}$ is as in \cref{prop:approximation of path functional}, provided that the distribution is well-defined.
\end{proposition}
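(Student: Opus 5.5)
We will sample from $\pi_{\mathbf K,\mathbf w}^{\mathrm{1D Ising, CBC}}$ by sequential sampling along the cycle, using transfer matrices. For each $k$, let
\[
T_k(a,b) = \exp\bigl(K^{(k)} ab + w^{(k)} a\bigr), \qquad a,b\in\{\pm1\},
\]
which is a $2\times 2$ matrix. With $s^{(M)}=s^{(0)}$, the unnormalized weight of $\mathbf s$ is $\prod_{k=0}^{M-1} T_k(s^{(k)},s^{(k+1)})$. The partition function is therefore $\Tr(T_0T_1\cdots T_{M-1})$.

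\textbf{Precomputation.} First compute the suffix products $R_k = T_k T_{k+1}\cdots T_{M-1}$ for $k=M-1,\dots,0$. Each step is one $2\times2$ multiplication, so this takes $O(M)$ time in the Real RAM model.

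\textbf{Sampling $s^{(0)}$.} Sample $s^{(0)}=a$ with probability proportional to the diagonal entry $R_0(a,a)$. This entry equals the total weight of configurations with $s^{(0)}=a$.

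\textbf{Sampling the rest.} Sequentially for $k=1,\dots,M-1$, given $s^{(k-1)}$ and the fixed $s^{(0)}$, sample $s^{(k)}=b$ with probability
\[
\frac{T_{k-1}(s^{(k-1)},b)\,R_k(b,s^{(0)})}{R_{k-1}(s^{(k-1)},s^{(0)})}.
\]
This is exactly the conditional law of $s^{(k)}$ given $s^{(0)},\dots,s^{(k-1)}$, because the remaining path weight from position $k$ back to $s^{(0)}$ (closing the cycle) is $R_k(b,s^{(0)})$. Multiplying these conditionals telescopes to the target distribution. Each step is $O(1)$, so the total cost is $O(M)$.

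\textbf{Degenerate parameters.} These are the main technical nuisance.
\begin{enumerate}
\item \emph{Infinite fields.} If $w^{(k)}=\pm\infty$, the distribution is conditioned on $s^{(k)}=\operatorname{sign}(w^{(k)})$. Replace $e^{w^{(k)}a}$ by the indicator $\mathbbm 1[a=\operatorname{sign}(w^{(k)})]$.
\item \emph{Infinite couplings.} If $K^{(k)}=\pm\infty$, which occurs for $\mathbf K_{\eta,(p)}$ with $\eta=0$ since $K_M(0)=\infty$, replace $e^{K^{(k)}ab}$ by $\mathbbm 1[ab=\operatorname{sign}(K^{(k)})]$. This is the $\eta\downarrow0$ limit in \cref{prop:approximation of path functional}. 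When $\eta=0$ the law is explicit: all spins are equal to a uniform $u$ when $p=0$, and the $p=1$ case is ill-defined, excluded by ``provided the distribution is well-defined''.
\end{enumerate}
The transfer matrices then have entries in $[0,\infty)$. Well-definedness is exactly the condition $\Tr(R_0)>0$. Every denominator in the sequential step is then positive whenever it is reached with positive probability, so the procedure remains valid.

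\textbf{Avoiding overflow.} Large finite $K^{(k)}$ could cause overflow. We can rescale each $T_k$ by its maximum entry, since a scalar multiple of $T_k$ does not change the distribution. In the Real RAM model this is not even needed.

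\textbf{The case $\mathbf K=\mathbf K_{\eta,(p)}$.} This is a special instance of the general construction, so the claim follows. Alternatively, \cref{prop:approximation of path functional} gives a direct sampler: draw $D\sim\mathrm{Pois}(\eta,p)$, place uniform jump points, and read off $\mathrm{Val}_M$. That route is not $O(M)$ deterministically, since $D$ may be large, so we rely on the transfer-matrix sampler.
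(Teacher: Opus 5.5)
Your transfer-matrix sampler is correct for finite couplings, and it is the paper's argument. The paper factors your $T_k$ as $A(w^{(k)})B(K^{(k)})$ and precomputes the suffix products $E_k$. It draws $s^{(0)}$ from the diagonal of $E_0$, then each $s^{(k)}$ with weight $\bra{s^{(k-1)}}B(K^{(k-1)})\ket{s^{(k)}}\bra{s^{(k)}}E_k\ket{s^{(0)}}$, and it handles infinite fields by the projections $A(\pm\infty)$, exactly as you do. Your indicator treatment of $\eta=0$, $p=0$ is also right: it gives the $\mathbf w$-tilt of $\tfrac12(\delta_{\mathbf 1}+\delta_{-\mathbf 1})$.

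The gap is the case $\eta=0$, $p=1$, which you set aside as ill-defined. It is not ill-defined, and the ``in particular'' clause covers it.
\begin{itemize}
\item By \cref{def:general path definition}, $\mathrm{Pois}(0,1)$ is the point mass at $d=1$. So a path drawn from $\nu_{0,(1)}$ has exactly one jump, at a uniform time.
\item Hence $\hat\nu_{0,(1),M}$ is the uniform measure on the $2M$ vectors $\mathbf v^{u,r}$, for $u\in\{\pm1\}$ and $r\in[M]$, which equal $u$ in coordinates $k<r$ and $-u$ in coordinates $k\ge r$. Its tilt by any finite $\mathbf w$ is well defined, and this is what \cref{prop:approximation of path functional} means by the $\eta\downarrow0$ limit.
\item Your own remark about the Poisson sampler would have exposed this: at $\eta=0$ it draws $D=1$ deterministically.
\end{itemize}

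The error lies in your rule replacing $e^{K^{(k)}ab}$ by $\mathbbm 1[ab=\operatorname{sign}(K^{(k)})]$. That rule is the limit of the weights rescaled by $e^{-MK}$, which is the right normalization only for an unfrustrated cycle. For $\mathbf K_{\eta,(1)}$ the cycle carries one antiferromagnetic bond, and the consequences are:
\begin{itemize}
\item every configuration violates an odd number of bonds, so the coupling energy is at most $(M-2)K$;
\item the normalized measures converge to the $\mathbf w$-tilt of the uniform measure on the $2M$ configurations with exactly one violated bond, and these are exactly the $\mathbf v^{u,r}$;
\item your indicator product is identically zero, so $\Tr(R_0)>0$ is not the right well-definedness criterion in this case.
\end{itemize}

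The repair is what the paper does.
\begin{itemize}
\item Form prefix sums $W_r=\sum_{k<r,\,w^{(k)}\in\R}w^{(k)}$.
\item Sample $(u,r)$ with probability proportional to $e^{u(2W_r-W_M)}$, restricted to pairs compatible with the infinite fields. A pair is compatible if no coordinate $k<r$ is forced to $-u$ and no coordinate $k\ge r$ is forced to $u$. Each test costs $O(1)$ after an $O(M)$ precomputation.
\item Output $\mathbf v^{u,r}$.
\end{itemize}
This runs in $O(M)$ time. The distribution is well defined iff at least one compatible pair exists.
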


\begin{proposition}\label{lem:modify counting to sampling}
    Consider a normalized density function $\nu$ over $\Omega$ and a function $V: \Omega \to [-L, L]$ with $L \in \N_{\geq 1}$. For $\lambda \in [0,1]$, let $\mu_\lambda = \tilt_{\lambda V} \nu$, i.e.,
    $
        \mu_\lambda(\sigma) = \frac{\exp(\lambda V(\sigma))\nu(\sigma)}{Z(\lambda)}$, where $\quad Z(\lambda) = \E_{\nu}[\exp(\lambda V(\sigma))].
    $
    For any $\epsilon,\delta \in (0,1)$, there exists a classical algorithm that outputs a strictly positive estimator $\hat{Z}$ such that
    \[
        \Pr\left[(1-\epsilon) Z(1) \leq \hat{Z} \leq (1+\epsilon) Z(1)\right] \geq 1- \delta.
    \]
    Let $K = \Theta\left((L/\epsilon)^2\right) $ and $ K' = \Theta(K \log(1/\delta)). $ 
    The algorithm uses $K'$ independent samples from $\hat{\mu}_\lambda$ (where $d_{TV}(\hat{\mu}_\lambda, \mu_\lambda) \leq \frac{1}{8K}$ and $\lambda \in \{i/L \mid i \in [0,L-1] \cap \N\}$) and $K'$ evaluations of $V$, and uses $O(K')$ additional arithmetic operations.
\end{proposition}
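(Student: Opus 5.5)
This is a standard annealing (telescoping product) estimator. Write $L$ fixed points $\lambda_i = i/L$ for $i=0,\dots,L-1$ and the identity
\[
Z(1)=\prod_{i=0}^{L-1}\frac{Z(\lambda_{i+1})}{Z(\lambda_i)}=\prod_{i=0}^{L-1}\E_{\mu_{\lambda_i}}\bigl[e^{V/L}\bigr],
\]
using $Z(0)=1$ since $\nu$ is normalized. Each factor involves the random variable $Y_i=e^{V(\sigma)/L}$ with $\sigma\sim\mu_{\lambda_i}$. Because $|V|\le L$, we have $Y_i\in[e^{-1},e]$, so its relative variance satisfies $\mathrm{Var}(Y_i)/(\E Y_i)^2\le e^4=O(1)$.

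\textbf{Basic estimator.} For each $i$, draw $K$ samples from $\hat\mu_{\lambda_i}$, average the values $e^{V/L}$ to get $\bar Y_i$, and output the product $\prod_i \bar Y_i$. There are two sources of error.

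\emph{Bias.} Since $d_{TV}(\hat\mu_{\lambda_i},\mu_{\lambda_i})\le \frac1{8K}$ and $Y_i\in[e^{-1},e]$, the mean under $\hat\mu$ differs from $\E_{\mu}Y_i$ by at most a relative factor $1\pm O(1/K)$. Across $L$ factors the accumulated bias is $(1\pm O(1/K))^L=1\pm O(L/K)$, which is $1\pm O(\epsilon^2/L)$ when $K=\Theta((L/\epsilon)^2)$. This is well below $\epsilon$.

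\emph{Variance.} I would use the standard Dyer–Frieze product bound, working under the idealized measures $\tilde\mu_i$ (exact $\hat\mu$ laws), which are independent across $i$. For independent nonnegative $\bar Y_i$,
\[
\frac{\E[\prod_i\bar Y_i^2]}{(\E\prod_i\bar Y_i)^2}=\prod_i\Bigl(1+\frac{\mathrm{Var}\bar Y_i}{(\E\bar Y_i)^2}\Bigr)\le \Bigl(1+\frac{e^4}{K}\Bigr)^L\le 1+\frac{\epsilon^2}{16},
\]
taking $K=c(L/\epsilon)^2$ with $c$ large enough. Chebyshev then shows the product is within a $(1\pm\epsilon/2)$ factor of its mean with probability at least $3/4$. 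Combining this with the bias bound gives $(1\pm\epsilon)Z(1)$ with probability at least $3/4$. The estimator is strictly positive because every $Y_i>0$.

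\textbf{Boosting.} Repeat the basic estimator $m=\Theta(\log(1/\delta))$ times independently and output the median. By a Chernoff bound, the median lies in the good interval with probability at least $1-\delta$. The total number of samples and evaluations of $V$ is $L\cdot K\cdot m$.

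The one step needing care is matching the claimed count of $K'=\Theta(K\log(1/\delta))$ samples, which seems to lack a factor of $L$. I would resolve this by interpreting $K'$ as a per-$\lambda$ count (each $\lambda$ receives $K'$ samples), or else by absorbing $L$ into the constants. The downstream running time in \cref{thm:main mixing theorem} contains $L^2/\epsilon^2$, which is consistent with $K$ samples per level when the per-level mixing cost is counted separately. Beyond this bookkeeping, the argument is routine, and the arithmetic cost is linear in the number of samples.
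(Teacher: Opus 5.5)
Your estimator is correct, but it does not achieve the sample bound in the statement, and both ways you propose to reconcile the mismatch are wrong. In the statement and in the paper's proof, $K=\Theta((L/\epsilon)^2)$ is the \emph{total} number of samples in one trial, summed over all $L$ levels. Each level only receives $\lceil 16BL/\epsilon^2\rceil=\Theta(L/\epsilon^2)$ samples, where $B=e^2$ bounds $\E[W_i^2]/\E[W_i]^2$.

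Your own variance computation already shows this is enough. One has $(1+e^4/k)^L\le \exp(e^4L/k)\le 1+\epsilon^2/16$ once $k\ge CL/\epsilon^2$ for a suitable absolute constant $C$. Taking $k=\Theta(L^2/\epsilon^2)$ samples per level therefore wastes a factor of $L$, and gives $\Theta(L^3\log(1/\delta)/\epsilon^2)$ samples overall instead of $K'=\Theta(L^2\log(1/\delta)/\epsilon^2)$.

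Neither of your fixes works:
\begin{itemize}
\item $L$ cannot be absorbed into constants, since in the application $L=n\lceil\|J\|_\infty\rceil+1$ grows with the input.
\item Reading $K'$ as a per-level count is inconsistent with \cref{thm:main mixing theorem}. Its running time is (cost per sample)$\,\times\, L^2\epsilon^{-2}\log(1/\delta)$, so it charges for $K'$ samples in total. The per-sample cost is the same at every level, so under your allocation that bound would acquire an extra factor of $L$.
\end{itemize}
The repair is simply to use $\Theta(L/\epsilon^2)$ samples per level. Your bias estimate is unaffected: with $d_{TV}\le 1/(8K)$ and $K=\Theta(L^2/\epsilon^2)$ the total, the accumulated relative bias is still $O(L/K)=O(\epsilon^2/L)$.

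Apart from this, you handle the approximate samplers differently from the paper.
\begin{itemize}
\item The paper runs the Dyer--Frieze estimator as a black box with exact samples from $\mu_\lambda$. It then couples: with probability at least $1-K\cdot\frac{1}{8K}=7/8$, all $K$ samples from $\hat\mu_\lambda$ coincide with exact ones. This lowers the success probability from $3/4$ to $5/8$ before median boosting.
\item You instead analyze the estimator directly under the independent laws $\hat\mu_{\lambda_i}$. You absorb the TV error as a relative bias of $O(d_{TV})$ in each level's mean.
\end{itemize}
Both routes are valid. Yours tolerates a larger TV error, $O(\epsilon/L)$ per level rather than $O(\epsilon^2/L^2)$. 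The paper's is shorter. Since the TV accuracy enters the mixing time only logarithmically, the difference does not matter downstream.
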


We will prove \cref{thm:covariance bound for each block} in \cref{subsec:covariance bound for 1D}. 
We defer 
the proofs of \cref{prop:1D Ising sampling time,lem:modify counting to sampling,thm:semilogconcave} to \cref{subsec:defer from tractable}.

\begin{proof}[Proof of \cref{thm:main mixing theorem}]
Part 1 is a corollary of \cref{thm:semilogconcave} and \cref{thm:covariance bound for each block}.
    We replace $ J $ with $\tilde{J}= J-\lambda_{\min}(J) I\succeq 0;$ this does not change the distribution $\mu.$
    Let  $\tilde{\pi}_i$ and $\tilde{\mu}_\lambda$ be the pushforward distributions of $\pi_i$ and $\mu_\lambda$ by $ \mathbf{s} \mapsto \frac{1}{\sqrt{M}} \mathbf{s}$ respectively. 
    
    Clearly, $ \tilde{\mu}_\lambda (\zeta_1, \cdots, \zeta_n)\propto\exp(\frac{\lambda}{2} \sum_{ij} \tilde{J}_{ij} \langle \zeta_i,\zeta_j\rangle ) \prod_{i=1}^n \tilde{\pi}_i(\zeta_i).$
    
   We will show $ \tilde{\pi}_i$ is $\alpha_M(\eta) $-semi log concave for each $ i\in [n],$ i.e., show that $\norm{\Cov(\lineartilt_w \tilde{\pi}_i)}_{\mathrm{op}}\leq  \alpha_M(\eta)$ for all $w.$
    
    

This is trivial for $ \eta =0,$ since $\lineartilt_w\tilde{\pi}_i $ is supported on the ball of radius $1,$ so $\norm{\Cov(\lineartilt_w \tilde{\pi}_i)}_{\mathrm{op}}\leq 1 = \alpha_M(0).$ 
Assume $\eta > 0.$
Since  $ \lineartilt_w \tilde{\pi}_i$ is the pushforward of $ \lineartilt_{w'} \pi_i\equiv \pi_{K_i,w_i+w'}^{\mathrm{1D Ising, CBC}}$ by $ \mathbf{s} \mapsto \frac{\mathbf{s}}{\sqrt{M}} $, with $w' = w/\sqrt{M} $, \cref{thm:covariance bound for each block} gives $ \norm{\Cov(\lineartilt_w \tilde{\pi}_i)}_{\mathrm{op}}\leq \alpha_M(\eta) $ as required.  

By \cref{thm:semilogconcave}, the Glauber dynamics on
$\tilde{\mu}_\lambda$ (and thus $\mu_\lambda$) satisfies KL-divergence contraction
\[
D_{\operatorname{KL}}(\rho^{(t+1)}\|\mu_\lambda)
\leq \left(1-\frac1{Cn}\right)
D_{\operatorname{KL}}(\rho^{(t)}\|\mu_\lambda)\]
with
\[ C = \frac{1}{ 1- \norm{\lambda \tilde{J}}_{\mathrm{op} } \alpha_M(\eta) } = \frac{1}{1-\lambda \alpha_M(\eta) \Delta(J) }.\]
Hence, using   Pinsker's inequality, $ D_{\operatorname{KL}}(\rho^{(t)}\|\mu_\lambda)
\leq \left(1-\frac1{Cn}\right)^t
D_{\operatorname{KL}}(\rho^{(0)}\|\mu_\lambda),$ and the fact that $
\sup_{\mathbf s\in\operatorname{supp}\pi}
\frac{\pi(\mathbf s)}{\mu_\lambda(\mathbf s)} 
\le e^{2L}
$ since $\sup_{\mathbf{s}\in \Omega^n} |W(\mathbf{s}) |\leq L,$
the next statement follows. 
Each step of the Glauber dynamics samples from a 1D Ising model with closed boundary conditions, whose parameters can be computed in $O(nM)$ time, and thus can be done in $O(nM)$ time by \cref{prop:1D Ising sampling time}. Similarly, we can sample from $\pi$ in $O(n M)$ time by using \cref{prop:1D Ising sampling time} to sample from each $\pi_i$ in $O(M)$ time.

We prove part 2. To compute $Z$, we apply \cref{lem:modify counting to sampling} with
relative accuracy $\epsilon/2$, which reduces estimating
$\E_{\pi}[e^W]$ to approximately sampling from
$ \mu_{\lambda} = \tilt_{\lambda W}\pi$ for
$\lambda\in\{0,1/L,\ldots,1-1/L\}.$ 
 Let $K = \Theta\left((L/\epsilon)^2\right) $ and $ K' = \Theta(K \log(1/\delta)). $ 
For each $\lambda \in [0,1-1/L],$ we sample from $ \hat{\mu}_\lambda$ where $d_{TV}(\hat{\mu}_\lambda,\mu_{\lambda})\leq \epsilon_{TV}:=\frac{1}{8K}$
using part 1, which takes time
$ T' = O \left(\frac{n^2 M \log (L/\epsilon)}{1- (1-1/L) \Delta(J) \alpha_M (\eta)} \right).$
    Hence, the total runtime is 
    \[ O\left( K' (T'  + n^2 M )\right) = O( K'T') = O\left(\frac{n^2 M \log (L/\epsilon)}{1 -(1-1/L)  \Delta(J) \alpha_M (\eta) } \cdot \frac{L^2}{\epsilon^2}\cdot \log\frac{1}{\delta}\right). \]

\end{proof}

We are ready to prove \cref{thm:main theorem tractable}.
\begin{proof}[Proof of \cref{thm:main theorem tractable}]
  Define $H_z, H_x$ as in \eqref{eq:Hz and Hx}. 

Let $ O$ be an arbitrary Pauli string observable.
Conjugating by $U = \prod_{i: \eta_i < 0} Z_i$ yields $U H U^\dag = H' := H_z + \sum_{i=1}^n |\eta_i| X_i$ and $U O U^\dag = s O$, for $s \in \set{\pm 1}$ computable in $O(n)$ time. Thus $\Tr(O e^H) = s \Tr(O e^{H'})$.  Hence, it suffices to estimate $\Tr(O e^{H'}).$ We therefore assume without loss of generality that $\eta_i \geq 0\forall i.$ 

    Let $\eta \equiv \eta_{\min}.$
   We use the same notation and setup as in \cref{prop:derive expression for observables}. 

   Throughout, let
   \begin{equation}\label{eq:set M and L}
      c  =  \norm{J}_{\infty} +\norm{\mathbf{h}^z}_\infty+ \norm{\boldsymbol{\eta}}_\infty+1,\quad L = n \lceil \norm{J}_\infty \rceil +1,\quad
    M =\lceil 1000 \cdot \max \left\{\frac{n c^2 }{\epsilon} , \left (\frac{nc}{\epsilon}\right)^{3/2}\right\}\rceil.  
   \end{equation} 

We note that for parts 1 and 2, it suffices to compute a multiplicative approximation of $ \Tr(X_S e^H) $ given $S\subseteq [n].$

By \cref{prop:multiplicative approx of trotterized X observable} and the choice of $M$, we have $ \Tr(X_S e^H) =\exp(\pm \epsilon/8) \Tr(X_S Q_M). $

    To approximate $\Tr(X_S Q_M) ,$ we recall from \cref{prop:derive expression for observables} that \[\Tr(X_S Q_M) = D\E_{\tilde{\nu}}[e^W], \text{ where } D = \Tr (X_S ( e^{\frac{1}{M}\sum_i h_i^z Z_i } e^{H_x/M})^M ) = \prod_{i\in [n]} \Tr(X^{\mathbbm{1}_{i\in S}} (e^{\frac{h_i^z}{M}  Z} e^{\frac{\eta_i}{M}  X} )^M) \geq 0. \] 
    We compute $D$ in $O(nM)$ time. Next, we use \cref{thm:main mixing theorem} with accuracy parameters $\epsilon/8$ and $\delta/2$ to approximate $\E_{\tilde{\nu}}[e^W]$ and output $\widehat{\Tr(X_S e^H) } = \hat{Z} D$. We have: 
     \[\Pr[\widehat{\Tr(X_S e^H) } =\exp(\pm \epsilon/4) \Tr(X_S e^H)]  \geq \Pr[\hat{Z}= \exp(\pm \epsilon/8) \E_{\tilde{\nu}}[e^W]] \geq 1-\delta/2\]
For part 1, we apply the above procedure to $S =\emptyset.$ For part 2, we apply the above procedure for $ S$ and $\emptyset,$ compute $ \widehat{\Tr(X_S e^H) }$ and $\widehat{\Tr(e^H) },$ and output $ \widehat{\Tr(X_S \rho)} = \frac{\widehat{\Tr(X_S e^H) }}{\widehat{\Tr(e^H) }}.$ We have:
\[\Pr[ \widehat{\Tr(X_S \rho)}  = \exp(\pm \epsilon/2) \Tr(X_S \rho)] \geq 1-\delta. \]

We show that the total runtime is 
\begin{equation}\label{eq:total runtime details}
    T = O (\epsilon^{-2} \log \delta^{-1} \cdot n^2  L^3 M \log (L/\epsilon) ) = \poly(n, \max_{i,j} |J_{ij}| , \max|h_i^z|,  \max |\eta_i|, \epsilon^{-1}) \cdot \log \delta^{-1} 
\end{equation}
by arguing that
\begin{equation}\label{eq:gap condition}
(1-1/L) \Delta(J) \alpha_M(\eta)\leq 1-\frac{1}{2L}.
\end{equation}
\eqref{eq:gap condition} is trivially true when $ \eta = 0.$ When $\eta>0,$
    note that
    the choice of $M$ implies $ M \geq\frac{10 L \eta }{\epsilon},$ thus $\tanh(\eta/M)\geq \frac{\eta}{M} \left(1 -\frac{\eta^2}{M^2}\right) $ and 
    $(1-1/L) \Delta(J) \alpha_M(\eta)  \leq \frac{1 -1/L}{ 1- \frac{\epsilon^2}{100 L^2}} \cdot \Delta(J) \alpha(\eta) \leq 1-\frac{1}{2L} .$



   Next, we prove part 3. Let $\epsilon' = \frac{\epsilon}{100}.$
  Applying \cref{prop:addition apporoximation of Pauli observables}, we have $ \left|\Tr (O \rho) - \frac{\Tr(O Q_M) }{\Tr(Q_M) }\right| \leq \epsilon', $ where we recall that by \cref{prop:multiplicative approx of trotterized X observable}, $ \Tr(Q_M)> 0.$  Fix $ \lambda= 1-1/L.$ Using \cref{prop:derive expression for observables}, we can rewrite
   \[\frac{\Tr(O Q_M) }{\Tr(Q_M) } = \mathrm{i}^{|\set{i: P_i = Y_i}|} R , \quad R:= A B,\quad B:=\frac{ D\E_{\tilde{\nu}}[e^{\lambda W}] }{\Tr(Q_M) },\quad A := \frac{\E_{\tilde{\nu}}[e^W f]}{\E_{\tilde{\nu}}[e^{\lambda W}]} =  \E_{\mu_\lambda}[e^{W/L} f] \]
   where $ \mu_\lambda= \tilt_{\lambda W} \tilde{\nu},$ and $f$ is the real-valued function defined in \eqref{eq: VM and W}. 
   
   By our choice of $M$, and the fact that $ |W| \leq L,$ we have $ e^{\lambda W}\leq e^W  e^{|W|/L}\leq e^{W} \cdot e,$ thus
\[
B\leq e \cdot \frac{D  \E_{\tilde{\nu}}[e^{W}] }{\Tr(Q_M) } = e\cdot \frac{\Tr(X_SQ_M)}{\Tr(Q_M)}
\le e\bigl(\Tr(X_S\rho)+\epsilon'\bigr)
\le e(1+\epsilon')
\]

   By the above argument,  we can compute $\widehat{\E_{\tilde{\nu}}[e^{\lambda W}]} $ and $\widehat{\Tr(Q_M) } $ such that:  \[\Pr[\widehat{\E_{\tilde{\nu}}[e^{\lambda W}]} = \exp(\pm \epsilon')  \E_{\tilde{\nu}}[e^{\lambda W}] \text{ and } \widehat{\Tr(Q_M) } = \exp(\pm\epsilon') \Tr(Q_M)] \geq 1-\delta/2.\] Let $ \hat{B} = \frac{D \widehat{\E_{\tilde{\nu}}[e^{\lambda W}]}}{\widehat{\Tr(Q_M) } }.$ Then, with probability $ \geq 1-\delta/2,$ we have: \[0\leq e^{-2\epsilon'}  B \leq  \hat{B} \leq e^{2\epsilon'}  B .\]
   
   We approximate $A:=\E_{\mu_\lambda}[e^{W/L} f] .$ Note that $ |e^{W/L} f|\leq e,$ thus $|A| \leq e.$  
   Let $\hat{A} = \widehat{\E_{\mu_\lambda}[e^{W/L} f]} $ be the empirical average of $ e^{W /L} f$ over $\Theta(\frac{\log(1/\delta)}{\epsilon^2})$ independent samples from $ \hat{\mu}_\lambda$ where $d_{TV}(\hat{\mu}_\lambda,\mu_{\lambda})\leq \epsilon'/(4e).$ By Hoeffding's inequality, with probability $ \geq 1-\delta/2,$ we have:
   $ |A-\hat{A}|\leq \epsilon'.$
 The algorithm outputs  $\widehat{\Tr (O \rho)} = \mathrm{i}^{|\set{i: P_i = Y_i}|} \hat{R}$ with $\hat{R} =\hat{A}\hat{B}.$  With probability $ \geq 1-\delta,$ we have: 
   \[ |\widehat{\Tr (O \rho)} -\Tr (O \rho) | \leq \epsilon'+ |R-\hat{R}|\leq \epsilon' + \hat{B}\cdot |A -\hat{A}| + |A| \cdot |B-\hat{B}| \leq 21 \epsilon' \leq \epsilon. \]
   To sample from $ \hat{\mu}_\lambda,$ we use \cref{thm:main mixing theorem}. By \eqref{eq:gap condition}, the total runtime is $O(T)$ with $T$ as in \eqref{eq:total runtime details}. 
   \end{proof}

\cite[Section 3.2]{wong2026lee} reduces Gibbs state preparation for stoquastic Hamiltonians to approximately computing conditional marginals of the corresponding classical distribution (see \cref{thm:marginal to state reduction} from \cref{sec:Gibbs state preparation} for details). Using \cref{thm:main mixing theorem}, we can approximate these conditional marginals, and thus obtain an efficient quantum algorithm that prepares $\rho =\frac{e^H}{\Tr(e^H)}.$ We defer the proof to \cref{sec:Gibbs state preparation}.
\begin{theorem}\label{thm:Gibbs state preparation}
 Fix $\epsilon_{TV} \in (0,1).$   In the same setting as \cref{thm:main theorem tractable}, there exists a quantum algorithm that runs in $\poly(n, \max_{i,j} |J_{ij}| , \max|h_i^z|,  \max |\eta_i|, \epsilon_{TV}^{-1}) $ time and outputs a quantum state $\hat{\rho}$ where $\norm{\hat{\rho}-\rho}_1 \leq \epsilon_{TV}.$ 
\end{theorem}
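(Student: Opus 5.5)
We prove \cref{thm:Gibbs state preparation} by plugging our classical sampling and counting routines into the reduction of \cite[Section 3.2]{wong2026lee} (\cref{thm:marginal to state reduction}). That reduction takes a stoquastic Hamiltonian and builds the purified Gibbs state (equivalently, $\rho$) qubit by qubit. At each stage it needs approximate conditional marginals of the associated classical distribution, i.e., ratios of partition functions in which some spins are pinned. Its output error in trace norm is controlled by the total error of these marginals. So the plan is to show that every conditional marginal it requests can be estimated to inverse-polynomial relative accuracy in polynomial time under $\Delta(J)\alpha(\eta_{\min})\le 1$.

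\textbf{Step 1: normalize and discretize.} As in the proof of \cref{thm:main theorem tractable}, conjugate by $\prod_{i:\eta_i<0}Z_i$ so that all $\eta_i\ge 0$. Replace $e^H$ by the Trotterized $Q_M$ with $M$ as in \eqref{eq:set M and L}, taken large enough relative to the target $\epsilon_{TV}$. By \cref{prop:multiplicative approx of trotterized X observable}, each relevant trace, such as $\Tr(X_S Q_M)$ or the analogous traces with $Z$-projectors inserted, matches its continuum counterpart up to a factor $e^{\pm\epsilon}$. Hence marginals computed from $Q_M$ are accurate to relative error $O(\epsilon)$.

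\textbf{Step 2: express each conditional marginal classically.} Each conditional marginal is a ratio of two quantities of the form $\E_{\tilde\nu'}[e^{W}]$. By \cref{prop:derive expression for observables}, inserting a computational-basis projector on a qubit (or fixing an $X$-parity) corresponds to
\begin{itemize}
\item modifying the 1D Ising chain of that site by an infinite external field on one coordinate, which pins $s^{(0)}$ (allowed by \cref{def:1D Ising model CBC}), or
\item flipping the sign of one coupling.
\end{itemize}
In either case the couplings stay within $[-K_M(\eta_i),K_M(\eta_i)]$. \cref{thm:covariance bound for each block} holds for arbitrary fields, including infinite ones as limits, so every pinned chain remains $\alpha_M(\eta)$-semi-log-concave. \cref{thm:main mixing theorem} Part 2 therefore applies verbatim, and the gap condition \eqref{eq:gap condition} again gives runtime polynomial in $n$, $L$, $M$, $\epsilon^{-1}$. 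Both numerator and denominator are estimated to relative error $\epsilon'$ with failure probability $\delta'$, and a union bound over the $\poly(n)$ queries makes the total failure probability small.

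\textbf{Step 3: accumulate error.} Choose $\epsilon',\delta'=\poly(\epsilon_{TV}/n)$ so that the accumulated error guaranteed by \cref{thm:marginal to state reduction} is at most $\epsilon_{TV}$. A failed estimate contributes at most trace distance $2$, which is absorbed by taking $\delta'$ small.

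\textbf{Main obstacle.} The delicate point is matching the precise form of the marginals that \cref{thm:marginal to state reduction} requires, which may involve off-diagonal $X$-type pinnings. We must check that each one corresponds to a classical expectation with nonnegative weights of the above pinned/flipped-coupling form, so that relative-error estimation is meaningful. Stoquasticity guarantees this nonnegativity. We must also check that the pinned instances still satisfy the hypotheses of \cref{thm:main mixing theorem}, in particular the boundedness $|W|\le L$ and the $\eta=0$ edge cases.
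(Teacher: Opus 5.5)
Your high-level plan, feeding the counting routine of \cref{thm:main mixing theorem} into \cref{thm:marginal to state reduction}, is the paper's. However, the step you flag as the ``main obstacle'' is exactly the missing idea, and the form you guess for it is wrong. The reduction prepares the purification $\ket{\Psi_M}\propto\operatorname{vec}(U_M)$. So the quantities it needs are $q_M(\ell,r)=\Tr(P_\ell U_M P_r U_M)/\Tr(U_M^2)$, with $Z$-basis projectors on \emph{both} the left and right registers; no $X$-parities or sign-flipped couplings appear. Pinning only $s^{(0)}$, as in your Step 2, yields $\Tr(P_a Q_M)/\Tr(Q_M)\approx\Tr(P_a\rho)$. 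These are marginals of the $Z$-diagonal of $\rho$: they determine only the $Z$-measurement statistics of $\rho$, not $\rho$ itself.

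What you need is a Hermitian, entrywise nonnegative $U_M$ with $U_M^2=e^{H+R_M}$. Take $M=2M'$ and the symmetric splitting $U_M=(e^{H_z/(2M)}e^{H_x/M}e^{H_z/(2M)})^{M'}=e^{-H_z/(2M)}\tilde U_M e^{H_z/(2M)}$, where $\tilde U_M=(e^{H_z/M}e^{H_x/M})^{M'}$. Since $P_\ell,P_r$ commute with $e^{H_z/(2M)}$, you get $\Tr(P_\ell U_M P_r U_M)=\Tr(P_\ell\tilde U_M P_r\tilde U_M)$ and $\Tr(U_M^2)=\Tr(Q_M)$. Expanding in the $Z$-basis then gives $q_M(\ell,r)=\pi(E_{\ell,r})\,\E_{\pi(\cdot\mid E_{\ell,r})}[e^W]/\E_\pi[e^W]$, where the conditioning puts infinite fields at the two slices $k=0$ and $k=M'$ of each cycle. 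Only at this point does your semi-log-concavity argument apply. It does apply, since \cref{thm:covariance bound for each block} allows arbitrary infinite fields. The case $\pi(E_{\ell,r})=0$ (e.g.\ $\eta_i=0$, $\ell_i\neq r_i$) must be handled separately by returning $0$ exactly.

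Two other steps fail as written. In Step 1, \cref{prop:multiplicative approx of trotterized X observable} compares only $\Tr(X_SQ_M)$ with $\Tr(X_Se^H)$; it says nothing about traces with projectors inserted. Also, $Q_M$ is not Hermitian, so it cannot play the role of $U_M^2$ in the reduction. The paper never needs the marginals to be multiplicatively close to their continuum values. It estimates the marginals of the Trotterized purification itself and pays for the Trotter error once, in trace norm, via $\|\rho_M-\rho\|_1\le\|R_M\|_{\op}\le\epsilon_{TV}/8$, using \cite[Lemma 3]{Bravyi2014MonteCS} and \cref{prop:compare tv distance between gibbs state}.

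In Step 3, the estimator is evaluated coherently on every prefix with a single shared random tape. The union bound therefore runs over $2^{2n+1}-2$ prefixes, not $\poly(n)$ queries, so $\delta'=\poly(\epsilon_{TV}/n)$ does not suffice. You need $\delta'\le\epsilon_{TV}/(16\cdot4^n)$, which is affordable only because the cost of \cref{thm:main mixing theorem} grows like $\log(1/\delta')$.
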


\subsection{Bounding the covariance of the 1D Ising model}\label{subsec:covariance bound for 1D}



In this subsection, we prove \cref{thm:covariance bound for each block}. 
We recall the definition of $\pi_{\mathbf{K}, \mathbf{w}}^{\mathrm{1D Ising, CBC}}. $ 
\[ \forall \mathbf{s}= (s^{(k)})_{k=1}^M\in \set{\pm 1}^M: \pi_{\mathbf{K}, \mathbf{w}}^{\mathrm{1D Ising, CBC}}(\mathbf{s}) 
= \frac{\exp\left( \sum_{k=1}^{M} K^{(k)} s^{(k)} s^{(k+1)} + \sum_{k= 1}^{M} w^{(k)} s^{(k)} \right)}{Z_{\mathbf{K}, \mathbf{w}}^{\mathrm{1D Ising, CBC}}}
\]
where $s^{(M+1)} = s^{(1)}.$ 
It suffices to prove \cref{thm:covariance bound for each block} for finite fields, i.e., $\mathbf{w} \in\R^M;$ 
the infinite field cases follow by taking limits.

In what follows, we assume $\mathbf{w} \in\R^M.$
\cref{thm:covariance bound for each block} directly follows from the following bound on the entries of $\Cov(\pi_{\mathbf{K}, \mathbf{w}}^{\mathrm{1D Ising, CBC}}).$
\begin{lemma}\label{lem:bound pairwise variance for 1D Ising CBC} 
Let $K>0$ be s.t.
$
|K^{(k)}|\le K \forall k$ and $ a= \tanh K.$
   Then, $\forall i, j\in [M], i\leq j,$ let \[ A=|\prod_{k=i}^{j-1} \tanh(K^{(k)})|,\quad B= |\prod_{k\in ([j,M]\cup[1,i-1])\cap\Z}  \tanh(K^{(k)})|.\]  Then  \[|\Cov(\pi_{\mathbf{K}, \mathbf{w}}^{\mathrm{1D Ising, CBC}})_{i,j}| \leq \frac{A+B}{1+|\prod_{k=1}^{M} \tanh(K^{(k)})|}\leq \frac{a^{j-i} + a^{M- (j-i)}}{1 + a^M}.\]
\end{lemma}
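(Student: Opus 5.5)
The plan is to reduce the cycle to a two-spin Ising model on $(s^{(i)},s^{(j)})$ by summing out the two arcs of the cycle that join $i$ and $j$.

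\textbf{Reduction to two spins.} Fix $i<j$; the case $i=j$ is trivial since $A=1$ and the variance is at most $1$. Removing sites $i$ and $j$ splits the cycle into two arcs. Arc 1 uses the edges $k\in[i,j-1]$; arc 2 uses the edges $k\in[j,M]\cup[1,i-1]$. Conditional on $(s^{(i)},s^{(j)})=(a,b)$, the interior spins of the two arcs are independent. Summing them out gives
\[
\Pr[s^{(i)}=a,s^{(j)}=b]\propto e^{w^{(i)}a+w^{(j)}b}F_1(a,b)F_2(a,b).
\]
Each positive function $F_r$ on $\{\pm1\}^2$ can be written as $C_r\exp(c_ra+d_rb+J_rab)$, with
\[
\tanh(2J_r)\text{-type identity: } e^{4J_r}=\frac{F_r(+,+)F_r(-,-)}{F_r(+,-)F_r(-,+)}.
\]
So the pair $(s^{(i)},s^{(j)})$ is a two-spin Ising model with coupling $J_1+J_2$ and some effective fields.

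\textbf{Step 1 (decimation along a path).} I will show that $|\tanh J_1|\le A$ and $|\tanh J_2|\le B$, whatever the fields on the interior vertices. The argument is induction on the arc length, summing out one interior vertex at a time. The key one-vertex inequality is the following: summing out a spin with field $h$ that sits between couplings $J$ and $K'$ produces an effective coupling $J'$ with $|\tanh J'|\le|\tanh J|\,|\tanh K'|$. Explicitly,
\[
e^{2J'}=\frac{\cosh(J+K'+h)}{\cosh(J-K'+h)}\cdot\ldots
\]
which involves a ratio of four cosh's. I would check that $|J'|$ is maximized at $h=0$ (by log-concavity of $\cosh$, or by differentiating in $h$). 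At $h=0$ the classical identity $\tanh J'=\tanh J\tanh K'$ holds. I expect this field-uniform contraction to be the main technical obstacle; I would try a direct calculus argument on $\log\cosh$ first.

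\textbf{Step 2 (two-spin covariance).} For two spins with coupling $J$ and arbitrary fields, I will show $|\Cov(s_1,s_2)|\le|\tanh J|$. Write $\Cov=4(p_{++}p_{--}-p_{+-}p_{-+})$ and parametrize the fields; the covariance equals $(1-m_1^2)$ times the derivative of the conditional mean of $s_2$ given $s_1$. Equivalently, I will show that zero fields maximize $|\Cov|$, where it equals $\tanh J$.

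\textbf{Step 3 (combining).} With $t_r=\tanh J_r$, Steps 1 and 2 give
\[
|\Cov_{ij}|\le\bigl|\tanh(J_1+J_2)\bigr|=\left|\frac{t_1+t_2}{1+t_1t_2}\right|.
\]
On $[-1,1]^2$ the map $(x,y)\mapsto\frac{x+y}{1+xy}$ is nondecreasing in each coordinate, since $\partial_x=\frac{1-y^2}{(1+xy)^2}\ge0$. Together with $|t_1|\le A$ and $|t_2|\le B$, this gives $|\Cov_{ij}|\le\frac{A+B}{1+AB}$. Since $AB=|\prod_{k=1}^M\tanh K^{(k)}|$, this is the first inequality of the lemma.

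For the second inequality, $|\tanh K^{(k)}|\le a$ gives $A\le a^{j-i}$ and $B\le a^{M-(j-i)}$. Applying the same coordinatewise monotonicity of $\frac{x+y}{1+xy}$ once more yields the bound $\frac{a^{j-i}+a^{M-(j-i)}}{1+a^M}$. Degenerate cases, such as some $K^{(k)}=0$ or infinite couplings, follow by continuity.
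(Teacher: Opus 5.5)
Your proposal is correct. Its skeleton is the paper's: split the cycle at $i$ and $j$ into two arcs, note that the pair marginal factors as $\rho_0=\rho_1\rho_2$ so the effective couplings add ($J_0=J_1+J_2$), bound the two-spin covariance by $|\tanh J_0|$ uniformly in the fields, and finish with the $\tanh$ addition formula. Your coordinatewise monotonicity of $\frac{x+y}{1+xy}$ is equivalent to the paper's $|\tanh(J_1+J_2)|\le\tanh(|J_1|+|J_2|)$.

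The genuine difference is how you get $|\tanh J_1|\le A$. The paper argues through covariances. It uses the equality case of the two-spin proposition, $|\tanh J_1|=\sup_{w}|\Cov(\lineartilt_w\rho_1)_{1,2}|$, and notes that tilting $\rho_1$ is the same as tilting the path measure $\mu_1$ at its endpoints. It then applies the path bound $|\Cov(\mu)_{i,j}|\le\prod_k|\tanh K^{(k)}|$, proved by factoring into one-step influences $\Psi(k\to k+1)$. You instead decimate interior spins one at a time, working directly with effective couplings.

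Your one-vertex contraction is valid. The cleanest proof uses $\cosh(h+x)\cosh(h-x)=\tfrac12(\cosh 2h+\cosh 2x)$, which gives $e^{4J'}=\frac{\cosh 2h+\cosh 2(J+K')}{\cosh 2h+\cosh 2(J-K')}$. This ratio moves monotonically toward $1$ as $|h|$ grows, so $|J'|$ is largest at $h=0$, where $\tanh J'=\tanh J\tanh K'$. Your parenthetical appeal to ``log-concavity of $\cosh$'' is misnamed, since $\log\cosh$ is convex; use this identity or the derivative in $h$ instead.

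Your route is self-contained. It needs neither the influence factorization nor the attainment half of the two-spin proposition. It is essentially the coupling-level counterpart of the paper's estimate $|\Psi(k\to k+1)|\le|\tanh K^{(k)}|$, which has the same $\cosh 2v+\cosh 2K$ denominator. The paper's route, in exchange, reuses covariance statements it needs anyway and stays phrased in terms of covariances of tilted path models.
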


\begin{proof}[Proof of \cref{thm:covariance bound for each block}]
Let $ \tilde{\pi}\equiv \tilde{\pi}_{\mathbf{K}, \mathbf{w}}$, and $\pi \equiv \pi_{\mathbf{K}, \mathbf{w}}^{\mathrm{1D Ising, CBC}}.$
\begin{align*}
    \norm{\Cov(\tilde{\pi}) }_{\op} = \frac{1}{M}     \norm{\Cov(\pi)}_{\op} \leq \frac{1}{M}\norm{\Cov(\pi) }_{\infty} =\frac{1}{M}\max_{i} \sum_j  |\Cov(\pi)_{i,j}| \leq \frac{\tanh(\eta)}{M}\coth(\frac{\eta}{M})
\end{align*}
where the last inequality follows from \Cref{lem:bound pairwise variance for 1D Ising CBC}. Indeed, by relabeling, we only need to bound $\sum_j  |\Cov(\pi)_{i,j}|  $ for $i=1.$ Let $a = \tanh K = e^{-2\eta/M} \in (0,1).$ By \Cref{lem:bound pairwise variance for 1D Ising CBC}, $ |\Cov(\pi)_{i,j}| \leq \frac{a^{j-1} + a^{M+1-j}}{1 + a^M}$ for $i =1$ and $ j\in [M],$ thus
\begin{align*}
    \sum_j  |\Cov(\pi)_{i,j}| \leq \sum_{d=0}^{M-1}
\frac{a^d+a^{M-d}}{1+a^M} = \frac{(1+a) \sum_{d=0}^{M-1} a^d   }{1 + a^M} = \frac{1+a}{1-a}\cdot \frac{1-a^M }{1+a^M} = \coth(\frac{\eta}{M}) \tanh(\eta).
\end{align*}
\end{proof}
To prove \cref{lem:bound pairwise variance for 1D Ising CBC}, we relate the entries of $\Cov(\pi_{\mathbf{K}, \mathbf{w}}^{\mathrm{1D Ising, CBC}})$ to those of certain Ising models on a path (a.k.a. 1D Ising with open boundary condition) using \cref{prop:covariance bound 2 spin}, \cref{eq:main helper for pairwise CBC covariance bound,eq:application of 2spin bound}, then bound the latter using \cref{prop:pairwise covariance bound Ising on path}.

\begin{proposition}\label{prop:covariance bound 2 spin}
    Consider $\rho:\set{\pm 1}^2\to \R_{>0}$. We can write $\rho$ uniquely as the (unnormalized) density function of an Ising model as follows. Let 
    \begin{align*}
        J &=\frac{1}{4}\ln \frac{\rho(+1,+1)\rho(-1,-1)}{\rho(+1,-1)\rho(-1,+1)},\quad v_1 =\frac{1}{4}\ln \frac{\rho(+1,+1)\rho(+1,-1)}{\rho(-1,+1)\rho(-1,-1)} , \\
        v_2 &= \frac{1}{4}\ln \frac{\rho(+1,+1)\rho(-1,+1)}{\rho(+1,-1)\rho(-1,-1)}, \quad C=\left(\prod_{x,y\in \set{\pm 1}} \rho(x,y)\right)^{1/4}.
    \end{align*}
    Then
    \[ \forall \mathbf{s}=(s^{(1)},s^{(2)}) \in \set{\pm 1}^2: \rho(\mathbf{s}) =C \exp(J s^{(1)} s^{(2)} + v_1 s^{(1)} + v_2 s^{(2)} ) .\]
     We have $|\tanh(J)|= \sup_{w\in \R^2} |\Cov(\lineartilt_w \rho)_{1,2}|.$
\end{proposition}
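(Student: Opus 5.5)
The first claim is pure linear algebra: I will plug the ansatz into the four values of $\rho$ and check that it reproduces them. Taking logarithms, $\ln\rho(x,y)=\ln C + Jxy + v_1x+v_2y$ is a linear system in $(\ln C, J, v_1, v_2)$. Its coefficient matrix has rows $(1,xy,x,y)$ for $x,y\in\{\pm1\}$; these are the characters of $\{\pm1\}^2$, so the matrix is $4$ times an orthogonal matrix and the system has a unique solution. That solution is the Fourier inversion $\hat g(S)=\frac14\sum_{x,y}g(x,y)\chi_S(x,y)$ applied to $g=\ln\rho$. This gives exactly the stated formulas, which settles both existence and uniqueness.

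For the covariance identity, note that a linear tilt $\lineartilt_w$ only shifts the fields: $\lineartilt_w\rho$ is the normalized Ising measure with coupling $J$ and fields $u_1=v_1+w_1$, $u_2=v_2+w_2$. Since $v$ is fixed and $w$ ranges over $\R^2$, $(u_1,u_2)$ ranges over all of $\R^2$. So it suffices to show that
\[
\sup_{u\in\R^2}\bigl|\Cov(s^{(1)},s^{(2)})\bigr| = |\tanh J|
\]
for the two-spin Ising model with fields $u$.

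I will compute the covariance explicitly. Write $Z=\sum_{x,y} e^{Jxy+u_1x+u_2y}$. Then
\[
\Cov_{12}=\partial_{u_1}\partial_{u_2}\ln Z = \frac{p_{++}p_{--}-p_{+-}p_{-+}}{1}\cdot 4
\]
in terms of the cell probabilities; equivalently $\Cov_{12}=4(p_{++}p_{--}-p_{+-}p_{-+})$. Substituting gives
\[
\Cov_{12}=\frac{4\,(e^{2J}-e^{-2J})}{Z^2}=\frac{8\sinh(2J)}{Z^2},
\]
with $Z=2e^{J}\cosh(u_1+u_2)+2e^{-J}\cosh(u_1-u_2)$.

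The sign of $\Cov_{12}$ is the sign of $J$, and its magnitude is maximized by minimizing $Z$. Since $\cosh\ge 1$, $Z$ is minimized at $u=0$, where $Z=4\cosh J$. This yields
\[
\sup_u|\Cov_{12}| = \frac{8|\sinh 2J|}{16\cosh^2 J}=|\tanh J|,
\]
and the supremum is attained at $w=-v$.

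The main obstacle, if any, is keeping the algebra for $Z$ and the numerator straight. Two routes would serve as a sanity check: a direct derivation from $\Cov = E[s_1s_2]-E[s_1]E[s_2]$, or the observation that at zero field $\Cov_{12}=E[s_1s_2]=\tanh J$.
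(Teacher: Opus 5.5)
Your proposal is correct and matches the paper's proof. The paper also observes that the tilt only shifts the fields to $u=v+w$, computes $\Cov_{12}=4(e^{2J}-e^{-2J})/Z^2$ with $Z=2(e^J\cosh(u_1+u_2)+e^{-J}\cosh(u_1-u_2))\ge 2(e^J+e^{-J})$, and notes equality at $w=-v$. Your character-orthogonality argument simply spells out the ``direct calculation'', including uniqueness, behind the first claim, with one harmless slip: the $\pm1$ character matrix satisfies $HH^\intercal=4I$, so it is $2$ (not $4$) times an orthogonal matrix.
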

\begin{proof}
    The first statement is by a direct calculation. For the second statement, we first prove that \[ \sup_{\mathbf{w}\in \R^2} |\Cov(\lineartilt_w \rho)_{1,2}| \leq |\tanh(J)|.\]
    Fix $ \mathbf{w}\in \R^2,$ and let $\rho'= \lineartilt_w \rho.$
    Let $u_1 = v_1 + w_1, u_2= v_2+ w_2.$ We have  \[ \rho' (\mathbf{s}) =\frac{1}{Z}\exp(J s^{(1)} s^{(2)} + u_1 s^{(1)} + u_2 s^{(2)} ), \]
where
    $Z = 2 (e^J \cosh(u_1+u_2) + e^{-J} \cosh(u_1-u_2))\geq 2 (e^J + e^{-J}).$ Hence
    \[ \Cov(\rho')_{1,2} = 4 ( \rho'(+1,+1) \rho'(-1,-1) -\rho'(+1,-1)\rho'(-1,+1)) = \frac{4 (e^{2J} - e^{-2J})}{Z^2}\]
    thus
    \[|\Cov(\lineartilt_w \rho)_{1,2}| = \frac{4 |e^{2J} - e^{-2J}|}{Z^2} \leq  \frac{4 |e^{2J} - e^{-2J}|}{4 (e^J + e^{-J})^2}\leq |\tanh(J)| .\]
Equality is attained when $ w_1 = -v_1, w_2 = -v_2.$ Thus $|\tanh(J)|= \sup_{w\in \R^2} |\Cov(\lineartilt_w \rho)_{1,2}|.$
\end{proof}

For $\mathbf K=(K^{(k)})_{k=1}^{M-1}\in \R^{M-1}$ and $\mathbf w = (w^{(k)})_{k=1}^{M}\in \R^M,$  
 the 1D Ising model with open boundary condition (OBC) $ \pi_{\mathbf K,\mathbf w}^{\mathrm{1D Ising, OBC}}:\set{\pm 1}^M \to \R_{\geq 0}$ is defined by:
\begin{equation}\label{def:1D Ising with open boundary}
  \forall \mathbf{s}= (s^{(k)})_{k=1}^{M}\in \set{\pm 1}^M: \pi_{\mathbf K,\mathbf w}^{\mathrm{1D Ising, OBC}}(\mathbf{s})
=\frac{1}{Z_{\mathbf K,\mathbf w}^{\mathrm{1D Ising, OBC}}}
\exp\left\{
\sum_{k=1}^{M-1} K^{(k)} s^{(k)} s^{(k+1)}
+
\sum_{k=1}^{M} w^{(k)} s^{(k)}
\right\}
\end{equation}

\begin{proposition} \label{prop:pairwise covariance bound Ising on path}
Let $\mu\equiv  \pi_{\mathbf K,\mathbf w}^{\mathrm{1D Ising, OBC}}. $ Then
    $\forall i, j\in [M], i\leq j:|\Cov(\mu)_{i,j}| \leq |\prod_{k=i}^{j-1}\tanh (K^{(k)})|.$
\end{proposition}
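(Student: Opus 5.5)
We argue by induction on the gap $j-i$, marginalizing out the intermediate spins so that the pair $(s^{(i)}, s^{(j)})$ becomes a two-spin model to which \cref{prop:covariance bound 2 spin} applies. The key observation is that for the path model $\mu$, conditioning or marginalizing preserves the structure. Summing out all spins outside $\{i,\dots,j\}$ (those with index $<i$ or $>j$) only modifies the external fields at sites $i$ and $j$. This is because the path is a tree: summing out the left tail $s^{(1)},\dots,s^{(i-1)}$ produces a positive function of $s^{(i)}$ alone, i.e. an effective field on $s^{(i)}$, and similarly for the right tail. Hence the marginal of $\mu$ on $\{i,\dots,j\}$ is again a path Ising model with the same couplings $K^{(i)},\dots,K^{(j-1)}$ and some fields $\mathbf w'$. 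So it suffices to treat $i=1$, $j=M$.

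\textbf{Step 1 (reduction to a two-spin model).} Marginalize $\mu$ onto $(s^{(1)},s^{(M)})$. The result is a positive density $\rho$ on $\{\pm1\}^2$, which by \cref{prop:covariance bound 2 spin} is $\rho(s)\propto \exp(J_{\mathrm{eff}} s^{(1)}s^{(M)}+v_1s^{(1)}+v_2s^{(M)})$. Its covariance satisfies $|\Cov(\rho)_{1,2}|\le|\tanh J_{\mathrm{eff}}|$. Since $\Cov(\mu)_{1,M}=\Cov(\rho)_{1,2}$, it suffices to show $|\tanh J_{\mathrm{eff}}|\le\prod_{k=1}^{M-1}|\tanh K^{(k)}|$.

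\textbf{Step 2 (effective coupling via transfer matrices).} We compute the unnormalized marginal as a matrix product. Write $T_k(x,y)=e^{K^{(k)}xy+w^{(k+1)}y}$. Then $\rho(x,y)=e^{w^{(1)}x}\,(T_1\cdots T_{M-1})(x,y)$. The fields only multiply rows and columns by positive scalars, and the quantity $\frac{\rho(+,+)\rho(-,-)}{\rho(+,-)\rho(-,+)}=e^{4J_{\mathrm{eff}}}$ is invariant under such diagonal scalings. So $e^{4J_{\mathrm{eff}}}$ is the cross-ratio of the product $\prod_k E_k D_k$, where $E_k=\begin{pmatrix} e^{K}&e^{-K}\\ e^{-K}&e^{K}\end{pmatrix}$ with $K=K^{(k)}$, and the $D_k$ are positive diagonal matrices. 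It is cleaner to proceed by induction, eliminating one spin at a time. Summing out a middle spin $s$ between a coupling $J_1$ (to $x$) and $K$ (to $y$), with field $h$ on $s$, gives an effective coupling $J'$ satisfying
\[
\tanh J'=\frac{\tanh J_1\tanh K\,(1-\tanh^2 h)}{1-\tanh^2 J_1\tanh^2K\tanh^2h}\cdot(\text{correction})\;,
\]
or more robustly $|\tanh J'|\le|\tanh J_1||\tanh K|$. This is the classical fact that the contraction coefficient of a channel composed with a diagonal reweighting is at most the product of the contraction coefficients. I would verify it directly from the cross-ratio formula. Note $\tanh J' = \frac{\sqrt{r}-1}{\sqrt r+1}$, where $r=e^{4J'}$ is the $2\times2$ cross-ratio, and this is Birkhoff's contraction coefficient $\tanh(\frac14\log r)$ for positive $2\times 2$ matrices. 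Birkhoff's theorem then says that the projective diameter contraction is submultiplicative under matrix products, and that positive diagonal matrices have diameter zero, so they do not contract or expand. This yields $|\tanh J_{\mathrm{eff}}|\le\prod_k|\tanh K^{(k)}|$, with the sign handled by absolute values. Negative couplings flip the sign of the cross-ratio's logarithm but not its magnitude.

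\textbf{Main obstacle.} The hard part is Step 2: proving the submultiplicativity $|\tanh J'|\le|\tanh J_1||\tanh K|$ with an arbitrary intermediate field. If invoking Birkhoff–Hopf is undesirable, I would instead do a direct two-step computation. Write $\tanh J'$ for the product of two $2\times2$ matrices $\begin{pmatrix}e^{J_1}&e^{-J_1}\\e^{-J_1}&e^{J_1}\end{pmatrix}\operatorname{diag}(e^{h},e^{-h})\begin{pmatrix}e^{K}&e^{-K}\\e^{-K}&e^{K}\end{pmatrix}$. Its determinant is $(e^{2J_1}-e^{-2J_1})(e^{2K}-e^{-2K})$. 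The identity $\tanh J'=\frac{\sqrt{ad}-\sqrt{bc}}{\sqrt{ad}+\sqrt{bc}}$ holds for a matrix with entries $a,b,c,d$. The remaining work is an elementary inequality, maximized at $h=0$, where it becomes equality ($\tanh J'=\tanh J_1\tanh K$). Showing that $h=0$ is the worst case, via AM–GM on the diagonal terms, is the step I expect to require care.
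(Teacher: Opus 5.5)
Your proof is correct, and it takes a different route from the paper's. The paper works with influences. It uses the Markov-chain identity $\Cov(\mu)_{i,j}=\Cov(\mu)_{i,i}\prod_{k=i}^{j-1}\Psi(k\to k+1)$, where $\Psi(k\to k+1)=\tfrac12\bigl(\E[s^{(k+1)}\mid s^{(k)}=1]-\E[s^{(k+1)}\mid s^{(k)}=-1]\bigr)$. It then bounds each single-edge influence by $|\tanh K^{(k)}|$ from the two-spin marginal on that edge, and uses $\Cov(\mu)_{i,i}\le 1$.

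You work with effective couplings instead. You collapse the whole segment to a two-spin model, apply \cref{prop:covariance bound 2 spin} once, and then need the decimation bound $\tanh|J_{\mathrm{eff}}|\le\prod_k|\tanh K^{(k)}|$, either via Birkhoff--Hopf or one spin at a time. Your route does not need the influence factorization, which the paper imports from the literature. It also proves directly the inequality $\tanh|J_1|\le A$ that the paper needs in \eqref{eq:application of 2spin bound}; there the paper detours through a supremum of covariances over tilts and then invokes this proposition. The paper's route is shorter once the factorization identity is granted.

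A few things to tidy up.
\begin{itemize}
\item A positive diagonal matrix does not have projective diameter zero: it maps the positive cone onto itself, so its diameter is infinite. What you actually need is that $D_1PD_2$ has the same cross-ratio as $P$. The diagonal factors can therefore be absorbed into the adjacent $E_k$ before you use submultiplicativity of the Birkhoff coefficient $\tanh(\Delta/4)$.
\item Drop the displayed formula with the ``correction'' factor; it is not a real identity. The one-step bound you flag as the main obstacle is elementary. Summing out the middle spin gives the weight $2\cosh(J_1x+h+Ky)$, so
\[
e^{4J'}=\frac{\cosh 2(J_1+K)+\cosh 2h}{\cosh 2(J_1-K)+\cosh 2h}.
\]
Since $\cosh 2h\ge 1$, and adding the same positive quantity to numerator and denominator moves the ratio toward $1$, $|J'|$ is maximized at $h=0$. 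There $\tanh J'=\tanh J_1\tanh K$. No AM--GM is needed.
\item The induction goes through because each elimination again yields a positive weight on the current endpoint pair, of the form in \cref{prop:covariance bound 2 spin}. The leftover field is simply added to the next spin's field, and the one-step bound holds for every $h$.
\item The case $i=j$ is trivial and should be stated separately.
\end{itemize}
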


\begin{proof}[Proof of \cref{lem:bound pairwise variance for 1D Ising CBC}]
The case $i=j$ is trivial; below we assume $i\neq j.$ 
By relabeling, we can assume $i = 1$ w.l.o.g. Let $\mu_0$ be the unnormalized density function corresponding to $\pi_{\mathbf{K}, \mathbf{w}}^{\mathrm{1D Ising, CBC}},$ defined by:
\begin{align*}
    \mu_0(s^{(1)},\cdots, s^{(M)})& = \exp\left(
\sum_{k=1}^{M} K^{(k)} s^{(k)} s^{(k+1)}
+
\sum_{k=1}^{M} w^{(k)} s^{(k)}
\right) \text{ where } s^{(M+1)} = s^{(1)}
\end{align*}
Let  $ \mu_1 ,\mu_2$ be the unnormalized density functions of 1D Ising models with open boundary conditions, defined by:
\begin{align*}
    \mu_1(s^{(1)},\cdots, s^{(j)})& = \exp\left(
\sum_{k=1}^{j-1} K^{(k)} s^{(k)} s^{(k+1)}
+
\sum_{k=1}^{j-1} w^{(k)} s^{(k)}
\right)\\
\mu_2(s^{(j)},\cdots, s^{(M)},s^{(1)}) &= \exp\left(
\sum_{k=j}^{M} K^{(k)} s^{(k)} s^{(k+1)}
+
\sum_{k=j}^{M} w^{(k)} s^{(k)}
\right) \text{ where } s^{(M+1)} = s^{(1)}.
\end{align*}
We observe that $ \mu_{0}(s^{(1)}, \cdots, s^{(M)}) = \mu_1(s^{(1)},\cdots, s^{(j)})\mu_2(s^{(j)},\cdots, s^{(M)}, s^{(1)}) .$ For $ t \in \set{0,1,2}$, let $\rho_t:\set{\pm 1}^2 \to \R_{>0}$ be the unnormalized density induced on $(s^{(1)},s^{(j)})$ by $\mu_t.$ Explicitly, for $(x,y)\in\set{\pm 1}^2:$
\begin{align*}
   \rho_0(x,y) &= \sum_{\mathbf{s}\in \set{\pm 1}^M: s^{(1)} =x, s^{(j)} = y} \mu_0 (\mathbf{s})\\
   \rho_1(x,y) &=
\sum_{s^{(2)},\ldots,s^{(j-1)}}
\mu_1(x,s^{(2)},\ldots,s^{(j-1)},y)\\
\rho_2(x,y)&=
\sum_{s^{(j+1)},\ldots,s^{(M)}}
\mu_2(y,s^{(j+1)},\ldots,s^{(M)},x).
\end{align*}
We have:
\[ \quad  \forall (x,y)\in\set{\pm 1}^2:\rho_0(x,y) = \rho_1(x,y) \rho_2(x,y).\]
We abuse notation slightly and identify $ \rho_t$ and $\mu_t$ with the corresponding probability distributions, and note that:
 \[ \Cov(\rho_t)_{1,2} = \Cov(\mu_t)_{1, j}\quad \] 
We apply \cref{prop:covariance bound 2 spin} to $\rho_t,$ and define $J_t = \frac{1}{4}\ln \frac{\rho_t(+1,+1)\rho_t(-1,-1)}{\rho_t(+1,-1)\rho_t(-1,+1)}$ accordingly. We have 
$J_0 = J_1 + J_2$ and
\begin{equation}\label{eq:main helper for pairwise CBC covariance bound}
    |\Cov(\mu_0)_{1, j}| = |\Cov(\rho_0)_{1,2}| \leq |\tanh(J_0)|\leq \tanh(|J_1|+|J_2|)
\end{equation}
where in the last inequality, we use the monotonicity of $\tanh.$
Next,  by \cref{prop:covariance bound 2 spin}, we have:
\begin{equation}\label{eq:application of 2spin bound}
  \tanh(|J_1|) =|\tanh(J_1)| =\sup_{w\in \R^2}|\Cov(\lineartilt_w \rho_1)_{1,2}| \leq \sup_{w\in \R^j} |\Cov(\lineartilt_w\mu_1)_{1,j}| \leq \prod_{k=1}^{j-1} |\tanh(K^{(k)})|=A   
\end{equation}
where the last inequality is by \cref{prop:pairwise covariance bound Ising on path}.
Similarly, $\tanh(|J_2|) \leq  \prod_{k=j}^M |\tanh(K^{(k)})| = B.$ Thus, by monotonicity of $ \tanh,$ and \cref{eq:main helper for pairwise CBC covariance bound}
\begin{align*}
    |\Cov(\mu_0)_{1, j}| \leq \tanh(|J_1|+|J_2|) &\leq \tanh(\tanh^{-1}( A)+ \tanh^{-1}(B) ) \\
    &= \frac{A+ B}{1 + \prod_{k=1}^{M} |\tanh(K^{(k)})|}
\end{align*}
where in the last equality, we use the identities $\tanh(x+y)=\frac{\tanh x + \tanh y}{1+\tanh (x) \tanh(y)}$ and $AB = \prod_{k=1}^{M} |\tanh(K^{(k)})|.$ 
When $|K^{(k)}|\leq K \forall k$, using monotonicity of $\tanh$, we have $A \leq a^{j-i}, B\leq a^{M- (j-i)}$ and \[   |\Cov(\mu_0)_{1, j}| \leq  \tanh(\tanh^{-1}( A)+ \tanh^{-1}(B) )= \frac{A+ B}{1 + \prod_{k=1}^{M} |\tanh(K^{(k)})|} \leq  \frac{a^{j-i} + a^{M-(j-i)}}{1 + a^M}.\]
\end{proof}

\begin{proof}[Proof of \cref{prop:pairwise covariance bound Ising on path}]
    For any $i,j\in [M]$, let $\Psi(i\to j) = \frac{1}{2}(\E_{s\sim \mu}[s^{(j)} |s^{(i)}=1] - \E_{\mu}[s^{(j)}|s^{(i)} = -1]).$ 
    For $i\leq j,$ a standard calculation (see e.g. \cite[Eq. (21)]{BKM05}) gives:
    \[ \Cov(\mu)_{i,j} = \Cov(\mu)_{i,i} \Psi(i\to j) = \Cov(\mu)_{i,i}  \prod_{k =i }^{j-1} \Psi(k\to (k+1))\]
    The statement follows from $|\Cov(\mu)_{i,i}|\leq 1$ and $| \Psi(k\to (k+1))| \leq |\tanh(K^{(k)})|.$ We prove $| \Psi(k\to (k+1))| \leq |\tanh(K^{(k)})|.$

    Deleting the edge between the spins $k$ and $(k+1)$ separates the path into two
components, so the marginal on these two spins has the form
\[
p(x,y)\propto
\exp(K^{(k)}xy+ux+vy)
\]
for some $u,v\in\mathbb R$. Hence
\[
|\Psi(k\to k+1)|
=\frac12|\tanh(v+K^{(k)})-\tanh(v-K^{(k)})|
=\frac{|\sinh(2K^{(k)})|}
       {\cosh(2v)+\cosh(2K^{(k)})}
\le |\tanh(K^{(k)})|.
\]
\end{proof}

\section{Lower bound}\label{sec:lower bound}

In this section, we prove \cref{thm:hard regime informal}. 

We note that the case $\eta =0$ corresponds to the classical Ising model case, and \cref{thm:hard regime informal} follows from \cite{galanis2024sampling}, so we will focus on the case $ \eta \neq 0.$
Fix $\eta\neq 0$ and $b'>b_{\min}:=\eta/\tanh\eta$. Following \cite{galanis2024sampling},
to formally state the result, we define the following computational problem.

\noindent\fbox{\begin{minipage}{\dimexpr\linewidth-2\fboxsep-2\fboxrule\relax}
\textit{Problem:} \textsc{SpectralTFIM}$(\eta,b')$\\
\textit{Input:} A symmetric matrix $J\in\R^{n\times n}$ with
$\Delta(J)\in(b_{\min},b').$ 

\textit{Output:} The partition function
$Z= \Tr\!\left(\exp\!\left(
\frac12\sum_{i,j}J_{ij}Z_iZ_j+\eta\sum_iX_i\right)\right)$.
\end{minipage}}
\vspace{0.1in}
\begin{theorem}\label{thm:main hardness of approximation}
    For every fixed $\eta\neq 0$ and $b'>\eta/\tanh\eta$, there is a constant
$c>0$ dependent only on $\eta$ and $b'$ such that approximating
\textsc{SpectralTFIM}$(\eta,b')$ within a factor $e^{cn}$ is \NP-hard.
A polynomial-time randomized (respectively quantum) algorithm for this task
would imply $\textsc{NP}=\textsc{RP}$ (respectively
$\textsc{NP}\subseteq\textsc{BQP}$).
\end{theorem}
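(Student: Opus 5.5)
The plan is a gadget reduction from \textsc{MAXCUT} on bounded-degree graphs, following the technical overview and the strategy of \cite{sly2010computational,galanis2024sampling}.

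\textbf{Step 1: classical representation.} By \cref{prop:derive expression for observables} with $S=\emptyset$ and $\boldsymbol h^z=0$,
\[
Z=C\,\E_{\boldsymbol\sigma\sim\nu}\bigl[e^{V(\boldsymbol\sigma)}\bigr],
\qquad V(\boldsymbol\sigma)=\tfrac12\sum_{ij}J_{ij}\langle\sigma_i,\sigma_j\rangle_{L^2(\mS)} .
\]
Here $C=(2\cosh\eta)^n$ is explicitly computable. It is convenient to regard $\nu^{(i)}$ as a distribution on the unit ball of $L^2(\mS)$. The relevant single-site quantity is the log-moment generating function
\[
\psi(h)=\log\E_{\nu^{(1)}}\bigl[e^{\langle h,\sigma\rangle}\bigr].
\]
For the constant tilt $h\equiv m\mathbf 1$ it is explicit, namely $\log\bigl(\cosh\sqrt{\eta^2+m^2}/\cosh\eta\bigr)$. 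Its curvature at $0$ is $\int\!\!\int\E[\sigma(s)\sigma(t)]\,ds\,dt=\tanh\eta/\eta$, which is where the threshold comes from.

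\textbf{Step 2: Curie--Weiss gadget.} Take $J=\frac{b}{k}\mathbf 1\mathbf 1^\intercal$ on $k$ sites, shifted by $-\frac bk I$ so the diagonal is zero; this shift only changes $Z$ by a known factor. Then $\Delta(J)=b$, chosen in $(b_{\min},b')$. Using a Hubbard--Stratonovich transform, write $e^{\frac b{2k}\|\sum_i\sigma_i\|^2}$ as a Gaussian integral over $h\in L^2(\mS)$. This turns the problem into a large deviation principle for the empirical magnetization $\bar\sigma=\frac1k\sum\sigma_i$, with rate function $I(m)=\sup_h\{\langle h,m\rangle-\psi(h)\}-\frac b2\|m\|^2$.

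One must show that $I$ is uniquely minimized at $\pm m_*\mathbf 1$ with $m_*\neq0$ exactly when $b\tanh\eta/\eta>1$. The route is:
\begin{itemize}
\item Show that every minimizer is a constant function, using translation invariance on $\mS$ together with a convexity/rearrangement argument.
\item Reduce to the scalar problem $\max_m\bigl\{\frac b2m^2-\psi^*(m\mathbf 1)\bigr\}$.
\item Show the scalar self-consistency equation $m=\frac{m}{\sqrt{\eta^2+b^2m^2}}\tanh\sqrt{\eta^2+b^2m^2}\cdot b$ has a unique positive root exactly in that regime, since $x\mapsto\tanh x/x$ is decreasing.
\end{itemize}
The output is \cref{thm:concentration on two wells}-type bimodality: $\bar\sigma_{R}$ lies within $\epsilon$ of $\pm m_*\mathbf 1$ with probability $1-e^{-\Omega(k)}$, for a designated subset $R$ of sites.

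\textbf{Step 3: reduction from \textsc{MAXCUT}.} Given a bounded-degree graph $G$, place a gadget $K_v$ of size $k$ at each vertex, forming $\hat J$. Add antiferromagnetic couplings $-\frac{\gamma}{k}$ between $R_u$ and $R_v$ for each edge $uv\in E(G)$, forming $\tilde J$. The small parameter $\gamma$ keeps $\Delta(\tilde J)\in(b_{\min},b')$, since the perturbation has small operator norm.

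Conditioning on the phase vector $\mathbf y$ and using bimodality, the interaction energy is about $-\gamma k\,m_*^2\,y_uy_v$ per edge, with $o(k)$ error. Hence
\[
\log\frac{Z_{\tilde J}(\mathbf y)}{Z_{\hat J}(\mathbf y)}=\gamma k m_*^2\bigl(2\,\mathrm{cut}(\mathbf y)-|E|\bigr)\pm \epsilon k|E|.
\]
Summing over the $2^{|V_G|}$ phase vectors, with $k$ large relative to $\log 2/\gamma$, gives that $\log(Z_{\tilde J}/Z_{\hat J})$ determines $\textsc{MAXCUT}(G)$ within additive error $\epsilon' k|E|$. Since $n=k|V_G|$, $e^{cn}$-approximations of both partition functions give a constant-factor approximation of \textsc{MAXCUT}. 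Approximation with arbitrarily small constant relative error is NP-hard by \cite{alimonti1997hardness}. The RP and BQP consequences follow from standard amplification.

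\textbf{Main obstacle.} The main obstacle is Step 2, and specifically the infinite-dimensional variational problem: ruling out non-constant (time-dependent) minimizers, and obtaining uniform LDP error bounds that survive the gadget coupling. The first approach to try is to bound the conditional measure given $h$ by a product of 1D-Ising-type path measures, as in \cref{thm:covariance bound for each block}. Strict concavity of $\frac b2\|m\|^2-\psi^*(m)$ off the constant direction should follow from the covariance bound $\tanh\eta/\eta$ for non-constant Fourier modes, which is strictly smaller than for the constant mode.
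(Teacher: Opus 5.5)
Your overall architecture (path-integral representation, an LDP on $L^2(\mS)$ for a Curie--Weiss gadget, then a \textsc{MaxCut} gadget reduction) is the paper's. However, the step you flag as the main obstacle is a genuine gap, and the routes you sketch do not close it.

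\textbf{Step 2.} Translation invariance plus convexity cannot show that minimizers of $I=\psi^*-\frac b2\norm{\cdot}^2$ are constant. Replacing $m$ by its translation average $(\int_{\mS}m)\mathbf 1$ lowers $\psi^*$, but it also lowers $\frac b2\norm{m}^2$, which raises $I$. Averaging would only work if $\frac b2\norm{m}^2-\psi^*(m)$ were concave along mean-zero directions everywhere. That essentially requires a bound strictly below $1/b$, uniform over \emph{all} tilts $h$, on the covariance of $\lineartilt_h\nu_\eta$ restricted to mean-zero functions. Nothing you invoke gives this:
\begin{itemize}
\item The uniform bound of \cref{thm:covariance bound for each block} is, in the continuum limit, $\tanh\eta/\eta$, which exceeds $1/b$ precisely in the hard regime.
\item The Fourier diagonalization exists only at constant tilts, where the tilted path measure is stationary. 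At $h=bm_*\mathbf 1$ it gives $b$ times the mode-$k$ covariance equal to $\eta^2/(R_*^2+\pi^2k^2)<1$. This only says the wells are nondegenerate \emph{local} extrema.
\item It cannot rule out distant competitors such as $q=m_*(\mathbbm{1}_{[0,1/2)}-\mathbbm{1}_{[1/2,1)})$. This profile has $|q|=m_*\mathbf 1$ and the same norm as the wells, and it is exactly the configuration behind the gap in \cite{CCIV08} noted in \cref{remark:fix gap in CCIV08}.
\end{itemize}

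The paper needs no Hessian information. It uses three facts:
\begin{itemize}
\item[(i)] Plugging $\lambda=bx$ into the Legendre transform gives $I(x)\ge-\Phi_{b,\eta}(bx)$, so only $\Lambda_{\nu_\eta}$ must be controlled. Fenchel--Young shows this is tight at $m_*\mathbf 1$ (\cref{prop:inf of tilde I}).
\item[(ii)] The time-averaging inequality $\Lambda_{\nu_\eta}(f)\le\int_{\mS}\Lambda_{\nu_\eta}(f(t)\mathbf 1)\,dt$ from \cite{CCIV08} reduces nonnegative profiles pointwise to the scalar $\phi$.
\item[(iii)] \cref{prop:sign change} gives $\Lambda_{\nu_\eta}(|f|)-\Lambda_{\nu_\eta}(f)\gtrsim\norm{f}_1-|\int_{\mS}f|$ for $\norm{f}_1$ bounded away from $0$ and $\infty$. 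The proof uses the Poisson-jump representation and $\E[\prod_{i}\sigma(t_i)\mid d\text{ jumps}]=(1-2p)^d\ge0$ for even $d$.
\end{itemize}
Together these yield the quadratic growth of \cref{lem:properties of Phi}. That quantitative bound, not mere uniqueness of minimizers, is what the reduction uses. Relatedly, a Hubbard--Stratonovich integral over $h\in L^2(\mS)$ is awkward, since the identity-covariance Gaussian is not a measure on $L^2(\mS)$. The paper gets the same rate function from Cram\'er's theorem in $L^2(\mS)$ plus Varadhan's lemma (\cref{lem:application of LDP}).

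\textbf{Step 3.} You use complete-bipartite couplings $-\gamma/k$ between $R_u$ and $R_v$. The inter-gadget energy is then of order $\gamma k$, the same order as the gadget's large-deviation speed. So bimodality of an isolated gadget does not transfer to the coupled measure, and the gadgets are not independent given $\mathbf y$. Each gadget feels an $O(\gamma)$ mean field and its magnetization shifts by $O(\gamma)$. The resulting correction is of order $\gamma^2k$ per gadget and depends on $\mathbf y$; it is not $o(k)$.

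This is repairable, but only given the missing Step 2 estimate. You would:
\begin{itemize}
\item expand $\langle\bar\sigma_{R_u},\bar\sigma_{R_v}\rangle$ around $y_um_*\mathbf 1,\,y_vm_*\mathbf 1$;
\item split the quadratic cross term by Cauchy--Schwarz so the weight factorizes over gadgets;
\item absorb the remainder $O(\gamma k)\norm{\bar\sigma_{R_u}-y_um_*\mathbf 1}^2$ using quadratic growth of the rate function, for small $\gamma$;
\item check uniformity in $|V_G|$ and use $\gamma^2k\ll\gamma km_*^2$.
\end{itemize}

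The paper avoids all of this. Its inter-gadget edges are constant-weight matchings between $O(1)$-size connector sets $S_u^v$, which are disjoint from the large reading sets $R_v$. Conditioning on the connector paths makes the gadgets independent, each with only an $O(t)$ perturbation (\cref{thm:conditioning equal product}). Each matched pair then contributes the two-site trace $\Psi_w(y_u,y_v)$, whose preference for $y_u\neq y_v$ is \eqref{eq:justify edge interaction}.

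Finally, you need relative error below the specific APX-hardness constant of \cite{alimonti1997hardness}. A mere constant-factor approximation of \textsc{MaxCut} is easy.
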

Our hard instance involves gluing together multiple copies of Curie-Weiss gadgets. In \cref{subsec:quantum curies-weiss}, we prove properties of the transverse field Curie-Weiss model that would be used in our analysis (see \cref{thm:concentration on two wells} and \cref{thm:conditioning equal product}). \cref{subsec:hard instance} describes the hard instance(s), and \cref{subsec:analyzing hard instance} gives the proof of \cref{thm:main hardness of approximation}.

As explained in the proof of \cref{thm:main theorem tractable}, for $H_z= \frac{1}{2}\sum_{ij} J_{ij} Z_i Z_j $, $ H_z + \eta \sum_i X_i$ and $ H_z+|\eta|\sum_i X_i$ are unitarily equivalent, so we can assume wlog that $\eta>0.$
For the rest of this section, fix $\eta>0$ and let $\nu_{\eta} \equiv \nu_{\eta, (0)}$ be as defined in \cref{def:general path definition}.  
Recall that $ \nu_\eta$ is a distribution over functions from $\mS =\R/\Z$ to $\set{\pm 1}.$ It will be convenient to view $ \nu_\eta$ as a distribution over the separable Hilbert vector space $L^2(\mS)$
 by identifying a function $\sigma:\mS \to \set{\pm 1}$ with its equivalence class in $L^2(\mS).$
\subsection{The transverse-field Curie-Weiss model}\label{subsec:quantum curies-weiss}

Fix $\eta>0$ and $b>0$ such that
\begin{equation}\label{eq:assumption on b and eta}
  \frac{b}{\eta} \tanh (\eta)-1> 0.
\end{equation}


For $ n\in \N_{\geq 1},$ we let $ \mu_n \equiv \mu^{\text{Curie-Weiss}}_{n, b, \eta} $ denote the classical distribution corresponding to the Curie-Weiss TFIM model with Hamiltonian $ H = \frac{b}{2n} (\sum_{i=1}^n Z_i)^2 +\eta \sum_{i=1}^n X_i.$  
\begin{equation}\label{eq:def of curies weiss quantum modified}
    \mu_n = \tilt_{U}\nu, \qquad \nu = \otimes_{i=1}^n \nu_{\eta},\qquad U(\sigma_1,\cdots, \sigma_n) = \frac{b n}{2} \cdot\norm{\bar{\sigma}}^2 \text{ where } \bar{\sigma} = \frac{1}{n} \sum_{i=1}^n \sigma_i
\end{equation}
We prove the following:
\begin{theorem}\label{thm:concentration on two wells}

    There exist $ m_* \equiv m_*(b, \eta) \in (0,1)$ and $ c  \equiv c(b, \eta) > 0$ such that the following holds.
    For any $\delta >0,$ there exists $n_0\equiv n_0(\delta)>0$ s.t. 
    \[ \forall n\geq n_0(\delta): \Pr_{(\sigma_i)_{i\in [n]} \sim \mu_n, \bar{\sigma}  = \frac{1}{n}\sum_{i\in [n]}\sigma_i}\left[ \bar{\sigma}\not \in B(m_* \mathbf{1}, \delta) \cup B(-m_* \mathbf{1}, \delta)   \right ] \leq \exp(-n c \delta^2).\]
    Here for $ s\in \set{\pm 1},$ $B(s m_* \mathbf{1}, \delta) = \set{\sigma \in L^2 (\mS): \norm{\sigma - s m_* \mathbf{1}}< \delta}.$
\end{theorem}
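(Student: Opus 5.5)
We will prove \cref{thm:concentration on two wells} via a large-deviation argument for the empirical mean $\bar\sigma$ in the Hilbert space $L^2(\mS)$. Since $\mu_n=\tilt_U\nu$ with $U=\frac{bn}{2}\norm{\bar\sigma}^2$, we first linearize the quadratic with a Gaussian (Hubbard--Stratonovich) transform. For $G$ a centered Gaussian field on $L^2(\mS)$ with covariance $\frac{b}{n}\,\mathrm{Id}$ (made rigorous by first projecting onto finitely many Fourier modes, or by working with $\mathrm{Val}_M$ discretizations and passing to the limit), we use
\[
e^{\frac{bn}{2}\norm{\bar\sigma}^2}=\E_G\!\left[e^{n\langle G,\bar\sigma\rangle}\right].
\]
The joint law of $(G,\bar\sigma)$ then has density proportional to $e^{-\frac{n}{2b}\norm{g}^2+n\langle g,\bar\sigma\rangle}\,d\gamma\,d\nu^{\otimes n}$. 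Integrating out $\sigma$ gives a density for $g$ proportional to $\exp(-n\Phi(g))$, where
\[
\Phi(g)=\frac{\norm{g}^2}{2b}-\Lambda(g),\qquad \Lambda(g)=\log\E_{\sigma\sim\nu_\eta}\!\left[e^{\langle g,\sigma\rangle}\right].
\]
Conditionally on $g$, the $\sigma_i$ are i.i.d.\ from $\lineartilt_g\nu_\eta$, so $\bar\sigma$ concentrates around $\nabla\Lambda(g)=\E_{\lineartilt_g\nu_\eta}[\sigma]$.

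Next we analyze the variational problem $\min\Phi$. Critical points satisfy $g=b\nabla\Lambda(g)$. The map $\Lambda$ is convex and even, and invariant under rotations of $\mS$; its Hessian at $0$ is the covariance operator of $\nu_\eta$, namely $\Cov(\sigma(s),\sigma(t))=\frac{\cosh(\eta(1-2|s-t|))}{\cosh\eta}$. The top eigenvalue of this operator is attained on constants and equals $\int_{\mS}\frac{\cosh(\eta(1-2u))}{\cosh\eta}\,du=\frac{\tanh\eta}{\eta}$. Hypothesis \eqref{eq:assumption on b and eta} therefore says exactly that $g=0$ is unstable along $\mathbf 1$.

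We then show that the global minimizers are exactly $\pm g_*=\pm b m_*\mathbf 1$, and that they are nondegenerate in the sense that $\Phi(g)-\Phi(g_*)\ge c'\min_\pm\norm{g\mp g_*}^2$, possibly after restricting to a bounded set. The plan has two steps.
\begin{itemize}
\item[(i)] Reduce to constant $g$. By convexity of $\Lambda$ and its rotation invariance, averaging $g$ over rotations of $\mS$ does not increase $\Lambda$, but it also does not change $\norm{g}^2$, so this alone is not enough. Instead we use the Griffiths/Lee--Yang-type fact that $\Lambda$ restricted to the constant direction dominates: $\Lambda(g)\le\Lambda(\bar g\mathbf 1)+\frac{\tanh\eta}{2\eta}\cdot\norm{g-\bar g\mathbf 1}^2\cdot\kappa$ with $\kappa<1/(b\cdots)$. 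This needs a covariance bound on nonconstant modes for tilted $\nu_\eta$, analogous to \cref{thm:covariance bound for each block}. Nonconstant Fourier modes have eigenvalue at most roughly $\frac{\eta\tanh\eta}{\eta^2+\pi^2}$, which is much smaller.
\item[(ii)] Treat the scalar problem $\phi(m)=\frac{bm^2}{2}-\lambda(bm)$. The magnetization $\lambda'$ of the 1D-continuum Ising model with uniform field is the explicit function $h/\sqrt{h^2+\eta^2}\cdot\tanh(\sqrt{h^2+\eta^2})/\ldots$ and is concave on $h>0$. This gives a unique positive root $m_*\in(0,1)$ with $\phi''(m_*)>0$.
\end{itemize}

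Finally we convert this to the tail bound. On the event $\norm{\bar\sigma\mp m_*\mathbf 1}\ge\delta$, either $g$ is at distance $\ge\delta b/2$ from $\pm g_*$, which costs $e^{-nc\delta^2}$ relative to the mass near $\pm g_*$ by the quadratic growth of $\Phi$ plus a Laplace lower bound on the normalizer; or $\bar\sigma$ deviates from $\nabla\Lambda(g)$ by $\ge\delta/2$, which costs $e^{-cn\delta^2}$ by a Hilbert-space Hoeffding/Pinelis bound, since $\norm{\sigma_i}=1$ almost surely. The Laplace lower bound on the normalizer, and the infinite-dimensionality of $g$, only contribute $e^{o(n)}$ factors. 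These are absorbed for $n\ge n_0(\delta)$, which is where the dependence $n_0(\delta)$ comes from.

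The main obstacle is step (i): proving that nonconstant directions are strictly stable uniformly over tilts, so that the minimizers are exactly the two constant functions and $\Phi$ grows quadratically away from them in $L^2$. I would first try to bound $\nabla^2\Lambda(g)$ on mean-zero functions using the cycle/path covariance decomposition of \cref{lem:bound pairwise variance for 1D Ising CBC}, in the continuum limit.
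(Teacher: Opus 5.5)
You correctly identify step (i) as the crux, and your architecture otherwise matches the paper's. You linearize $\frac{bn}{2}\norm{\bar\sigma}^2$, reduce to the variational problem for $\frac{1}{2b}\norm{g}^2-\Lambda(g)$ (the negative of the paper's $\Phi_{b,\eta}$), prove quadratic growth away from $\pm bm_*\mathbf 1$, and convert to a tail bound. The paper's bound $I(x)\ge-\Phi_{b,\eta}(bx)$ in \cref{lem:application of LDP} is the Legendre-dual form of your Hubbard--Stratonovich step. The gap is in step (i), which is the content of \cref{lem:properties of Phi}: the mechanism you propose there fails.

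Since $\nabla\Lambda(\bar g\mathbf 1)$ is constant and $\norm{g}^2=\bar g^2+\norm{g-\bar g\mathbf 1}^2$, your inequality is equivalent to $\Phi(g)\ge\Phi(\bar g\mathbf 1)+c\norm{g-\bar g\mathbf 1}^2$. You propose to get this from a bound $<1/b$ on the covariance of every tilt $\lineartilt_f\nu_\eta$ restricted to mean-zero directions. The inequality itself is false in general, while the theorem must hold for every $b$ with $\frac{b}{\eta}\tanh\eta>1$. Take $g=\epsilon\cos(2\pi t)$, so $\bar g=0$. Using your own eigenvalue computation and evenness of $\Lambda$,
\[
\Phi(g)-\Phi(0)=\frac{\epsilon^2}{4}\Bigl(\frac1b-\frac{\eta\tanh\eta}{\eta^2+\pi^2}\Bigr)+O(\epsilon^4)<0
\qquad\text{whenever } b>\frac{\eta^2+\pi^2}{\eta\tanh\eta}.
\]
So comparing $g$ with its mean is the wrong comparison. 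The tool you name cannot rescue even the regime $b$ close to $\eta/\tanh\eta$. \cref{lem:bound pairwise variance for 1D Ising CBC} controls $\norm{\Cov}_{\mathrm{op}}$ only through the row sums $\sum_j|\Cov_{ij}|$, which gives the same constant $\alpha_M(\eta)\to\frac{\tanh\eta}{\eta}>\frac1b$ in every direction. Also, once the tilt is non-constant, Fourier modes are no longer eigenvectors, so the zero-tilt value $\frac{\eta\tanh\eta}{\eta^2+\pi^2}$ gives nothing uniform.

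The missing ideas are the two ingredients of the paper's proof, which compares $g$ pointwise with constant fields rather than with its average.

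\emph{Pointwise comparison.} The inequality $\Lambda(f)\le\int_{\mS}\Lambda(f(t)\mathbf 1)\,dt$ from \cite{CCIV08}, together with $\norm{f}^2=\int_{\mS}f(t)^2\,dt$, gives $\Phi_{b,\eta}(f)\le\int_{\mS}\phi(f(t))\,dt$. Applied to $f=b|q|$ with the scalar bound \eqref{eq:bound for constant field}, valid for all $m\ge0$, this yields $\Phi_{b,\eta}(bm_*\mathbf 1)-\Phi_{b,\eta}(b|q|)\ge c_1\norm{|q|-m_*\mathbf 1}^2$ for every admissible $b$.

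\emph{Sign change.} Since $\min_s\norm{q-sm_*\mathbf 1}^2=\norm{|q|-m_*\mathbf 1}^2+2m_*\bigl(\norm{q}_1-|\int_{\mS}q|\bigr)$, functions sitting partly in each well must be penalized separately. This is \cref{prop:sign change}, $\Lambda(|f|)-\Lambda(f)\gtrsim\norm{f}_1-|\int_{\mS}f|$. It is proved by expanding $\cosh$ and using $\E[\prod_i\sigma(t_i)\mid|\mathcal J(\sigma)|=d]=(1-2p)^d\ge0$ for even $d$. This is exactly where \cite{CCIV08}'s argument has a gap; see \cref{remark:fix gap in CCIV08}. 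Neither ingredient follows from a Hessian bound.

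There are three smaller issues.
\begin{enumerate}
\item A Gaussian with covariance $\frac bn\mathrm{Id}$ does not exist on $L^2(\mS)$. After truncating to $d$ modes, the Laplace normalizer and the truncation error $\frac{bn}{2}\norm{(I-P_d)\bar\sigma}^2$ need genuine uniform control before they can be called $e^{o(n)}$. The paper avoids an auxiliary field entirely by applying Cram\'er's theorem in the separable space $L^2(\mS)$ (using $\norm{\sigma}\le1$) and Varadhan's lemma directly to $\bar\sigma$.
\item Since $\nabla\Lambda$ is $1$-Lipschitz and $b>1$, your case split should be at $\norm{g\mp g_*}<\delta/2$, not $\delta b/2$.
\item Step (ii) is fine. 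The paper's route, strict monotonicity of $\tanh R/R$ plus concavity of $\tanh R-R/b$, is simpler than proving concavity of the magnetization.
\end{enumerate}
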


The following corollary of \cref{thm:concentration on two wells} is crucial for the analysis of our hard instance construction.
\begin{corollary}\label{thm:conditioning equal product}
    Let $n , t\in \N_{\geq 1}$ with $n> t,$ and let $R = \set{t+1, \cdots, n}, S =\set{1,\cdots, t}, r := n-t.$

Let $ m_* \in (0,1)$ be as in \cref{thm:concentration on two wells}. Fix $\delta \in (0,m_*).$ 
    For $s\in \set{\pm 1}$, let \[B_s \equiv B(s m_* \mathbf{1}, \delta) = \set{\sigma \in L^2 (\mS): \norm{\sigma - s m_* \mathbf{1}}< \delta}.\]
    
    For measurable functions $ \tau_j: \mS\to \set{\pm 1}$ for $j\in S,$ and $ \sigma_i:\mS\to \set{\pm 1}$ for $i\in R,$ write
    \[\boldsymbol{\sigma} \equiv \boldsymbol{\sigma}_R  = (\sigma_i)_{i\in R}, \qquad
    \bar{\sigma} \equiv \bar{\sigma}_R= \frac{1}{r} \sum_{i \in R} \sigma_i, \qquad \boldsymbol{\tau}\equiv \boldsymbol{\tau}_S = (\tau_j)_{j\in S}, \qquad
    \bar{\tau} = \frac{1}{t} \sum_{j \in S} \tau_j.
\]
and let
\begin{equation}\label{eq:indicator Y}
  Y(\boldsymbol{\sigma}_R) = \begin{cases} 1 &\text{ if } \norm{\bar{\sigma}  -  m_* \mathbf{1}}  < \norm{\bar{\sigma}  +  m_* \mathbf{1}} 
\\   -1 &\text{ else} \end{cases}   
\end{equation}
 \begin{equation}\label{eq:rewrite U in term of average}
    \begin{split}
 U(\boldsymbol{\tau}, \boldsymbol{\sigma}) :&= \frac{b}{2r} \cdot\norm{\sum_{j\in S}\tau_j + \sum_{i\in R} \sigma_i}^2 
= \frac{br}{2} \cdot \norm{\bar{\sigma}}^2 +bt\langle \bar{\tau}, \bar{\sigma}\rangle + \frac{b t^2}{2 r} \cdot \norm{\bar{\tau}}^2      
\end{split}
    \end{equation}
     Let $\nu^{R} = \bigotimes_{i\in R} \nu_\eta.$ For $s\in \set{ \pm 1 },$ let
    \begin{equation}\label{eq:total weight at a realization}
       Q_s(\boldsymbol{\tau}) := \int \mathbbm{1}_{Y(\boldsymbol{\sigma}) = s} \exp( U(\boldsymbol{\tau}, \boldsymbol{\sigma})  ) \nu^{R}(d\boldsymbol{\sigma})
    \end{equation}
    
There exists $r_0\equiv r_0(\delta,t, b , \eta) >0$ so that for $ r \geq r_0$, the following holds:
    \[ Q_s(\boldsymbol{\tau})  = \exp(\pm 3 b t\delta) \cdot A_0 \cdot \exp(b s m_*\sum_{j\in S} \langle \tau_j ,  \mathbf{1}\rangle )  \]
where
$A_0 = \displaystyle\int \mathbbm{1}_{\bar{\sigma} \in B_{+1} } e^{\frac{b r}{2}\norm{\bar{\sigma}}^2} d\nu^R(\boldsymbol{\sigma}).$
\end{corollary}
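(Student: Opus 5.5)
The plan is to decompose the event $\{Y(\boldsymbol\sigma)=s\}$ into the "good" part $\{\bar\sigma\in B_s\}$ and a "bad" remainder contained in the complement of $B_{+1}\cup B_{-1}$. We then estimate the good part sharply and show the bad part is exponentially negligible by \cref{thm:concentration on two wells} applied with $n=r$.

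\textbf{Step 1 (the event $Y=s$).} Since $\delta<m_*$, every $\bar\sigma\in B_s$ satisfies $\norm{\bar\sigma-sm_*\mathbf 1}<\delta<m_*$. Also $\norm{\bar\sigma+sm_*\mathbf 1}\ge 2m_*-\delta>m_*$, because $\norm{\mathbf 1}=1$. Hence $Y=s$ on $B_s$. The same argument gives $Y=-s$ on $B_{-s}$, so $\{Y=s\}=\{\bar\sigma\in B_s\}\,\sqcup\, E_s$ with $E_s\subseteq (B_{+1}\cup B_{-1})^c$.

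\textbf{Step 2 (expanding $U$).} Use \eqref{eq:rewrite U in term of average}. Since $|\tau_j|=1$ pointwise, $\norm{\bar\tau}\le 1$. The last term therefore lies in $[0,bt^2/(2r)]$, which is at most $bt\delta$ once $r\ge t/(2\delta)$.

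On $B_s$, write $\langle\bar\tau,\bar\sigma\rangle = sm_*\langle\bar\tau,\mathbf 1\rangle+\langle\bar\tau,\bar\sigma-sm_*\mathbf 1\rangle$. By Cauchy--Schwarz the error term is at most $\delta$ in absolute value. Because $t\langle\bar\tau,\mathbf 1\rangle=\sum_{j\in S}\langle\tau_j,\mathbf 1\rangle$, on $B_s$ we get
\[
U = \tfrac{br}{2}\norm{\bar\sigma}^2 + bsm_*\textstyle\sum_{j\in S}\langle\tau_j,\mathbf 1\rangle \pm 2bt\delta .
\]
Integrating gives the good contribution $e^{\pm 2bt\delta}\, e^{bsm_*\sum_j\langle\tau_j,\mathbf 1\rangle}\int \mathbbm 1_{\bar\sigma\in B_s}e^{\frac{br}{2}\norm{\bar\sigma}^2}d\nu^R$.

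The global flip $\sigma\mapsto-\sigma$ preserves $\nu_\eta$, since the sign $u$ is uniform in \cref{def:general path definition}. It maps $B_{-1}$ to $B_{+1}$ and preserves $\norm{\bar\sigma}$, so the last integral equals $A_0$ for both values of $s$.

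\textbf{Step 3 (the bad part).} On $E_s$ we use the crude bound $U\le \frac{br}{2}\norm{\bar\sigma}^2+bt+bt\delta$. Let $Z_r=\int e^{\frac{br}{2}\norm{\bar\sigma}^2}d\nu^R$. By \cref{thm:concentration on two wells} with $n=r\ge n_0(\delta)$,
\[
\int\mathbbm 1_{E_s}e^{\frac{br}{2}\norm{\bar\sigma}^2}d\nu^R\le Z_r e^{-rc\delta^2}.
\]
The same theorem, together with the symmetry from Step 2, gives $Z_r=2A_0+Z_r\cdot O(e^{-rc\delta^2})$, so $Z_r\le 3A_0$ for $r$ large. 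The good contribution is at least $A_0e^{-2bt\delta}e^{-btm_*}$, using $|\langle\tau_j,\mathbf 1\rangle|\le 1$. Hence the ratio of bad to good is at most $3e^{4bt}e^{-rc\delta^2}$. Choosing $r_0(\delta,t,b,\eta)$ so that this ratio is at most $e^{bt\delta}-1$ (and so that $r_0\ge n_0(\delta)$ and $r_0\ge t/(2\delta)$), the bad part only inflates the upper bound by a factor $e^{bt\delta}$. The lower bound needs no correction, because the bad part is nonnegative. Together these give $Q_s(\boldsymbol\tau)=e^{\pm3bt\delta}A_0e^{bsm_*\sum_j\langle\tau_j,\mathbf 1\rangle}$.

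\textbf{Expected obstacle.} The main obstacle is bookkeeping rather than any new idea. One must check carefully that $Y=s$ exactly on $B_s$, that $Z_r$ is comparable to $A_0$ (this requires both the concentration estimate and the flip symmetry), and that the dependence of $r_0$ on $t$ absorbs the $e^{O(bt)}$ loss from the crude bound on $E_s$.
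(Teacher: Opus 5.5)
Your proposal is correct and is essentially the paper's argument: you split $\{Y=s\}$ into $\{\bar\sigma\in B_s\}$ and a remainder inside $(B_{+1}\cup B_{-1})^c$. You then compare the $B_s$ part to $A_0e^{bsm_*\sum_{j}\langle\tau_j,\mathbf 1\rangle}$ via Cauchy--Schwarz once $r\gtrsim t/\delta$, and make the remainder negligible using \cref{thm:concentration on two wells} with $n=r$ together with the flip symmetry. The only slip is cosmetic: with the bounds you state, the exponent in the bad-to-good ratio is $bt(1+3\delta+m_*)$, which can exceed $4bt$ when $\delta$ is close to $m_*$, so write $3e^{5bt}e^{-rc\delta^2}$ instead; this only changes $r_0$.
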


We prove \cref{thm:concentration on two wells} by establishing a large deviation principle (LDP) for $\bar{\sigma} = \frac{1}{n}\sum_{i=1}^n \sigma_i$ where $ \boldsymbol{\sigma} =(\sigma_1,\cdots, \sigma_n)\sim \mu_n$ with a suitable rate function. 


All LDPs below have speed $n$ and use the norm topology
on the Hilbert space $L^2(\mS)$.

\begin{definition}\label{def:LDP}
A sequence of probability measures $\{\rho_n\}_{n\ge 1}$ satisfies the
large deviation principle with rate function $I$\footnote{Here, we implicitly assume that $I$ is a lower semicontinuous map into $[0,\infty]$.} if, for every measurable
set $\Gamma$,
\[
-\inf_{x\in \Gamma^\circ} I(x)
\le
\liminf_{n\to\infty} \frac{1}{n}\log \rho_n(\Gamma)
\le
\limsup_{n\to\infty} \frac{1}{n}\log \rho_n(\Gamma)
\le
-\inf_{x\in \overline{\Gamma}} I(x),
\]
where $\Gamma^\circ$ and $\overline{\Gamma}$ denote  the interior and closure of $\Gamma$, respectively.
\end{definition}

\begin{proposition}\label{lem:application of LDP}
 Let $\sigma_1,\dots,\sigma_n$ be i.i.d.\ samples from $\nu_\eta$, and let $\tilde{\rho}_n$ be the distribution of their average $ \bar{\sigma} = \frac{1}{n} \sum_{i=1}^n \sigma_i.$  
 
 Let $ F(\sigma) = \frac{b}{2} \norm{\sigma}^2 $
and $\rho_n = \tilt_{n F} \tilde{\rho}_n $, i.e., $ d\rho_n(\sigma) =\frac{\exp(n F (\sigma)) d\tilde{\rho}_n( \sigma) }{\E_{\tilde{\rho}_n} [\exp(n F(\sigma)) ]}   .$ Then  
$\rho_n$ is exactly the law of the average $\bar{\sigma}' = \frac{1}{n} \sum_{i=1}^n \sigma'_i$ where $(\sigma'_1,\dots, \sigma'_n)\sim \mu_n.$
 
 The sequence $(\tilde{\rho}_n)_{n\in\N}$ satisfies a large deviation principle (LDP) with rate \begin{equation}\label{eq:log moment generating functional} \tilde{I} (x) = \sup_{\lambda \in L^2(\mS)} (\langle \lambda, x\rangle - \Lambda_{\nu_\eta}(\lambda) )\quad \text{where}\quad 
 \Lambda_{\nu_\eta}(\lambda) = \log \E_{\sigma \sim \nu_\eta}[\exp(\langle \lambda, \sigma\rangle)].    
 \end{equation}

The sequence $ (\rho_n)_n$ satisfies a large deviation principle (LDP) with rate 
\begin{equation} \label{eq:LDP rate of tilt}
  J (x) = I(x) -\inf I \quad \text{where} \quad I=
\tilde{I} -F
\end{equation}
Moreover, 
\begin{equation}\label{eq:compare LDP rate with Phi}
    I(x)  \geq -\Phi_{b,\eta}(b x) \quad \text{where} \quad \Phi_{b,\eta}: L^2(\mS)\to \R,\quad \Phi_{b,\eta}(x) =  \Lambda_{\nu_\eta}(x) - \frac{1}{2b} \norm{x}^2.
\end{equation}
 \end{proposition}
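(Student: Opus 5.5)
The proposition has three parts: identify $\rho_n$ as the law of $\bar\sigma'$, establish the LDP for $\tilde\rho_n$, and then the tilted LDP together with the comparison bound \eqref{eq:compare LDP rate with Phi}. I treat them in that order.

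\textbf{Identification of $\rho_n$.} Since $U(\sigma_1,\dots,\sigma_n)=\frac{bn}{2}\norm{\bar\sigma}^2=nF(\bar\sigma)$ depends on $\boldsymbol\sigma$ only through $\bar\sigma$, pushing $\mu_n=\tilt_U\nu$ forward under $\boldsymbol\sigma\mapsto\bar\sigma$ gives density proportional to $e^{nF(x)}$ against the pushforward of $\nu$. That pushforward is $\tilde\rho_n$. The normalizing constants agree as well, so the law of $\bar\sigma'$ is exactly $\tilt_{nF}\tilde\rho_n=\rho_n$.

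\textbf{LDP for $\tilde\rho_n$.} I apply Cramér's theorem in a separable Banach space (Donsker--Varadhan; see Dembo--Zeitouni, Theorem 6.1.3 and Corollary 6.1.6). Each sample $\sigma_i$ takes values in the closed unit ball of $L^2(\mS)$, since $|\sigma_i(t)|=1$. Hence $\E e^{\alpha\norm{\sigma_i}}<\infty$ for every $\alpha$, and this exponential integrability gives the full LDP with good convex rate function equal to the Legendre transform $\tilde I=\Lambda_{\nu_\eta}^*$. The one point needing care is measurability: $\sigma\mapsto\sigma\in L^2(\mS)$ must be Borel measurable under $\nu_\eta$. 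This holds because the map $(d,t^{(1)},\dots,t^{(d)},u)\mapsto\sigma$ is continuous into $L^2$ for each fixed $d$. I also note that $\tilde\rho_n$ is supported on the unit ball $\bar B$, so $\tilde I=\infty$ off $\bar B$.

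\textbf{Tilted LDP.} This is Varadhan's lemma together with its tilting corollary (Dembo--Zeitouni, Theorem 4.3.1 and Exercise 4.3.11). The function $F$ is continuous on $L^2(\mS)$, and everything lives on the bounded set $\bar B$, where $F\le b/2$. So the tail condition holds trivially and $\frac1n\log\E_{\tilde\rho_n}e^{nF}\to\sup(F-\tilde I)=-\inf I$. The tilted measures $\rho_n$ then satisfy the LDP with rate $\tilde I-F-\inf(\tilde I-F)=J$. For lower semicontinuity: $\tilde I$ is a good rate function and $F$ is continuous and bounded on $\bar B$, so $I$ is lower semicontinuous and $\inf I$ is finite, attained on the compact level sets.

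\textbf{The bound \eqref{eq:compare LDP rate with Phi}.} I take $\lambda=bx$ in the supremum defining $\tilde I$:
\[
\tilde I(x)\ge b\norm{x}^2-\Lambda_{\nu_\eta}(bx),
\]
so
\[
I(x)\ge \tfrac{b}{2}\norm{x}^2-\Lambda_{\nu_\eta}(bx)=-\Bigl(\Lambda_{\nu_\eta}(bx)-\tfrac{1}{2b}\norm{bx}^2\Bigr)=-\Phi_{b,\eta}(bx).
\]
When $\tilde I(x)=\infty$ the inequality is trivial.

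The main obstacle is verifying the hypotheses of infinite-dimensional Cramér, namely measurability, separability and exponential tightness. Boundedness of the support in the unit ball should settle these. If Dembo--Zeitouni's formulation needs more, I would fall back on the Bahadur--Zabell version for i.i.d.\ vectors in separable Banach spaces, which requires only measurability and gives the weak LDP with convex rate $\Lambda^*$. I would then upgrade to the full LDP via exponential tightness, obtained from the compact-support-like structure. Here the random jump count has Poisson tails, so functions with at most $k$ jumps form a compact set in $L^2$.
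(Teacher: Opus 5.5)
Your proposal is correct and follows essentially the same route as the paper: Cramér's theorem in the separable Banach space $L^2(\mS)$, with exponential moments automatic because $\nu_\eta$ lives in the unit ball. This is followed by Varadhan's lemma and tilting, using that $F$ is continuous and bounded on the unit ball, and finally the substitution $\lambda=bx$ in the Legendre transform. You also write out the identification of $\rho_n$ as the law of $\bar\sigma'$ and the measurability of $\sigma\mapsto\sigma\in L^2(\mS)$, which the paper leaves implicit; the Bahadur--Zabell fallback is not needed, since the exponential-moment form of Cramér's theorem applies directly.
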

 \begin{proof}[Proof of \cref{lem:application of LDP}]
     That the sequence $(\tilde{\rho}_n)_{n\in\N}$ satisfies an LDP with rate $ \tilde{I}$ follows from Cramer's theorem for Banach spaces \cite[Theorem 2.3]{PE99}: $ L^2(\mS)$ is a separable Banach space (see \cite[Section 4.3, Theorem 4.13]{brezis2011functional} and \cite[Section 7B, Theorem 7.24]{axler2020measure}), and for any $ t >0$, $\E_{x\sim \nu_\eta}[\exp(t \norm{x})] <\infty$  since $\nu_\eta$ is supported on the set of functions with $L_2$ norm bounded by $1$, i.e., $\mathcal{B} = \set{ f\in L^2(\mS): \norm{f}_2\leq 1}.$ 
     Since $\mathcal{B}$ is convex, the measures $ (\tilde{\rho}_n)_n$ are also supported on $\mathcal{B}.$ Since $F$ is continuous and bounded on $\mathcal{B},$   
the first statement and Varadhan's theorem (see \cite[Chapter 3, Exercise 3.9]{RS15}) imply that the sequence of measures $(\rho_n)_n$ satisfies an LDP with rate $J (x) = I(x)-\inf I$, where $I = \tilde{I}-F.$
Next, substituting $\lambda = bx$ in the definition of $ \tilde{I},$ we obtain:
\[ \tilde{I}(x)\geq b \norm{x}^2 - \Lambda_{\nu_\eta}( b x)  \]
thus
\[ I(x) = \tilde{I}(x)- F(x) \geq F(x) -  \Lambda_{\nu_\eta}( b x)  = -\Phi_{b,\eta} (bx) \]
\end{proof}
Next, we analyze the functional $\Phi_{b,\eta}$ from \eqref{eq:compare LDP rate with Phi}.
We need the following proposition, which follows from similar arguments as in \cite{CCIV08}. We defer the proof to \cref{subsec:defer from lower bound}.
\begin{proposition}\label{prop:constant field deviation}
Let $\phi(r) =\Phi_{b,\eta}(r\mathbf{1})$, $ R(r) =\sqrt{\eta^2+r^2}. $ We have:
\begin{equation}\label{eq:derivative of Phi at constant field}
       \frac{d}{d r} \Phi_{b,\eta}(r\mathbf{1})  =  \phi'(r) = r(\frac{\tanh R(r)}{R(r)} - \frac{1}{b}) 
   \end{equation}
    There exists a unique $m_*\in (0,1)$, such that, writing $ R_*=R(bm_*),$ we have:
    \begin{equation}\label{eq:def of critical magnetization}
        \frac{d}{d r} \Phi_{b,\eta}(r\mathbf{1})|_{r = b m_*} =\phi'(b m_*) =0 \Leftrightarrow  \frac{\tanh R_* }{R_*} = \frac{1}{b}.
    \end{equation}
     Moreover, for $c_1 =\frac{\eta(\frac{b}{\eta} \tanh (\eta)-1) }{8}>0$,
\begin{equation}\label{eq:bound for constant field}
 \forall m\geq 0: \phi( bm_*) - \phi(b m)  \geq c_1 (m-m_*)^2  
\end{equation}
\end{proposition}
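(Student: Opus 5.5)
The plan is to compute $\phi$ in closed form and then reduce everything to monotonicity of $g(R):=\tanh R/R$. With $\lambda=r\mathbf 1$ we have $\langle\lambda,\sigma\rangle=r\int_{\mS}\sigma(t)\,dt$. I would evaluate $\E_{\sigma\sim\nu_\eta}[e^{r\int\sigma}]$ with \cref{prop:derive expression for observables} specialized to a single site ($n=1$, $J=0$, $h^z=r$, $S=\emptyset$, $O=I$). This gives $\Tr(e^{rZ+\eta X})=\Tr(e^{\eta X})\,\E_{\nu_\eta}[e^{V(\sigma)}]$ with $V(\sigma)=r\int\sigma$. Since $rZ+\eta X$ has eigenvalues $\pm R(r)$, this yields
\[
\Lambda_{\nu_\eta}(r\mathbf 1)=\log\cosh R(r)-\log\cosh\eta ,\qquad
\phi(r)=\log\cosh R(r)-\log\cosh\eta-\frac{r^2}{2b}.
\]
Differentiating with $R'(r)=r/R$ gives $\phi'(r)=r\bigl(g(R(r))-\tfrac1b\bigr)$, which is \eqref{eq:derivative of Phi at constant field}.

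For existence and uniqueness of $m_*$, I would first check that $g$ is strictly decreasing on $(0,\infty)$. Indeed $g'(R)=(R\,\mathrm{sech}^2R-\tanh R)/R^2<0$, because $\sinh(2R)>2R$. Also $r\mapsto R(r)$ is strictly increasing on $[0,\infty)$. At $r=0$ we have $g(R(0))=\tanh\eta/\eta>1/b$ by \eqref{eq:assumption on b and eta}, and $g(R(r))\to0$ as $r\to\infty$. Hence there is a unique $r_*>0$ with $g(R(r_*))=1/b$, and $\phi'>0$ on $(0,r_*)$, $\phi'<0$ on $(r_*,\infty)$. Set $m_*=r_*/b$. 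Then $R_*=b\tanh R_*<b$, and since $r_*<R_*$ we get $m_*<1$.

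For the quadratic bound \eqref{eq:bound for constant field}, I would set $\psi(m)=\phi(bm)$. Then $\psi'(m)=bm\,k(m)$ with $k(m)=b\,g(R(bm))-1$, where $k$ is decreasing, $k(0)=b\tanh\eta/\eta-1=8c_1/\eta$, and $k(m_*)=0$. The key step is a linear lower bound $|k(m)|\ge \kappa\,|m-m_*|/\max(m,m_*)$, or directly $|\psi'(m)|\ge 2c_1|m-m_*|$, which after integrating $\psi(m_*)-\psi(m)=\int_m^{m_*}\psi'$ gives the claim. I would treat three regions.

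\emph{Region $m\ge m_*$.} Use $\tanh R\le 1$ to get $\psi'(m)\le b^2m/R(bm)-bm\cdot$, and compare $g(R(bm))$ with $g(R_*)$ via $|g'|$ on $[R_*,R(bm)]$. For very large $m$, instead use the crude bound $\psi(m)\le\log\cosh\sqrt{\eta^2+b^2m^2}-bm^2/2$, whose quadratic term dominates.

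\emph{Region $m\le m_*$.} Use that $\psi$ is increasing there. Write $\psi(m_*)-\psi(m)=\int_m^{m_*}bu\,k(u)\,du$ and apply the convexity-type inequality $k(u)\ge k(0)(1-u/m_*)$. This inequality is plausible because $g(R(r))$ as a function of $r^2$ is convex-like, and checking it is part of the work. This gives a lower bound of order $b\,k(0)\int_m^{m_*}u(1-u/m_*)\,du$.

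\emph{Small $m$ (near $0$).} The integrand $bu\,k(u)$ vanishes at $u=0$, so I would use instead that $\psi(m_*)-\psi(0)$ is bounded below by a constant times $m_*^2$, and that $(m-m_*)^2\le m_*^2$ there.

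The main obstacle is getting the explicit constant $c_1=\eta(b\tanh\eta/\eta-1)/8$ uniformly in all three regions, since $|g'(R)|$ decays like $R^{-2}$ for large $R$. If the region-by-region bounds do not give $c_1$ directly, I would first try to bound $\psi(m_*)-\psi(m)$ from below by comparing $\log\cosh R$ with its tangent parabola at $R_*$, and only fall back on case analysis if that fails.
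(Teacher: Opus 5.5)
Your closed form for $\phi$, the formula for $\phi'$, and the existence, uniqueness and range of $m_*\in(0,1)$ are correct and follow the paper. The gap is the quadratic bound \eqref{eq:bound for constant field}, which your proposal does not establish.

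\textbf{The step that fails.} Your region $m\le m_*$ rests on the inequality $k(u)\ge k(0)(1-u/m_*)$ on $[0,m_*]$, and this is false in general. For $R\gg 1$ one has $b\,g(R)\approx b/R$, so $k(u)\approx 1/u-1$. This is convex on most of $[0,m_*]$ when $b$ is large. Concretely, take $\eta=0.1$, $b=10$. Then $m_*\approx 1$ and $k(0)\approx 8.97$. But $k(1/2)\approx 1.0$, while $k(0)(1-\tfrac{1}{2m_*})\approx 4.5$.

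\textbf{The other steps.} Your small-$m$ step runs the wrong way. Since $\psi$ increases on $[0,m_*]$, we have $\psi(m_*)-\psi(m)\le\psi(m_*)-\psi(0)$, so a lower bound on $\psi(m_*)-\psi(0)$ says nothing about $\psi(m_*)-\psi(m)$. The region $m\ge m_*$ is only sketched, with no constant.

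The fallback fails too. Bound $\log\cosh$ above by its tangent parabola at $R_*$ with the global curvature bound $\sech^2\le 1$, and use $\tanh R_*=R_*/b$. This gives only $\psi(m_*)-\psi(m)\ge\tfrac12(R-R_*)^2(\tfrac1b-1)$, which is negative because $b>\eta/\tanh\eta>1$. More fundamentally, $\chi(R):=\log\cosh R-\frac{R^2-\eta^2}{2b}$ has $\chi''(R)=\sech^2R-\tfrac1b$. This is positive near $R=\eta$ once $b>\cosh^2\eta$. So $\chi$ is not concave, and no tangent-parabola comparison with a uniform curvature bound can work in general.

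\textbf{The missing idea.} Factor $\psi'(m)=\frac{dR}{dm}\,G(R)$, with $G(R)=\chi'(R)=\tanh R-\frac{R}{b}$ and $\frac{dR}{dm}=\frac{b^2m}{R}$. Do not factor it as $bm\,k(m)$ with $k=\frac{b}{R}G$. Dividing by $R$ is exactly what creates the convex $1/R$ tail and the $R^{-2}$ decay of $|g'|$ that you identified as the obstacle.

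Since $G''(R)=-2\tanh R\sech^2R<0$, $G$ is concave on $(0,\infty)$, with $G(\eta)>0=G(R_*)$. Compare $G$ with the chord through $(\eta,G(\eta))$ and $(R_*,0)$: it lies above the chord on $[\eta,R_*]$ and below the extended chord beyond $R_*$. This gives $|G(R)|\ge c_0|R-R_*|$ for all $R\ge\eta$, with $c_0=G(\eta)/(R_*-\eta)$.

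Integrating in $R$ gives $\phi(bm_*)-\phi(bm)=\int_{R(bm)}^{R_*}G\ge\frac{c_0}{2}(R(bm)-R_*)^2$. Then use $|R(bm)-R_*|=b|m-m_*|\frac{bm+bm_*}{R(bm)+R_*}\ge b|m-m_*|\frac{bm_*}{2R_*}$. This converts the bound into one in $(m-m_*)^2$ with constant $\frac{b(R_*+\eta)}{R_*^2}\,c_1$, which exceeds $c_1$ since $R_*<b$. It holds for all $m\ge0$ at once, with no case split. This is the paper's argument.
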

We also need the following lemma:
\begin{lemma}\label{lem:properties of Phi}
Let $m_*\in (0,1)$ be from \cref{prop:constant field deviation}. There exists a constant $ c\equiv c(b,\eta) >0$ dependent only on $b, \eta $ such that
\begin{equation}\label{eq:distance to global well}
    \forall q\in L^2 (\mS): \Phi_{b,\eta} (b m_* \mathbf{1}) - \Phi_{b,\eta} (b q) \geq c \min_{s\in\set{\pm 1}} \norm{q - m_* s \mathbf{1}}^2 
\end{equation}
\end{lemma}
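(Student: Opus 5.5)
The plan is to reduce the infinite-dimensional inequality to the one-dimensional bound of \cref{prop:constant field deviation} by a pointwise (in imaginary time) upper bound on $\Lambda_{\nu_\eta}$, and then to deal separately with profiles $q$ that change sign.

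\emph{Step 1 (pointwise Hölder bound).} Write $g(r):=\Lambda_{\nu_\eta}(r\mathbf 1)=\log\frac{\cosh R(r)}{\cosh\eta}$, so that $\phi(r)=g(r)-\frac{r^2}{2b}$. For a step function $\lambda=\sum_k \lambda_k\mathbbm 1_{[k/M,(k+1)/M)}$, the same Trotter/Feynman--Kac computation as in \cref{prop:derive expression for observables} (with $n=1$, $S=\emptyset$) gives
\[
e^{\Lambda_{\nu_\eta}(\lambda)}=\frac{\Tr\prod_k e^{(\lambda_kZ+\eta X)/M}}{\Tr e^{\eta X}}.
\]
The generalized Hölder inequality for Schatten norms, $|\Tr(A_1\cdots A_M)|\le\prod_k\|A_k\|_{M}$, together with $\|e^{H/M}\|_M=(\Tr e^H)^{1/M}$, yields
\[
\Lambda_{\nu_\eta}(\lambda)\le\int_{\mS} g(\lambda(t))\,dt.
\]
This extends to all $\lambda\in L^2(\mS)$ by density and continuity of $\Lambda_{\nu_\eta}$ (which is $1$-Lipschitz, since $\nu_\eta$ lives on the unit ball). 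Hence
\[
\Phi_{b,\eta}(bq)\le\int_{\mS}\phi(bq(t))\,dt,
\]
with equality when $q$ is constant, so $\Phi_{b,\eta}(bm_*\mathbf 1)=\phi(bm_*)$. Since $\phi$ is even, \eqref{eq:bound for constant field} gives, pointwise,
\[
\phi(bm_*)-\phi(bq(t))\ge c_1\min_{s}\,(q(t)-sm_*)^2 .
\]
Integrating,
\[
\Phi_{b,\eta}(bm_*\mathbf 1)-\Phi_{b,\eta}(bq)\ge c_1\int_{\mS}\min_s\,(q(t)-sm_*)^2\,dt=:c_1 D_{\rm loc}(q).
\]

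\emph{Step 2 (reduction to mixed-sign profiles).} Let $T_+=\{t:q(t)\ge0\}$ and $T_-=\mS\setminus T_+$, with $p:=\min(|T_+|,|T_-|)$ and, say, $T_+$ the larger set. A triangle inequality gives
\[
\|q-m_*\mathbf 1\|^2\le 2D_{\rm loc}(q)+8m_*^2p .
\]
So it suffices to also prove a gain
\[
\Phi_{b,\eta}(bm_*\mathbf 1)-\Phi_{b,\eta}(bq)\ge c_2\,p-C\,D_{\rm loc}(q),
\]
and then take a suitable convex combination with Step 1. (For $D_{\rm loc}$ large the Step 1 bound alone already dominates, since $\phi$ decays quadratically at infinity.)

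\emph{Step 3 (cost of domain walls).} This is the main obstacle: Hölder is tight for every sign pattern, so a profile equal to $m_*$ on half the circle and $-m_*$ on the other half is invisible to Step 1. I would first replace $q$ by $\tilde q=m_*\operatorname{sign}(q)$, paying $O(\sqrt{D_{\rm loc}})$ via the Lipschitz bound on $\Lambda$; then only the two-valued case needs to be treated. For $\tilde q$, group the time-ordered product into maximal runs of constant sign, with $H_\pm=\pm bm_*Z+\eta X$. The top eigenvectors $v_\pm$ of $H_\pm$ satisfy $|\langle v_+,v_-\rangle|<1$ because $bm_*\ne0$ and $\eta>0$, and the spectral gap of $H_\pm$ equals $2R_*$.

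I would first try to show that a run of length $\ell$ multiplies the trace by at most $e^{\ell\lambda_{\max}(H_\pm)}$ times a projection onto $v_\pm$ up to error $e^{-2R_*\ell}$, so that each sign change costs a factor $\theta<1$ depending only on $b,\eta$. When $p$ is not small there is at least one wall pair, giving a constant gain $\log(1/\theta)\ge c_2\,p$ because $p\le1$. The delicate part is runs that are short compared with $1/R_*$, which may gain nothing. For these I would instead lower $p$ artificially: absorbing such short runs into the neighbouring sign changes $\tilde q$ on a set of small total measure. If that set is large, I would fall back on a direct strict-Hölder argument on each block of length of order $1/R_*$, using compactness of the finite-dimensional set of two-valued step patterns on such a block to extract a uniform constant gain proportional to the minority measure in the block. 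Summing these gains over blocks gives the required $c_2\,p$.
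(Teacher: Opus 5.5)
Your Steps 1 and 2 are fine and use the same ingredients as the paper. Step 1 is the pointwise bound $\Phi_{b,\eta}(bq)\le\int_{\mS}\phi(bq(t))\,dt$, which the paper imports from CCIV08 and your Schatten--H\"older argument proves. Step 2 splits the target into $D_{\mathrm{loc}}(q)=\norm{|q|-m_*\mathbf{1}}^2$ plus a sign-change term. The gap is Step 3, which carries the whole content of the lemma, and it breaks in two places.

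\textbf{(i) The replacement cost is too large.} Replacing $q$ by $\tilde q=m_*\operatorname{sign}(q)$ through the Lipschitz bound costs $C\sqrt{D_{\mathrm{loc}}}$. So you obtain $c_2p-C\sqrt{D_{\mathrm{loc}}}$, not the form $c_2p-CD_{\mathrm{loc}}$ that Step 2 needs. This matters. Take $q=m_*$ off a set of measure $p$ and $q=-m_*+\sqrt p$ on it. Then $D_{\mathrm{loc}}=p^2$, Step 1 gives only $c_1p^2$, and $(c_2-C)p$ need not be positive, whereas the lemma asserts a bound of order $p$. A repair needs a first-order cancellation, e.g.\ $\nabla^2\Lambda_{\nu_\eta}\preceq I$ together with $\norm{\nabla\Lambda_{\nu_\eta}(b\tilde q)-\tilde q}\lesssim\sqrt p$; you do not supply this.

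\textbf{(ii) The two-valued bound is never proved.} This is the more serious problem: $\Phi_{b,\eta}(bm_*\mathbf{1})-\Phi_{b,\eta}(b\tilde q)\ge c_2p$ is not established. The per-wall factor $\theta<1$ needs runs long compared with $1/R_*$ on both sides of each wall. That fails exactly in the regime that decides linearity in $p$: small $p$, or minority mass spread over many short runs. There the gain is not zero but first order: a single minority run of length $\ell$ lowers $\Lambda_{\nu_\eta}$ by $2bm_*^2\ell+O(\ell^2)$, since it is a field opposing the magnetization $m_*$. So the mechanism you need is first-order, not spectral-gap. Your fallback cannot capture it. Two-valued patterns on a block do not form a finite-dimensional set, since a block can contain arbitrarily many sign changes. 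Moreover, compactness can only give a positive infimum away from the degenerate pattern, never a gain proportional to the minority measure as that measure tends to $0$.

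\textbf{The missing idea: compare $q$ with $|q|$, not with $m_*\operatorname{sign}(q)$.}
\begin{itemize}
\item The paper uses the exact identity $\min_s\norm{q-sm_*\mathbf{1}}^2=\norm{|q|-m_*\mathbf{1}}^2+2m_*(\norm{q}_1-|\int_{\mS}q\,dt|)$.
\item It splits $\Phi_{b,\eta}(bm_*\mathbf{1})-\Phi_{b,\eta}(bq)$ into $[\Phi_{b,\eta}(bm_*\mathbf{1})-\Phi_{b,\eta}(b|q|)]+[\Lambda_{\nu_\eta}(b|q|)-\Lambda_{\nu_\eta}(bq)]$. The quadratic terms cancel exactly because $\norm{|q|}=\norm{q}$, so there is no replacement cost at all.
\item The first bracket is your Step 1.
\item The second bracket is \cref{prop:sign change}. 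Expand $\cosh\langle|f|,\sigma\rangle-\cosh\langle f,\sigma\rangle$ in powers. Conditionally on $d$ jumps, $\E[\prod_{i=1}^{2k}\sigma(t_i)]=(1-2u)^d\ge0$, where $u$ is the total length of $\bigcup_i(t_{2i-1},t_{2i}]$, so every term has nonnegative expectation. Keep only the zero-jump sector (probability $\sech\eta$), where $\langle f,\sigma\rangle=\pm\int_{\mS}f\,dt$. This yields $\Lambda_{\nu_\eta}(|f|)-\Lambda_{\nu_\eta}(f)\ge\sech(\eta)e^{-\norm{f}_1}(\cosh\norm{f}_1-\cosh|\int_{\mS}f\,dt|)$, which is linear in $\norm{f}_1-|\int_{\mS}f\,dt|$ once $\norm{f}_1$ is bounded above and away from $0$.
\item That bound on $\norm{q}_1$ holds when $\norm{|q|-m_*\mathbf{1}}<m_*/2$. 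In the complementary case Step 1 alone suffices, because then the sign-change term is at most a constant times $\norm{|q|-m_*\mathbf{1}}^2$.
\end{itemize}
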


The following proposition characterizes the rate function $J$ from \cref{lem:application of LDP}.
\begin{proposition}\label{prop:inf of tilde I}
    For $I$ from \eqref{eq:LDP rate of tilt} and $m_*$ from \cref{prop:constant field deviation}, 
$\inf I = I ( m_* \mathbf{1}) = -\Phi_{b,\eta}(b m_* \mathbf{1}).$ Hence, the sequence $ (\rho_n)_n$ from \cref{lem:application of LDP} satisfies a large deviation principle (LDP) with rate \[J(x) \geq \Phi_{b,\eta} (b m_* \mathbf{1}) -\Phi_{b,\eta}  (b x) \geq c \min_{s\in\set{\pm 1}} \norm{x - m_* s \mathbf{1}}^2 ,\]
where $c$ is as in \cref{lem:properties of Phi}.
\end{proposition}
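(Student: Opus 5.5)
The plan is to prove both inequalities for $I=\tilde I-F$ directly, and then read off the bound on $J$ from \cref{lem:properties of Phi}.

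\textbf{Lower bound on $\inf I$.} By \eqref{eq:compare LDP rate with Phi}, $I(x)\ge -\Phi_{b,\eta}(bx)$ for every $x\in L^2(\mS)$. Applying \cref{lem:properties of Phi} with $q=x$ gives $-\Phi_{b,\eta}(bx)\ge -\Phi_{b,\eta}(bm_*\mathbf 1)+c\min_s\norm{x-m_*s\mathbf 1}^2$. Hence
\[
\inf I\ge -\Phi_{b,\eta}(bm_*\mathbf 1).
\]

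\textbf{Matching upper bound.} It suffices to show $I(m_*\mathbf 1)\le -\Phi_{b,\eta}(bm_*\mathbf 1)$. I would use the stationarity condition from \cref{prop:constant field deviation}. Since $\Lambda:=\Lambda_{\nu_\eta}$ is convex and Fréchet differentiable on $L^2(\mS)$, its Legendre transform satisfies $\tilde I(\nabla\Lambda(\lambda))=\langle\lambda,\nabla\Lambda(\lambda)\rangle-\Lambda(\lambda)$. Differentiability holds because $\nu_\eta$ is supported on the unit ball, so the moment generating functional is smooth. Take $\lambda=bm_*\mathbf 1$. The key claim is
\[
\nabla\Lambda(bm_*\mathbf 1)=m_*\mathbf 1.
\]
By rotation invariance of $\nu_\eta$ on $\mS$ and of the constant tilt, $\nabla\Lambda(r\mathbf 1)=g(r)\mathbf 1$ for some scalar $g$. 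Differentiating $r\mapsto\Lambda(r\mathbf 1)$ gives $g(r)=\frac{d}{dr}\Lambda(r\mathbf 1)=\phi'(r)+r/b$, which by \eqref{eq:derivative of Phi at constant field} equals $r\tanh R(r)/R(r)$. At $r=bm_*$, the condition \eqref{eq:def of critical magnetization} gives $g=bm_*\cdot\frac1b=m_*$.

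With this, $\tilde I(m_*\mathbf 1)=bm_*^2-\Lambda(bm_*\mathbf 1)$. Therefore
\[
I(m_*\mathbf 1)=bm_*^2-\Lambda(bm_*\mathbf 1)-\tfrac b2m_*^2=-\Lambda(bm_*\mathbf 1)+\tfrac1{2b}\norm{bm_*\mathbf 1}^2=-\Phi_{b,\eta}(bm_*\mathbf 1).
\]
The equality $\tilde I(\nabla\Lambda(\lambda))=\langle\lambda,\nabla\Lambda(\lambda)\rangle-\Lambda(\lambda)$ needs some care. The sup in the definition of $\tilde I$ is of the concave function $\mu\mapsto\langle\mu,m_*\mathbf 1\rangle-\Lambda(\mu)$, whose gradient vanishes at $\mu=bm_*\mathbf 1$. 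A stationary point of a concave function is a global maximizer, so the sup is attained there. Combined with the lower bound, this gives $\inf I=I(m_*\mathbf 1)=-\Phi_{b,\eta}(bm_*\mathbf 1)$.

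\textbf{Consequence for $J$.} Now $J=I-\inf I\ge -\Phi_{b,\eta}(bx)+\Phi_{b,\eta}(bm_*\mathbf 1)$ by \eqref{eq:compare LDP rate with Phi}, and \cref{lem:properties of Phi} bounds this below by $c\min_s\norm{x-m_*s\mathbf 1}^2$. The LDP itself was already established in \cref{lem:application of LDP}.

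The main obstacle is justifying differentiability of $\Lambda$ on $L^2(\mS)$ and identifying $\nabla\Lambda(r\mathbf 1)$ as a constant function. Differentiability should follow from dominated convergence, since $\norm{\sigma}\le1$. The identification should follow from the translation invariance of $\nu_\eta$ under shifts of $\mS$: the uniform jump times and uniform sign make $\E_{T_{r\mathbf 1}\nu_\eta}[\sigma(t)]$ independent of $t$.
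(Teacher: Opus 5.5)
Your proposal is correct and follows essentially the same route as the paper. The paper also gets the lower bound $\inf I\ge -\Phi_{b,\eta}(bm_*\mathbf 1)$ from \eqref{eq:compare LDP rate with Phi} and \cref{lem:properties of Phi}, and gets the matching value at $m_*\mathbf 1$ by showing $\nabla\Lambda_{\nu_\eta}(bm_*\mathbf 1)=m_*\mathbf 1$ via translation invariance and \eqref{eq:def of critical magnetization}. Your first-order optimality argument for the concave map $\mu\mapsto\langle\mu,m_*\mathbf 1\rangle-\Lambda_{\nu_\eta}(\mu)$ is exactly the equality case of Fenchel--Young that the paper invokes; there, the G\^ateaux derivative from \cite{LRRS21}, used as a subgradient, suffices, so Fr\'echet differentiability is more than you need.
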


\begin{remark}\label{remark:fix gap in CCIV08}
\cite{CCIV08}
asserts the same LDP rate as in \cref{prop:inf of tilde I}, but their proof has a gap: it relies on the false statement  that $\norm{|q|- m_*\mathbf{1}}^2    \geq  c' \min_{s\in \set{\pm 1}} \norm{q - s m_* \mathbf{1}}^2$ for some constant $c'> 0$ (see \cite[Eq (7.4)]{CCIV08}). 
To see why this statement is false,
take  $q (t) =\begin{cases} m_* & t\in [0,1/2) \\- m_* & t\in [1/2,1)\end{cases}.$ Then    $\norm{|q|- m_*\mathbf{1}}^2  = 0$ but $\min_{s\in \set{\pm 1}} \norm{q - s m_* \mathbf{1}}^2 = 2 m_*^2 >0.$ \cref{prop:sign change} in the proof of \cref{lem:properties of Phi} addresses this gap.
\end{remark}
\begin{proof}[Proof of \cref{prop:inf of tilde I}]
By \cite[Lemma 4.1]{LRRS21}, $ \Lambda_{\nu_\eta}$ is convex, and its G\^ateaux derivative is \[\nabla \Lambda_{\nu_\eta}(f) =\frac{\E_{\sigma \sim \nu_\eta}[\sigma \exp(\langle f, \sigma\rangle) ] }{e^{\Lambda_{\nu_\eta} (f)} }.\] Convexity implies that $\nabla \Lambda_{\nu_\eta}(f)$ is a subgradient of $\Lambda_{\nu_\eta}$ at $f.$ In particular, for $ f = b m_* \mathbf{1}$ where $m_*$ is as defined in \cref{prop:constant field deviation}, a calculation gives $\nabla \Lambda_{\nu_\eta}(f)= m_* \mathbf{1}.$ Indeed, by the translational invariance of $\nu_\eta$, $ \nabla \Lambda_{\nu_\eta}(b m_* \mathbf{1}) =\tilde{c} \mathbf{1},$  with $R(r) =\sqrt{\eta^2 +r^2}$ and
    \[   \tilde{c} = \frac{d}{dr} \Lambda_{\nu_\eta}(r\mathbf{1})|_{r = bm_*} = \frac{r \tanh(R(r))}{R(r)} \mid_{r = bm_*} = \frac{b m_*\tanh(R(b m_*))}{R(b m_*)} = m_*\]
    where we use \eqref{eq:derivative of Phi at constant field} and \eqref{eq:def of critical magnetization}.
Since $\tilde{I}$ is the Legendre transform of $ \Lambda_{\nu_\eta}$, the Fenchel-Young identity (see  e.g. \cite[Proposition 16.10]{BC17}) at $ x =  m_* \mathbf{1}$ and $f =  b m_* \mathbf{1}$ gives \[\tilde{I}(x) = \langle x , f  \rangle - \Lambda_{\nu_\eta}(f) = b \norm{x}^2 -\Lambda_{\nu_\eta} (bx) \]
hence
\[I(m_* \mathbf{1}) =\tilde{I}(x) - F(x) =F(x) -\Lambda_{\nu_\eta} (bx) = -\Phi_{b,\eta} (b m_* \mathbf{1})\]
On the other hand, \eqref{eq:compare LDP rate with Phi} and \cref{lem:properties of Phi} together imply
\[ \inf I \geq -\sup \Phi_{b,\eta} = -\Phi_{b,\eta} (b m_* \mathbf{1})\]
hence $ \inf I  = I (m_* \mathbf{1}) = -\Phi_{b,\eta} (b m_* \mathbf{1})$ as desired.
\end{proof} 

\begin{proof}[Proof of \cref{lem:properties of Phi}]
Let $\Phi \equiv \Phi_{b,
\eta}$ and $\Lambda \equiv \Lambda_{\nu_\eta}.$

Observe that $ \norm{q - m_* \mathbf{1}}\leq \norm{q + m_* \mathbf{1}} $ iff $\int q (t) dt \geq 0$ thus
\begin{equation}\label{eq:rewriting rhs}
   \min_{s\in \set{\pm 1}} \norm{q - s m_* \mathbf{1}}^2= \int q(t)^2 dt - 2 m_* \left|\int q(t) dt\right| + m_*^2 = \norm{ |q| - m_*\mathbf{1}}^2 + 2m_* \left(\int_{\mS} |q(t)|dt -\left|\int q(t) dt\right|\right)  
\end{equation}
Let $|q|\in L^2 (\mS) $ be defined by $ |q|(t) = |q(t)|.$
We relate $ \Phi(b|q| ) $ and $ \Phi(b q)$ using the following proposition, whose proof is deferred to \cref{subsec:defer from lower bound}.
\begin{proposition}\label{prop:sign change} Consider $ f\in L^2(\mS),$ $ f_{\min}, f_{\max} \in \R_{\geq 0}$ s.t. $f_{\min} \leq \norm{f}_1 \leq f_{\max} .$ We have:
\begin{align*}
\Phi(|f|) - \Phi(f)= \Lambda (|f|) - \Lambda(f) &\geq \frac{\sech(\eta)}{2}\cdot (1-e^{\left|\int_{\mS} f dt\right| - \norm{f}_1  } )( 1- e^{-\left|\int_{\mS} f dt\right| -\norm{f}_1} )\\
  &\geq \frac{\sech(\eta) e^{-f_{\max}} (1-e^{-f_{\min}})   }{2}\cdot \left (\norm{f}_1 -\left|\int_{\mS} f dt\right|\right)   \geq 0   
\end{align*}
\end{proposition}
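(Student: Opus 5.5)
The plan is to expand the exponential moment generating functional in powers of $f$ and to show that the correlation kernel of $\nu_\eta$ is pointwise positive and bounded below by $\sech\eta$. Throughout, write $A:=\norm{f}_1$ and $a:=\left|\int_{\mS} f\,dt\right|$. The first equality $\Phi(|f|)-\Phi(f)=\Lambda(|f|)-\Lambda(f)$ is immediate, because $\norm{|f|}=\norm{f}$.

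\textbf{Step 1: moment expansion.} Since $\norm{\sigma}_\infty\le 1$, for any $g\in L^2(\mS)$ we may exchange sum and expectation:
\[
\E_{\nu_\eta}\!\left[e^{\langle g,\sigma\rangle}\right]=\sum_{k\ge0}\frac1{k!}\int_{\mS^k} g(t_1)\cdots g(t_k)\,\kappa_k(t_1,\dots,t_k)\,dt,\qquad \kappa_k(t_1,\dots,t_k):=\E_{\nu_\eta}[\sigma(t_1)\cdots\sigma(t_k)].
\]
Odd moments vanish because of the uniform sign $u$.

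\textbf{Step 2: the kernel is bounded between $\sech\eta$ and $1$.} By the construction in \cref{def:general path definition} (the same Poisson/Dyson identification underlying \cref{prop:derive expression for observables}), for ordered times $t_1\le\dots\le t_k$ with $k$ even we have
\[
\kappa_k=\frac{\Tr\bigl(Z e^{\eta(t_2-t_1)X}Z e^{\eta(t_3-t_2)X}\cdots Z e^{\eta(1-t_k+t_1)X}\bigr)}{\Tr e^{\eta X}}.
\]
One can also see this directly: conditional on $D$, $\sigma(t_i)\sigma(t_{i+1})=(-1)^{\#\text{jumps in }(t_i,t_{i+1}]}$, and independent Poisson counts give factors $e^{-2\eta\ell}$, subject to the parity constraint. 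Using $Ze^{cX}Z=e^{-cX}$, the trace collapses to $2\cosh(\eta(1-2s))$, where $s\in[0,1]$ is the total length of alternate gaps. Hence
\[
\sech\eta\le \kappa_k=\frac{\cosh(\eta(1-2s))}{\cosh\eta}\le 1 .
\]
I expect this to be the main obstacle: getting the parity bookkeeping right so that the formula really is $\cosh(\eta(1-2s))/\cosh\eta$, rather than merely some nonnegative quantity.

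\textbf{Step 3: comparing the two expansions.} Because $\kappa_k\ge0$ and $g(t_1)\cdots g(t_k)\le |g(t_1)|\cdots|g(t_k)|$, the expansion is termwise monotone. Using $\kappa_k-\sech\eta\ge0$ on the difference $\prod|f(t_i)|-\prod f(t_i)\ge0$, we get
\[
\E e^{\langle |f|,\sigma\rangle}-\E e^{\langle f,\sigma\rangle}\ge \sech\eta\sum_{k\text{ even}}\frac{A^k-\left(\int f\right)^k}{k!}=\sech\eta\,(\cosh A-\cosh a).
\]
Using $\kappa_k\le1$, we also get $\E e^{\langle |f|,\sigma\rangle}\le\cosh A$. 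Then $\log x\ge 1-1/x$ gives
\[
\Lambda(|f|)-\Lambda(f)\ge \sech\eta\,\frac{\cosh A-\cosh a}{\cosh A}.
\]

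\textbf{Step 4: elementary algebra.} Factor
\[
\cosh A-\cosh a=\tfrac12(e^A-e^a)(1-e^{-A-a}),
\]
and bound $\cosh A\le e^A$. This yields
\[
\Lambda(|f|)-\Lambda(f)\ge\frac{\sech\eta}{2}\bigl(1-e^{a-A}\bigr)\bigl(1-e^{-a-A}\bigr),
\]
which is the first claimed inequality. For the second, use $1-e^{-x}\ge xe^{-x}$ with $x=A-a\le A\le f_{\max}$, giving $1-e^{a-A}\ge e^{-f_{\max}}(A-a)$. Also $1-e^{-a-A}\ge 1-e^{-f_{\min}}$, since $a+A\ge A\ge f_{\min}$. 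Nonnegativity of the final expression follows from $A\ge a$.
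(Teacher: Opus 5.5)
Your proposal is correct and is essentially the paper's argument. The paper conditions on the number of jumps $d$, shows $\E[\prod_i\sigma(t_i)\mid |\mathcal{J}(\sigma)|=d]=(1-2p)^d\ge 0$, and keeps only the $d=0$ term (of weight $\sech\eta$), which is exactly your pointwise bound $\kappa=\cosh(\eta(1-2s))/\cosh\eta\ge\sech\eta$ once the Poisson weights are summed; the remaining steps coincide, except that the paper bounds $\E_{\nu_\eta}[e^{\langle |f|,\sigma\rangle}]\le e^{\norm{f}_1}$ directly instead of going through $\cosh\norm{f}_1$ via $\kappa\le 1$.
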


\cite[Lemma 5.2, Eq. (5.7)]{CCIV08} gives $\Lambda (f) \leq \int_{\mS} \Lambda(f(t) \mathbf{1}) dt$ thus $ \Phi(f)\leq  \int_{\mS} \phi(f(t)) dt. $ Applying \cite[Lemma 5.2, Eq. (5.7)]{CCIV08} to $f=b|q|$, using $\Phi(|f|)\geq \Phi(f) $ from \cref{prop:sign change}, and applying \eqref{eq:bound for constant field} pointwise, we obtain:
\begin{equation}\label{eq:first intermediate bound for potential function difference}
   \Phi (b m_* \mathbf{1}) -\Phi (bq)\geq  \Phi (b m_* \mathbf{1}) -\Phi (b|q|)  \geq \int_{\mS}\left[ \phi (b m_*) -\phi (b |q(t)|)\right] dt \geq c_1 \norm{|q|- m_*\mathbf{1}}^2   
\end{equation}

By the triangle inequality and $\norm{f}_1 \leq \norm{f}_2$ for $ f =|q|- m_*\mathbf{1}, $ we have:
\begin{equation}\label{eq:control L1 norm}
     | \norm{q}_1 - m_*| \leq\norm{|q|- m_*\mathbf{1}}_1 \leq \norm{|q|- m_*\mathbf{1}}_2.
\end{equation}
We consider two cases: 
\begin{itemize}
    \item If $ \norm{|q|- m_*\mathbf{1}}_2 \geq \frac{m_*}{2}$: By \cref{eq:control L1 norm}, we have 
\[ \left(\int_{\mS} |q(t)|dt -\left|\int q(t) dt\right|\right) \leq \norm{q}_1 \leq m_* + \norm{|q|- m_*\mathbf{1}}_1 \leq  3 \norm{|q|- m_*\mathbf{1}}_2 \]
thus by \eqref{eq:rewriting rhs} and \eqref{eq:first intermediate bound for potential function difference}, for $c_2 = \frac{c_1}{13},$ we have:
\begin{align*}
   c_2 \min_{s\in \set{\pm 1}} \norm{q - s m_* \mathbf{1}}_2^2 \leq 13 c_2\norm{|q|- m_*\mathbf{1}}_2^2  \leq  \Phi (b m_* \mathbf{1}) -\Phi(bq)
\end{align*}
\item If $ \norm{|q|- m_*\mathbf{1}}_2 < \frac{m_*}{2}:$  By \cref{eq:control L1 norm}, we have:
\[  | \norm{q}_1 - m_*| \leq\norm{|q|- m_*\mathbf{1}}_1 \leq \norm{|q|- m_*\mathbf{1}}_2 \leq\frac{m_*}{2} \Rightarrow \frac{m_*}{2} \leq \norm{q}_1 \leq \frac{3 m_*}{2} \]
This combined with \cref{prop:sign change} gives $ \Phi (b|q|) -\Phi(bq) \geq 2 m_* c_3  \left(\int_{\mS} |q(t)|dt -\left|\int q(t) dt\right|\right)$

where $ c_3 = \frac{b \sech (\eta) e^{-\frac{3 b m_*}{2}}(1-e^{-\frac{b m_*}{2}})}{4 m_*}>0$. By \eqref{eq:rewriting rhs} and \eqref{eq:first intermediate bound for potential function difference}, for $c_3' = \min \set{c_1,c_3}>0$
\[  \Phi (b m_* \mathbf{1}) -\Phi (bq) =  \Phi (b m_* \mathbf{1}) -\Phi (b|q|)+ \Phi (b|q|) -\Phi (bq) \geq c'_3    \min_{s\in \set{\pm 1}} \norm{q - s m_* \mathbf{1}}^2 \]
\end{itemize}
Thus, for $c = \min \set{c_2, c_3'}=\min \set{\frac{c_1}{13}, c_3}>0,$ $\Phi (b m_* \mathbf{1}) -\Phi (bq) \geq c \min_{s\in \set{\pm 1}} \norm{q - s m_* \mathbf{1}}^2 $ as desired.

\end{proof}

    

 We are finally ready to prove \cref{thm:concentration on two wells} and \cref{thm:conditioning equal product}.

\begin{proof}[Proof of \cref{thm:concentration on two wells}]
    The law of $ \bar{\sigma}$ is exactly $\rho_n$ from \cref{lem:application of LDP}. Let $ m_*$ be as in \cref{prop:constant field deviation}.
    Let $ \Gamma : = L^2 (\mS)\setminus (B(m_* \mathbf{1}, \delta) \cup B(-m_* \mathbf{1}, \delta)) .$ By \cref{prop:inf of tilde I}, \cref{def:LDP}, where $c'$ is the constant $c$ from \cref{lem:properties of Phi}, we have: 
    \[\limsup_{n\to\infty} \frac{1}{n} \log \Pr[\bar{\sigma} \in\Gamma ] \leq -\inf_{\overline{\Gamma}} J(\bar{\sigma}) \leq -c' \inf_{\bar{\sigma}\in \overline{\Gamma}} \min_{s\in \set{\pm 1}}\norm{\bar{\sigma} - m_* s\mathbf{1}}^2 \leq - c'\delta^2. \]
  Hence, there exists $n_0\equiv n_0(\delta)>0$ s.t. for $c =c'/2:$
    \[ \forall n\geq n_0(\delta): \Pr[\bar{\sigma} \in\Gamma ] \leq\exp(-c n\delta^2) .  \qedhere\]
\end{proof}
\begin{proof}[Proof of \cref{thm:conditioning equal product} given \cref{thm:concentration on two wells}]
For $s \in \{ \pm 1 \}$, let $\mathbf{v}_s = s m_* \mathbf{1}$,
\[
    \Omega_s = \left\{ x\in L^2(\mS): \left\|x - \mathbf{v}_s \right\| <\left\| x + \mathbf{v}_s \right\| \right\},
\]
and let $\overline{\Omega}_s = \left\{  x\in L^2(\mS): \left\|x - \mathbf{v}_s \right\| \le \left\| x + \mathbf{v}_s \right\| \right\}$ be its closure. 
By our choice of $\delta,$ $ B_s\subseteq \Omega_s$ for $s\in \set{\pm 1}.$
  

 For a measurable set $\Gamma\subseteq L_2(\mS)$, let
 \[\varphi(\Gamma)=\int \mathbbm{1}_{\bar{\sigma} \in \Gamma } e^{U(\boldsymbol{\tau}, \boldsymbol{\sigma}) } d\nu^R (\boldsymbol\sigma),\qquad \hat{\varphi} (\Gamma)  = \int \mathbbm{1}_{\bar{\sigma} \in \Gamma } e^{\frac{b r}{2}\norm{\bar{\sigma}}^2} d\nu^R (\boldsymbol\sigma)  , \qquad\tilde{\varphi}_s(\Gamma) =  e^{b t \langle \bar{\tau},  \mathbf{v}_s\rangle}\hat{\varphi} (\Gamma)  \]
Note that $Q_{+1}(\boldsymbol{\tau})   = \varphi(\Omega_{+1})$ and $  Q_{-1}(\boldsymbol{\tau})   = \varphi(\overline{\Omega}_{-1}),$ and $ A_0 = \hat{\varphi}(B_{+1}) = \hat{\varphi}(B_{-1}) $ by symmetry. Hence, it suffices to show that for large $r$:
\[\forall s \in \set{\pm 1}: e^{-2bt \delta} \tilde{\varphi}_s(B_s)\leq \varphi(\Omega_s) \leq \varphi(\overline{\Omega}_s) \leq e^{3 bt \delta} \tilde{\varphi}_s(B_s)  \]
  
Note that $ \hat{\varphi} (\Gamma) = Z \cdot \mu^{\text{Curie-Weiss}}_{r, b, \eta}\{\bar{\sigma}\in \Gamma\} $ where $Z = \E_{\nu^R} [e^{\frac{b r}{2}\norm{\bar{\sigma}}^2}].$ 
Let $D = L^2(\mS) \setminus (B_{+1} \cup B_{-1}).$
Let $c_0$ be the constant $c$ from \cref{thm:concentration on two wells} and $n_0(\delta)$ be as in
\cref{thm:concentration on two wells}.
Set $c=c_0/2$ and increase $n_0(\delta)$ so that
$e^{-cr\delta^2}\leq 1/2$ whenever $r\geq n_0(\delta)$.
Since $Z=2A_0+\hat{\varphi}(D)$, \cref{thm:concentration on two wells} gives
\[
\hat{\varphi}(D)
\leq
\frac{2e^{-2cr\delta^2}}{1-e^{-2cr\delta^2}}\cdot A_0
\leq 2e^{-cr\delta^2}A_0 =2e^{-c r \delta^2 }  \hat{\varphi} (B_{+1}) = 2e^{-c r \delta^2 }  \hat{\varphi} (B_{-1}) .
\]
This combined with
  $U(\boldsymbol{\tau}, \boldsymbol{\sigma}) - \frac{br}{2}\norm{\bar{\sigma}}^2 \leq b (\frac{t^2}{2r} + t)$ and $\left|b t \langle \bar{\tau}, \mathbf{v}_s\rangle\right| \leq  bt$ gives:
  \begin{equation}\label{eq:compare D with Bs}
     \varphi(D) \leq e^{b (\frac{t^2}{2r} + t)} \hat{\varphi}(D) \leq 2e^{ b (\frac{t^2}{2r} + t)- c r \delta^2} \hat{\varphi}(B_{s}) \leq 2e^{ b (\frac{t^2}{2r} + 2 t)- c r \delta^2} \tilde{\varphi}_s(B_{s}) \forall s\in \set{ \pm 1}. 
  \end{equation}
 Assume $ r \geq t/\delta.$ Then:
 \begin{equation}\label{eq:compare Bs ball} \exp(-2 b t\delta) \tilde{\varphi}_s(B_s) \leq \varphi(B_s) \leq \exp(2 b t\delta) \tilde{\varphi}_s(B_s)\end{equation} 
 since 
  for $ \boldsymbol{\sigma}$ s.t. $ \bar{\sigma}\in B_s$,
   \[  \left|U(\boldsymbol{\tau}, \boldsymbol{\sigma}) - \frac{br}{2}\norm{\bar{\sigma}}^2 - b t \langle \bar{\tau},  \mathbf{v}_s\rangle \right|  =\left|\frac{b t^2}{2 r} \norm{\bar{\tau}}^2 +  bt  \langle \bar{\tau}, \bar{\sigma}-\mathbf{v}_s  \rangle \right|\leq \frac{b t^2}{2 r}  +  bt \delta \leq 2 bt \delta  \]
   
  Using $ B_s \subseteq \Omega_s \subseteq \overline{\Omega}_s \subseteq D \cup B_s,$ \eqref{eq:compare D with Bs} and \eqref{eq:compare Bs ball}, we have:
   \[e^{-2bt \delta} \tilde{\varphi}_s(B_s)\leq \varphi(B_s) \leq \varphi(\Omega_s) \leq \varphi(\overline{\Omega}_s) \leq \varphi(B_s) + \varphi(D) \leq  (e^{2bt \delta} + 2e^{ b (\frac{t^2}{2r} + 2t)- c r \delta^2} ) \tilde{\varphi}_s(B_s)\leq e^{3 bt \delta} \tilde{\varphi}_s(B_s)  \]
   for $r\ge r_0(\delta, t, b, \eta),$ where \[ r_0(\delta, t, b, \eta)=\max\left\{n_0(\delta), \frac{t}{\delta}, \frac{\max(0,R')}{c\delta^2} + t \sqrt{\frac{b}{2c\delta^2}} \right\}>0 \text{ with } R' = 2bt(1-\delta)+ \ln \frac{2}{e^{bt\delta} -1}.\qedhere\]
\end{proof} 

\subsection{The hard instance}\label{subsec:hard instance}
As usual, we identify an Ising interaction matrix $J$ with a weighted graph, where the weight between vertices $i,j $ stands for the entry $J_{ij}$ of $J.$

Consider a $3$-regular graph $G = (V_G, E_G)$ with $|V_G| = n_0$ vertices, an instance of \textsc{MaxCut}.

Let $b,w\in \R$ and $r, t'\in\N$ be parameters to be chosen later.
Let $K$ be the Curie-Weiss gadget graph on $n_1:=r + 3 t'$ vertices, where the interaction between any pair of vertices of $K$, including self-loops, is $\frac{b}{r}.$ Let $ t = 3t'.$

\begin{itemize}
    \item We replace each node $v \in V_G$ with a distinct copy $K_v $ of $K$ on a set of vertices $W_v.$ Let $\hat{\mathcal{G}}$ be the disjoint union of $K_v$ for $v\in V_G,$ and $\hat{J}$ be the corresponding interaction matrix.
\item   We now describe how to encode the edges of \(G\) using connections between the gadgets.
    For each neighbor $v$ of $u,$ let $ S^v_u$ be a subset of $W_u$ with $|S^v_u| = t',$ so that $ S^v_u\cap S_u^w = \emptyset$  for $v \neq w.$  For each edge $ \set{u,v} \in E_G,$ we add a perfect matching $E_{uv}$ between $ S_u^v$ and $S_v^u,$ where each edge in the perfect matching has weight $ w. $ Let $S_u=\bigcup_{v} S^v_u, R_v = W_v \setminus S_v.$  
    
    We let $\mathcal{G}$ be the graph obtained by adding these edges to $\hat{\mathcal{G}},$ and $\tilde{J}$ be the corresponding interaction matrix.
Let $\mathcal{V} =\bigcup_{v\in V_G} W_v, \mathcal{R} =\bigcup_{v\in V_G} R_v, \mathcal{S} =\bigcup_{v\in V_G} S_v.$
\end{itemize}

\subsection{Proof of \cref{thm:main hardness of approximation}}\label{subsec:analyzing hard instance}
We use the same setup and notation as in \cref{subsec:hard instance}. Recall that $b_{\min} = \frac{\eta}{\tanh \eta }>0 .$ Let 
 $ b= \frac{b'+b_{\min}}{2}> b_{\min}$ and $w = \frac{b_{\min}- b'}{5} <0$

For a nonempty subset $M\subseteq \mathcal{V},$ let  $\sigma_M=(\sigma_i)_{i\in M}, \bar{\sigma}_M = \frac{1}{|M|}\sum_{i\in M} \sigma_i,$ and $\nu^M = \bigotimes_{i\in M} \nu_\eta.$ When $M= \mathcal{V},$ we omit the superscript and write $ \nu \equiv \nu^{\mathcal{V}}.$

As in \cref{thm:conditioning equal product},
let $Y(\sigma_M) = \begin{cases} 1 &\text{ if } \norm{\bar{\sigma}_M  -  m_* \mathbf{1}}  <\norm{\bar{\sigma}_M  +  m_* \mathbf{1}} 
\\   -1 &\text{ else} \end{cases}  $ and 
let $\mathcal{Y}(\sigma_{\mathcal{R}}) = (Y(\sigma_{R_v}))_{v\in V_G}.$ 

For $\tau_{\mathcal{S}}=(\tau_i)_{i\in \mathcal{S}}$, and $\mathbf{y}= (y_v)_{v\in V_G} \in \set{\pm 1}^{V_G},$ we let 
\[Q_{\mathbf{y}}(\tau_{\mathcal{S}}) := \int \mathbbm{1}_{\mathcal{Y}(\sigma_{\mathcal{R}})= \mathbf{y}} \exp(\hat{W}(\tau_{\mathcal{S}}, \sigma_{\mathcal{R}}) )   \nu^{\mathcal{R}} (d\sigma_{\mathcal{R}}) =  \prod_{v\in V_G}  Q_{y_v} (\tau_{S_v})   \]
where $ Q_{y_v}(\tau_{S_v}) = \displaystyle\int  \mathbbm{1}_{Y(\sigma_{R_v}) = y_v} \exp( U(\tau_{S_v}, \sigma_{R_v})) \nu^{R_v} (d\sigma_{R_v})$ with \[U (\tau_{S_v}, \sigma_{R_v}) = \frac{b}{2 r}\cdot \norm{\sum_{j \in S_v} \tau_j + \sum_{i\in R_v} \sigma_i}^2.\]

We apply \cref{thm:conditioning equal product}. Let $m_*= m_*(b,\eta) \in (0,1),$ $A_0$ and $r_0\equiv r_0(\delta,t, b , \eta) $ be as in \cref{thm:conditioning equal product}. For $ \delta \in (0, m_*)$ to be chosen later, and $ r\geq r_0$, we have: 
\[Q_{y_v}(\tau_{S_v}) = e^{\pm 3 b t\delta }\cdot A_0 \cdot e^{b y_v m_*\sum_{i\in S_v} \langle \tau_i ,  \mathbf{1}\rangle } . \]

Let
\begin{equation}\label{eq: define H for lower bound}
    H_x = \sum_{i\in \mathcal{V}} \eta X_i, \quad H_z = \frac{1}{2}\sum_{i, j \in \mathcal{V}} \tilde{J}_{ij} Z_i Z_j, \quad \hat{H}_z = \frac{1}{2}\sum_{i, j \in \mathcal{V}} \hat{J}_{ij} Z_i Z_j, \quad \hat{H} = H_x + \hat{H}_z, \quad H = H_x + H_z.
\end{equation}
By \eqref{eq:derivation of observables continuous}, 
\[ \frac{\Tr(\exp( H ) ) }{\Tr(\exp( \hat{H} ))}  = \frac{\E_{\nu }[e^W ]}{\E_{\nu } [e^{\hat{W}} ] } \]
where \[\hat{W} (\boldsymbol \sigma)= \frac{1}{2}\sum_{i,j\in \mathcal{V}} \hat{J}_{ij} \langle \sigma_i, \sigma_j\rangle , \quad W (\boldsymbol \sigma)= \frac{1}{2}\sum_{i,j\in \mathcal{V}} \tilde{J}_{ij} \langle \sigma_i, \sigma_j\rangle = \hat{W} (\boldsymbol \sigma) + w \sum_{\set{u,v}\in E_G} \sum_{\set{i,j} \in E_{uv}} \langle \sigma_i, \sigma_j\rangle. \] 

We can rewrite $ \E_{\nu } [e^{\hat{W}}] \equiv Z_{\hat{\mathcal{G}}} =\sum_{\mathbf{y} \in\set{\pm 1}^{V_G}} Z_{\hat{\mathcal{G}}} (\mathbf{y})  $ and $ \E_{\nu } [e^{W}] \equiv Z_{\mathcal{G}} =\sum_{\mathbf{y} \in\set{\pm 1}^{V_G}} Z_{\mathcal{G}} (\mathbf{y})  $ where
 \[Z_{\hat{\mathcal{G}}} (\mathbf{y}) :=  \int Q_{\mathbf{y}}(\tau_{\mathcal{S}}) d\nu^{\mathcal{S}} (\tau_{\mathcal{S}}) ,\quad  Z_{\mathcal{G}} (\mathbf{y}) :=  \int \exp\left(w \sum_{\set{u,v}\in E_G} \sum_{\set{i,j} \in E_{uv}} \langle \tau_i, \tau_j\rangle \right)Q_{\mathbf{y}}(\tau_{\mathcal{S}}) d\nu^{\mathcal{S}} (\tau_{\mathcal{S}}) .\]
We have
\begin{equation}\label{eq:rewrite partition function lower bound}
    \begin{split}
        Z_{\mathcal{G}} (\mathbf{y}) 
        &= e^{\pm 3 b t \delta n_0} A_0^{n_0}\int \exp\left(b \sum_{v \in V_G} \sum_{i\in S_v}  y_v m_*\langle \tau_i ,  \mathbf{1}\rangle + w \sum_{\set{u,v}\in E_G} \sum_{\set{i,j} \in E_{uv}}\langle \tau_i, \tau_j\rangle \right)  \prod_{i\in \mathcal{S}} d\nu_{\eta}(\tau_i)\\
        &=e^{\pm 3 b t \delta n_0} A_0^{n_0} \prod_{\set{u,v} \in E_{G}} \prod_{\set{i,j} \in E_{uv}} \left[\int \exp\left( b  m_* y_u\langle \tau_i ,  \mathbf{1}\rangle +b  m_* y_v\langle \tau_j ,  \mathbf{1}\rangle  + w \langle \tau_i, \tau_j\rangle \right) d\nu_{\eta}(\tau_i)d\nu_{\eta}(\tau_j) \right] \\
    \end{split}
\end{equation}
Let
    \begin{align*}
       \Psi_w (y_u, y_v) :&= \int \exp\left( b  m_* y_u\langle \tau_i ,  \mathbf{1}\rangle +b  m_* y_v\langle \tau_j ,  \mathbf{1}\rangle  + w \langle \tau_i, \tau_j\rangle \right) d\nu_{\eta}(\tau_i)d\nu_{\eta}(\tau_j) \\
       &= \frac{\Tr(e^{w Z_1 Z_2 + b m_* (y_u Z_1 +y_v Z_2 ) + \eta (X_1 + X_2) } ) }{\Tr(\exp(\eta (X_1 + X_2))) } \tag{by  \eqref{eq:derivation of observables continuous}}\\ 
       &= \frac{ \Tr(e^{w y_u y_v Z_1 Z_2 + b m_* (Z_1 +Z_2 ) + \eta (X_1 + X_2) })}{(2 \cosh\eta)^2 } .
    \end{align*}
    The second equality is because $w Z_1 Z_2 + b m_* (y_u Z_1 +y_v Z_2 ) + \eta (X_1 + X_2) $ and $w y_u y_v Z_1 Z_2 + b m_* (Z_1 +Z_2 ) + \eta (X_1 + X_2) $ are unitarily equivalent via conjugation by $ X_1^{(y_u-1)/2} X_2^{(y_v-1)/2}.$ 
    
    For $ w'\in \R,$ let $ F(w') =\Tr(e^{w' Z_1 Z_2 + b m_* (Z_1 +Z_2 ) + \eta (X_1 + X_2) }) .$
    In \cref{subsec:defer from lower bound}, we show that 
    \begin{equation}\label{eq:justify edge interaction}
        F(w') > F(-w') \text{ if }  w' >0.
    \end{equation}
 For $ w'\in \R,$ we let
    \begin{align*}
      \kappa_{w'} :=\frac{1}{2}\log  \frac{\Psi_{w'} (+1,-1) }{\Psi_{w'}(+1,+1)} = \frac{1}{2}\log \left(\frac{F(-w')}{F(w')}\right), \quad 
      C_{w'}:=\left(\Psi_{w'}(+1,+1) \Psi_{w'}(+1,-1)\right)^{t' |E_G|/2}  e^{-\kappa_{w'} t' |E_G|}
    \end{align*}
    We note that $\kappa_0 = 0$ and
    $\kappa_w>0$ by \eqref{eq:justify edge interaction} and the fact that $w <0,$ and that we can efficiently compute $ C_{w'}$ given $w'\in \R.$
    
    Let $\mathrm{cut}(\mathbf{y}):=|\set{\set{u,v}\in E_G : y_u y_v = -1}|.$
    We can rewrite \eqref{eq:rewrite partition function lower bound} as:
    \[ Z_{\mathcal{G}} (\mathbf{y}) = 
    e^{\pm 3 bt \delta n_0} A_0^{n_0}  C_w e^{2 \kappa_w \cdot t' \cdot \mathrm{cut}(\mathbf{y})} \]
    Applying the same argument for $w' = 0$, and using that $\kappa_0 = 0$, we have $ Z_{\hat{\mathcal{G}}} (\mathbf{y}) = e^{\pm 3 bt \delta n_0} A_0^{n_0}  C_0.$
    
    Let $ \mathrm{maxcut(G)} =\max_{\mathbf{y}} \mathrm{cut}(\mathbf{y})$ be the value of the maxcut instance $G.$ 
    We have $Z_{\hat{\mathcal{G}}} =  e^{\pm 3 bt \delta n_0} A_0^{n_0}  C_0 2^{n_0}$ and for $ \kappa'=2\kappa_w:$ 
    \begin{align*}
        e^{-3 bt \delta n_0} A_0^{n_0}  C_w e^{\kappa' t' \cdot \mathrm{maxcut(G)}} \leq &Z_{\mathcal{G}} = \sum_{\mathbf{y} \in\set{\pm 1}^{V_G}} Z_{\mathcal{G}} (\mathbf{y}) \leq e^{3 bt \delta n_0} A_0^{n_0}  C_w 2^{n_0} e^{\kappa' t' \cdot \mathrm{maxcut(G)}}
    \end{align*}
    thus, assuming $\delta \leq \frac{1}{3 t'},$ we have that for $  c' = c'(b,\eta,w ) = \frac{6 b + \log 2}{\kappa'}$:
    \begin{equation}\label{eq:ratio approximate maxcut}
      -\frac{c'}{t'} \cdot n_0\leq \frac{1}{\kappa' t' } \left(\log \frac{Z_{\mathcal{G}}}{Z_{\hat{\mathcal{G}}}} - \log\frac{C_w}{C_0} \right) - \mathrm{maxcut(G)}  \leq \frac{c'}{t'} \cdot n_0   
    \end{equation}

    Note that $ \hat{J}$ is a block matrix consisting of blocks $\mathbf{J}_v = \frac{b}{r}\cdot \mathbf{1} \mathbf{1}^\intercal$ for $v\in V_G,$ and thus 
    $ \Delta(\hat{J}) = b \cdot \frac{n_1}{r}.$ Let $J' = \tilde{J}-\hat{J}$ and note that the added edges form a matching, so $J'$ is a direct sum of \(2\times2\) blocks of the form $ \begin{bmatrix} 0 & w' \\ w' & 0 \end{bmatrix}$ with $w'\in \set{0,w},$ thus 
    $ \norm{J'
    }_{\operatorname{op}}  = |w|.$  By Weyl's eigenvalue perturbation bound (see \cite[Chapter 3]{bhatia2007perturbation}), we have:  
    \[ |\Delta(\tilde{J})- \Delta(\hat{J})| \leq 2 \norm{J'
    }_{\operatorname{op}} = 2 |w|.  \]
    Clearly, $ \max\set{\norm{\tilde{J}}_\infty, \norm{\hat{J}}_\infty}\leq  b \cdot \frac{n_1}{r}  +  |w|. $
    Recall that $ n_1 = 3 t'+r.$ Choose $r$ s.t. $ r\geq  16 t'\cdot \frac{b'+b_{\min}}{b'- b_{\min}}. $ Then for $J \in\set{\tilde{J},\hat{J}},$ we have 
    $ \Delta(J) \in (b_{\min},b')$ and $ \norm{J}_\infty\leq b'.$

    Since \textsc{MaxCut} is APX-hard even for 3-regular graphs \cite{alimonti1997hardness,galanis2024sampling}, we can choose a constant $\zeta \in (0,1)$ s.t. there exists a family of $3$-regular graphs where given a graph $G$ from this family, it is NP-hard to compute $\widehat{\mathrm{maxcut(G)}}$ s.t. $ |\widehat{\mathrm{maxcut(G)}} -\mathrm{maxcut(G)}| \leq \zeta\mathrm{maxcut(G)}. $



Let $c>0$ be chosen later, and suppose an algorithm approximates
\textsc{SpectralTFIM}$(\eta,b')$ within a factor $e^{c|\mathcal V|}$.
We run this algorithm separately on the two valid inputs $\tilde J$
and $\hat J$. When both calls succeed, their outputs satisfy
$\widehat{\Tr(e^H)}=e^{\pm c|\mathcal V|}\Tr(e^H)$ and
$\widehat{\Tr(e^{\hat H})}=e^{\pm c|\mathcal V|}\Tr(e^{\hat H})$.

We can efficiently compute the estimator \[ \widehat{\mathrm{maxcut(G)}} := \frac{1}{\kappa' t' } \left(\log\frac{\widehat{\Tr(e^{H}) }}{\widehat{\Tr(e^{\hat{H}}) } } - \log\frac{C_w}{C_0}\right)\] of $\mathrm{maxcut(G)}.$ 
    By \eqref{eq:ratio approximate maxcut}, and the fact that \[\mathrm{maxcut(G)}  \geq \E_{\mathbf{y} \sim \text{Uniform}(\set{\pm 1}^{V_G})} [\mathrm{cut}(\mathbf{y})] = \frac{3 n_0}{4},\] the estimation error is bounded by:
    \[ \left|  \widehat{\mathrm{maxcut(G)}} -\mathrm{maxcut(G)}   \right|\leq n_0\left(\frac{2c n_1}{\kappa' t'} + \frac{c'}{t'}\right)  \leq\frac{4}{3} \cdot \mathrm{maxcut(G)} \left(\frac{2c (3t' + r) }{\kappa' t'} + \frac{c'}{t'}\right) \leq\mathrm{maxcut(G)} \zeta \]
    when $ t' \geq \frac{8 c'}{3\zeta}$ and $ c \leq \frac{3}{8}\cdot \frac{\zeta }{2 }\cdot \frac{\kappa' t'}{(3t'+r)}. $ Set
    \[ t' = \lceil\frac{8 c'}{3\zeta}\rceil, \text{  } \delta = \min \set{\frac{1}{3t'}, \frac{m_*}{2}} , \text{  } r =\left\lceil \max \left\{ 16 t'\cdot \frac{b'+b_{\min}}{b'- b_{\min}} ,  r_0(\delta,3t', b , \eta)  \right\} \right\rceil,\text{  } c = \frac{3 \zeta \kappa't'}{16 (3t'+r)}. \]

Both \(\tilde J\) and \(\hat J\) have spectral widths in \((b_{\min},b')\). Thus, running an approximation algorithm for \textsc{SpectralTFIM}($\eta,b'$) separately on these two inputs would give a \(\zeta\)-relative approximation to \(\mathrm{maxcut}(G)\). If the algorithm succeeds on each input with probability greater than \(2/3\), taking the median of a constant number of independent runs makes both estimates correct with probability greater than \(9/10\). This contradicts the hardness of approximating \textsc{MaxCut} unless $\NP=\RP$. The same reduction with a quantum approximation algorithm would imply $\NP\subseteq\BQP$.

\begin{remark}\label{remark:hardness finite precision and zero diagonal}
The hardness from \cref{thm:main hardness of approximation} persists even if we restrict to matrices $J$ with zero diagonal and rational entries of bit length bounded by a constant depending only
on $\eta$ and $b'$.
 Indeed, choose rational
\(b,w\) satisfying for $ D= \frac{b'-b_{\min }}{5} >0,$
\[
 b_{\min}+\frac{9D}{20}<b<b_{\min}+\frac D2,
 \qquad \frac D{10}<-w<\frac D5,
\]
and take $r$ such that
$b(1+3t'/r)+2|w|<b',$ the same hardness reduction holds.
Since $b,w,r$ depend only on $\eta$ and $b'$, and every entry of
$\hat J$ and $\tilde J$ belongs to $\{0,b/r,w\}$, their entries
are rational numbers with $O_{\eta, b'}(1) $ bit length.

Subtracting $\frac{b}{r}\cdot I$ from $\hat{J} $ and $ \tilde{J}$ 
yields matrices $\hat{J}^o, \tilde{J}^o$ with  zero diagonal entries, while keeping the spectral widths and the ratio between the corresponding partition functions, and thus the hardness reduction, unchanged.
\end{remark}
\bibliographystyle{abbrv}
\bibliography{refs}
\appendix
\section{Deferred proofs}
\subsection{From \cref{sec:expression for observables}}\label{subsec:defer from observables}

\begin{proof}[Proof of \cref{prop:approximation of path functional}]

We prove the proposition for the case $\eta > 0;$ the case $\eta =0$ follows by taking the limit $\eta \to 0.$
Fix $\mathbf{s} = (s^{(0)}, \cdots, s^{(M-1)})\in \set{\pm 1}^M,$ and let $ s^{(M)} = (-1)^p s^{(0)}.$
Recall the definition of $\mathcal{J}(\sigma)  $ from  \cref{def:general path definition}. 

Note that $\sigma(k/M) = s^{(k)}\forall k\in \set{0,\cdots, M-1}$ iff \[\sigma(0) = s^{(0)}\quad \text{and}\quad \forall k\in [M]: (-1)^{\left|\mathcal{J}(\sigma) \cap ((k-1)/M, k/M]\right|} = s^{(k-1)}s^{(k)} .\]
For $k\in[M],$ let $p_k\in \set{0,1}$ be s.t. $ (-1)^{p_k} = s^{(k-1)}s^{(k)}.$ 

Note that $(-1)^{\sum_{k=1}^M p_k} =s^{(0)} s^{(M)} = (-1)^p$ so $\sum_k p_k\equiv p \bmod 2.$  

For $d_1,\cdots, d_M\in \N$ where $ d_k \equiv p_k \bmod 2, $ let  $d:=\sum_k d_k $ and 
\begin{align*}
   E_{d_1,\cdots, d_M} :&=  \int  \mathbbm{1}\set{\sigma(0) = s^{(0)} \land \left|\mathcal{J}(\sigma) \cap ((k-1)/M, k/M]\right| = d_k \forall k\in [M]} \nu_{\eta, (p)}(d\sigma) \\
   &= \frac{\eta^d}{Z_{\eta, (p)} d!} \times \frac{1}{2}\times  \frac{d!}{d_1! \cdots d_M!}\times \frac{1}{M^d} \\
   &=\frac{1}{2} Z_{\eta, (p)}^{-1} \prod_{k=1}^M \frac{(\eta/M)^{d_k}}{d_k!}. 
\end{align*}
then 
\begin{align*}
    \hat{\nu}_{\eta,(p), M}(\mathbf{s} )
    &=\sum_{(d_k)_{k=1}^{M}: d_k\equiv p_k\bmod 2 } E_{d_1,\cdots, d_M}
    = \frac{1}{2}Z_{\eta, (p)}^{-1} \sum_{(d_k)_{k=1}^{M}: d_k\equiv p_k\bmod 2 }\prod_{k=1}^M \frac{(\eta/M)^{d_k}}{d_k!}\\
    &= \frac{1}{2}Z_{\eta, (p)}^{-1} \prod_{k=1}^M \left(\sum_{d_k: d_k\equiv p_k\bmod 2 } \frac{(\eta/M)^{d_k}}{d_k!}\right)
    = \frac{1}{2}Z_{\eta, (p)}^{-1} \prod_{k=1}^M Z_{\eta/M, (p_k)} \\
    &= \frac{C^M}{2  Z_{\eta, (p)}}   \exp\left(K\sum_{k=0}^{M-2} s^{(k)} s^{(k+1)} + (-1)^p K s^{(M-1)} s^{(0)} \right), \quad C = \sqrt{\sinh(\eta/M)\cosh(\eta/M)},
\end{align*}
where we use \eqref{eq:normalization factor for each index}.
\end{proof}

\begin{proof}[Proof of \cref{prop:derive expression for observables}]
{\small We prove the identities assuming $\eta_i>0$ for every $i$; the cases
with $\eta_i=0$ follow by taking the corresponding limits
$\eta_i\downarrow0$ on both sides.
Recall $v$ from \eqref{eq:function Ising} and $V,V^{(M)},W,f$ from
\eqref{eq: VM and W}. 


For $Q_M=(e^{H_z/M}e^{H_x/M})^M$, inserting resolutions of the
identity in the $Z$-basis gives
\begin{equation}\label{eq:rewrite observable first}
\begin{aligned}
\Tr(OQ_M)
&=\sum_{(\mathbf s^{(k)})_{k=0}^{M}}
 \bra{\mathbf s^{(M)}}O\ket{\mathbf s^{(0)}}
 \prod_{k=0}^{M-1}
 \left(
 \bra{\mathbf s^{(k)}}e^{H_z/M}\ket{\mathbf s^{(k)}}
 \bra{\mathbf s^{(k)}}e^{H_x/M}\ket{\mathbf s^{(k+1)}}
 \right)\\
&=\sum_{(\mathbf s^{(k)})_{k=0}^{M}}
 \exp\!\left(\frac1M\sum_{k=0}^{M-1}
                  v(\mathbf s^{(k)})\right)
 \prod_{i=1}^n
 \left(
 \bra{s_i^{(M)}}P_i\ket{s_i^{(0)}}
 \prod_{k=0}^{M-1}
 \bra{s_i^{(k)}}e^{\eta_iX/M}\ket{s_i^{(k+1)}}
 \right).
\end{aligned}
\end{equation}
In the above, we write each $Z$-basis vector $ \ket{\mathbf{s}^{(k)}}$ as
$\ket{\mathbf s^{(k)}}=\bigotimes_{i=1}^n\ket{s_i^{(k)}}$ where $\ket{s^{(k)}_i}\in \set{\ket{+1}, \ket{-1}}$ and $ \ket{+1} =\begin{bmatrix} 1 \\ 0\end{bmatrix}, \ket{-1}=\begin{bmatrix} 0 \\ 1\end{bmatrix} .$

Let
\[
A_i\bigl((s_i^{(k)})_{k=0}^{M}\bigr)
:=
\bra{s_i^{(M)}}P_i\ket{s_i^{(0)}}
\prod_{k=0}^{M-1}
\bra{s_i^{(k)}}e^{\eta_iX/M}\ket{s_i^{(k+1)}}.
\]
We have:
\[
\bra{s_i^{(M)}}P_i\ket{s_i^{(0)}}
=
\mathbbm{1}_{\{s_i^{(M)}=(-1)^{p_i}s_i^{(0)}\}}\,
\mathrm{i}^{\mathbbm{1}_{\{P_i=Y_i\}}}
(s_i^{(0)})^{\mathbbm{1}_{\{P_i\in\{Y_i,Z_i\}\}}}.\footnote{Here we abuse notation slightly and use \(P_i\) to denote the corresponding $2\times 2$ matrix that acts on site $i.$}
\]
For $\eta_i'=\eta_i/M$, put
\[
C_i=\sqrt{\sinh(\eta_i')\cosh(\eta_i')},
\qquad
K_i=K_M(\eta_i)=\frac12\log\coth(\eta_i').
\]
Then
\[
\prod_{k=0}^{M-1}
\bra{s_i^{(k)}}e^{\eta_i'X}\ket{s_i^{(k+1)}}
=
C_i^M
\exp\!\left(
K_i\sum_{k=0}^{M-1}s_i^{(k)}s_i^{(k+1)}
\right).
\]
Consequently,
\begin{align*}
A_i\bigl((s_i^{(k)})_{k=0}^{M}\bigr)
&=
\mathbbm{1}_{\{s_i^{(M)}=(-1)^{p_i}s_i^{(0)}\}}\,
C_i^M
(s_i^{(0)})^{\mathbbm{1}_{\{P_i\in\{Y_i,Z_i\}\}}}
\mathrm{i}^{\mathbbm{1}_{\{P_i=Y_i\}}}
\exp\!\left(
K_i\sum_{k=0}^{M-1}s_i^{(k)}s_i^{(k+1)}
\right)\\
&=
\mathbbm{1}_{\{s_i^{(M)}=(-1)^{p_i}s_i^{(0)}\}}\,
C_i^M
(s_i^{(0)})^{\mathbbm{1}_{\{P_i\in\{Y_i,Z_i\}\}}}
\mathrm{i}^{\mathbbm{1}_{\{P_i=Y_i\}}}
Z_{\mathbf K_{\eta_i,(p_i)},\mathbf0}^{\mathrm{1D Ising,CBC}}\,
\pi_{\mathbf K_{\eta_i,(p_i)},\mathbf0}^{\mathrm{1D Ising,CBC}}
\bigl((s_i^{(k)})_{k=0}^{M-1}\bigr).
\end{align*}

Substituting into \eqref{eq:rewrite observable first}, and letting $ Q_M^{(1)} = (e^{\sum_{i=1}^n h_i^z Z_i/M} e^{H_x/M} )^M,$ we have:
\begin{align*}
\Tr\!\left(X_S e^{H_x}\right)& = \Tr\!\left(X_S (e^{H_x/M})^M\right) =  \prod_{i=1}^n \left(C_i^M  Z_{\mathbf{K}_{\eta_i, (p_i)}, \mathbf{0}}^{\mathrm{1D Ising, CBC}}\right)\geq 0\\
\Tr\!\left(X_S Q_M^{(1)}\right)&= \Tr\!\left(X_S (e^{\sum_{i=1}^n h_i^z Z_i/M} e^{H_x/M} )^M\right) =  \prod_{i=1}^n \left(C_i^M  Z_{\mathbf{K}_{\eta_i, (p_i)}, \frac{h_i^z}{M} \cdot \mathbf{1}}^{\mathrm{1D Ising, CBC}}\right)\geq 0
\end{align*}
\begin{align*}
\Tr(OQ_M)
&=
\mathrm{i}^{|\{i:P_i=Y_i\}|}C
\sum_{(\mathbf s^{(k)})_{k=0}^{M-1}}
\exp\!\left(\frac1M\sum_{k=0}^{M-1}
                  v(\mathbf s^{(k)})\right)
\left(\prod_{i:P_i\in\{Y_i,Z_i\}}s_i^{(0)}\right)
\prod_{i=1}^n
\pi_{\mathbf K_{\eta_i,(p_i)},\mathbf0}
^{\mathrm{1D Ising,CBC}}
\bigl((s_i^{(k)})_{k=0}^{M-1}\bigr)\\
&=
\mathrm{i}^{|\{i:P_i=Y_i\}|}C\,
\mathbb E_{\hat\nu}[e^{V^{(M)}}f]\\
&=
\mathrm{i}^{|\{i:P_i=Y_i\}|}D\,
\mathbb E_{\tilde\nu}[e^Wf].
\end{align*}
This proves \eqref{eq:derivation of observables discretized}. To show \eqref{eq:derivation of observables continuous}, we use \[ \Tr\!\left(O Q_M\right) = \mathrm{i}^{|\set{i: P_i= Y_i}|} \Tr\!\left(X_S e^{H_x}\right) \E_{\hat{\nu}}[e^{V^{(M)}} f] =\mathrm{i}^{|\set{i: P_i= Y_i}|} \Tr\!\left(X_S e^{H_x}\right) \E_{\nu}[e^{V^{(M)}} f]\] and take the limit $M\to +\infty.$
By the Schatten-Holder inequality and the Lie-Trotter product formula ($\norm{Q_M - \exp(H_z + H_x)}_{\operatorname{op}} \xrightarrow[]{M\to \infty} 0 $), we have:
\[ \left|\Tr\!\left(O Q_M\right) -\Tr\!\left(O e^H \right) \right| \leq \norm{O}_{1} \cdot \norm{Q_M - e^H}_{\operatorname{op}} \xrightarrow[]{M\to \infty} 0  \]
thus $ \lim_{M\to \infty}\Tr\!\left(O Q_M\right) = \Tr\!\left(O e^H \right).$
Almost every path has finitely many jumps, so $V^{(M)}\to V$.
Moreover, $v$ is bounded on $\{\pm1\}^n$ and $|f|\leq1$.
Dominated convergence therefore gives
\[\lim_{M\to \infty}\E_{\nu} \left[e^{V^{(M)}(\boldsymbol\sigma)}f (\boldsymbol \sigma) \right]= \E_{\nu} \left[e^{V(\boldsymbol\sigma)} f (\boldsymbol \sigma)\right],\] thus
\[\Tr\!\left(O e^H \right) = \mathrm{i}^{|\set{i: P_i= Y_i}|} \Tr\!\left(X_S e^{H_x}\right) \E_{\nu} \left[e^{V(\boldsymbol\sigma)}f (\boldsymbol \sigma)\right] \]
}
\end{proof}
\begin{proof}[Proof of \cref{prop:multiplicative approx of trotterized X observable}]
  We prove the identities assuming $\eta_i>0$ for every $i$; the cases with $\eta_i=0$ follow by taking the limit $\eta_i\downarrow0$ on both sides.
  
  Recall $V, V^{(M)}$ from \eqref{eq: VM and W}, $H_z, H_x$ and $\nu$ from \cref{sec:expression for observables}. Let  $ \mu = \tilt_{V}\nu .$ 
\eqref{eq:derivation of observables continuous} and \eqref{eq:derivation of observables discretized}   give $\Tr(X_S Q_M) = \Tr(X_S e^H) R_M$ where
    \[ R_M : = \frac{\E_{\nu}[e^{V^{(M)}}]}{\E_{\nu}[e^{V}]}= \E_{\mu}[\exp(V^{(M)}-V)] \]

For $\sigma$ as constructed in \cref{def:general path definition}, and $k\in [0,M-1]\cap \N,$ let 
 $N_k(\sigma)=|\mathcal{J}(\sigma) \cap(\frac{k}{M},\frac{k+1}{M}] |$ be the number of jumps in $(\frac{k}{M},\frac{k+1}{M}], $ and $N(\sigma) = |\mathcal{J}(\sigma)|=\sum_{k=0}^{M-1} N_k(\sigma).$ Since each jump changes $v$ by at most $c,$ we have:
\[\forall t \in [\frac{k}{M},\frac{k+1}{M}):
|v(\boldsymbol\sigma(t))-v(\boldsymbol\sigma(k/M))|
\le c \sum_{i=1}^n N_k(\sigma_i).
\]
This, combined with the triangle inequality, gives:
\begin{align*}
|V(\boldsymbol\sigma)-V^{(M)}(\boldsymbol\sigma)|
&\le \sum_{k=0}^{M-1}
 \int_{k/M}^{(k+1)/M}
 |v(\boldsymbol\sigma(t))-v(\boldsymbol\sigma(k/M))|\,dt
\le \frac1M\sum_{k=0}^{M-1}\sum_{i=1}^n
 c N_k(\sigma_i)
\leq \frac{c N(\boldsymbol\sigma)}{M}.
\end{align*}
where $N (\boldsymbol\sigma)=\sum_{i} N(\sigma_i).$ We thus have
\begin{equation}\label{eq:control ratio by number of flips}
    (\E_\mu[e^{cN/M}])^{-1}
\le \E_\mu[e^{-cN/M}]
\le R_M
\le \E_\mu[e^{cN/M}]
\end{equation}

Next, we bound $ \E_{\mu} [e^{c N/M}]$
 by performing a change of measure from $ \mu $ to $\nu$. Assuming $c/M\le1,$ the function
$f(x) = x^{c/M}$ is concave, thus Jensen's inequality gives \[\E_\mu[e^{c N/M}]=\E_\mu[f(e^{N})] \leq f(\E_\mu[e^{N}]) 
= \E_\mu[e^{N}]^{c/M}\]
Since $\frac{e^V}{\E_{\nu} [e^V]} \leq e^{cn},$ $ \E_\mu[e^{N}]= \E_\nu\left[\frac{e^V}{\E_{\nu} [e^V]} \cdot e^{N}\right] \leq e^{c n} E_{\nu}[e^{N}].$ Hence
\begin{equation}\label{eq:change of measure from mu to nu}
    \log \E_\mu[e^{c N/M}] \leq \frac{c}{M} (cn + \log E_{\nu}[e^{N}]) 
\end{equation}
Note that $\nu=\otimes_i \nu^{(i)}$, so $E_{\nu}[e^{N}] = \prod_{i} \E_{\nu^{(i)}}[e^{ N}].$ We show
\begin{equation}\label{eq:bounding jump function}
   \log \E_{\nu^{(i)}}[e^{N}] = \log \E_{\nu_{\eta_i, (p_i)}}[e^{N}]\leq 2 \eta_i + 1
\end{equation}
then conclude that \[\log E_{\nu}[e^{N}] = \sum_{i=1}^n\log\E_{\nu^{(i)}}[e^{ N}] \leq 2\sum_i \eta_i + n \]
We set $M\geq \frac{c}{\epsilon} ((c+1) n +2 \sum_{i} \eta_i)\geq c.$ Then \eqref{eq:control ratio by number of flips}, \eqref{eq:change of measure from mu to nu} give the desired result.

We finish the proof by proving \eqref{eq:bounding jump function}, i.e., we want to show that $\log \E_{\nu_{\eta, (p)}}[e^N]\leq 2\eta + 1.$
This is obviously true when $ \eta = 0.$ Assume $\eta > 0.$
\[\E_{\sigma \sim \nu_{\eta, (p)}}[e^{ t N(\sigma) }]  = \sum_{\substack{d\geq 0\\d\equiv p \mod 2}} e^{td} \frac{\eta^d }{d! Z_{\eta, (p)}} =\sum_{\substack{d\geq 0\\d\equiv p \mod 2}}  \frac{(e^t\eta)^d }{d! Z_{\eta, (p)}} = \frac{Z_{\eta e^t, (p)}}{Z_{\eta, (p)}}\] and \[ \frac{d}{d\eta} \log Z_{ \eta, (p)} =(\tanh \eta)^{1-2p} \leq 1 + \frac{1}{\eta} \Rightarrow \log \frac{Z_{ e\eta, (p)} }{  Z_{ \eta, (p)}}\leq \int_{\eta}^{e\eta} (1 + \frac{1}{t}) dt \leq (e-1)\eta + 1 \] 
Hence, $\log E_{\nu}[e^{N}] = \sum_{i=1}^n\log\E_{\nu^{(i)}}[e^{ N}] =\sum_{i=1}^n\log \frac{Z_{ e \eta_i, (p_i)}}{Z_{\eta_i, (p_i)}} \leq 2\sum_i \eta_i + n.$

\end{proof}

To prove \cref{prop:addition apporoximation of Pauli observables}, we need the following helper proposition.
\begin{proposition}\label{prop:compare tv distance between gibbs state}
For Hermitian matrices $K,C\in\mathbb{C}^{d\times d}$,
\[
\left\|
\frac{e^{K+C}}{\Tr(e^{K+C})}
-\frac{e^K}{\Tr(e^K)}
\right\|_1
\le \|C\|_{\mathrm{op}}.
\]
\end{proposition}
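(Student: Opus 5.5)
The plan is to interpolate between the two Gibbs states along the segment $K+sC$, $s\in[0,1]$, and to bound the derivative of the normalized state in trace norm. Set $\rho(s)=e^{K+sC}/\Tr(e^{K+sC})$. Then
\[
\norm{\rho(1)-\rho(0)}_1\le \int_0^1 \norm{\rho'(s)}_1\,ds ,
\]
so the claim follows once we show $\norm{\rho'(s)}_1\le \norm{C}_{\op}$ for every $s$. Since $K+sC$ is again Hermitian, it suffices to prove this bound at $s=0$ for an arbitrary Hermitian $K$.

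\textbf{Step 1: compute the derivative.} Duhamel's formula gives
\[
\frac{d}{ds}e^{K+sC}\Big|_{s=0}=\int_0^1 e^{uK}Ce^{(1-u)K}\,du .
\]
Differentiating the normalization as well, we get
\[
\rho'(0)=\int_0^1 \rho^{u}\,C\,\rho^{1-u}\,du-\Tr(\rho C)\,\rho, \qquad \rho=\rho(0).
\]
Here the factors $\Tr(e^K)^{-u}$ and $\Tr(e^K)^{-(1-u)}$ have been absorbed to write $e^{uK}/\Tr(e^K)^{u}=\rho^u$.

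\textbf{Step 2: reduce to a trace-norm bound via duality.} We use
\[
\norm{\rho'(0)}_1=\sup_{\norm{B}_{\op}\le1}\left|\Tr(B\rho'(0))\right|.
\]
Expanding,
\[
\Tr(B\rho'(0))=\int_0^1\Tr(\rho^{1-u}B\rho^u C)\,du-\Tr(\rho C)\Tr(\rho B).
\]
This is a Kubo--Mori (Bogoliubov) covariance $\langle B-\Tr(\rho B),\,C-\Tr(\rho C)\rangle_{\rho}$.

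\textbf{Step 3: bound the covariance.} Cauchy--Schwarz for the Kubo--Mori inner product gives
\[
|\Tr(B\rho'(0))|\le \operatorname{Var}^{KM}_\rho(B)^{1/2}\,\operatorname{Var}^{KM}_\rho(C)^{1/2}.
\]
Each Kubo--Mori variance is at most the symmetric (or the $\langle X^2\rangle$) variance, which is at most $\norm{X-\Tr(\rho X)}_{\op}^2$, and that is at most $\norm{X}_{\op}^2$, since the centered operator has spectrum in $[\lambda_{\min}-\bar x,\lambda_{\max}-\bar x]$. To see this, diagonalize $\rho$ with weights $p_i$. Then
\[
\int_0^1 \Tr(\rho^{1-u}X\rho^uX)\,du=\sum_{ij}|X_{ij}|^2\,\frac{p_i-p_j}{\log p_i-\log p_j},
\]
and the logarithmic mean satisfies $\frac{p_i-p_j}{\log p_i-\log p_j}\le \frac{p_i+p_j}{2}$. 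This yields $\operatorname{Var}^{KM}_\rho(X)\le \Tr(\rho X^2)-\Tr(\rho X)^2\le \norm{X}_{\op}^2$ after centering. Therefore $\norm{\rho'(0)}_1\le \norm{B}_{\op}\norm{C}_{\op}\le\norm{C}_{\op}$, and integrating over $s$ finishes the proof.

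\textbf{Main obstacle.} The delicate point is Step 3, getting the constant exactly $1$ rather than $2$. The naive bound, with the triangle inequality on the two terms of $\rho'(0)$, gives $2\norm{C}_{\op}$. The Kubo--Mori Cauchy--Schwarz together with the logarithmic-mean inequality is what I expect to be needed. The variance of the centered $X$ should be bounded as $\Tr(\rho X^2)-\Tr(\rho X)^2\le\norm{X}_{\op}^2$, i.e.\ using $\Tr(\rho X^2)\le\norm{X}_{\op}^2$ with the subtracted mean only helping.
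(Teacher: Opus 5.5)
Your proof is correct, and it takes a genuinely different route from the paper's. The paper's argument is global rather than infinitesimal. With $\rho=e^{K+C}/\Tr(e^{K+C})$, $\sigma=e^K/\Tr(e^K)$ and $\delta=\|\rho-\sigma\|_1$, it applies quantum Pinsker to both relative entropies to get $\delta^2\le D(\rho\|\sigma)+D(\sigma\|\rho)$. It then observes that this symmetrized relative entropy equals $\Tr[(\rho-\sigma)(\log\rho-\log\sigma)]=\Tr[(\rho-\sigma)C]$, since the normalizations cancel because $\Tr(\rho-\sigma)=0$. Hölder, $\Tr[(\rho-\sigma)C]\le\delta\|C\|_{\mathrm{op}}$, finishes the proof.

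That is three lines, but the real work is outsourced to quantum Pinsker. Your interpolation avoids Pinsker and uses only Duhamel's formula, Cauchy--Schwarz for the Kubo--Mori form, and the elementary logarithmic-mean versus arithmetic-mean inequality. In exchange it gives the pathwise estimate $\delta\le\int_0^1\operatorname{Var}^{KM}_{\rho(s)}(C)^{1/2}\,ds$. This is at least as strong as the paper's intermediate bound. Taking $B=C$ in your Step 2 gives $\frac{d}{ds}\Tr(\rho(s)C)=\operatorname{Var}^{KM}_{\rho(s)}(C)$, so $\Tr[(\rho(1)-\rho(0))C]=\int_0^1\operatorname{Var}^{KM}_{\rho(s)}(C)\,ds\ge\bigl(\int_0^1\operatorname{Var}^{KM}_{\rho(s)}(C)^{1/2}\,ds\bigr)^2$. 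Both arguments improve for free to $\frac12(\lambda_{\max}(C)-\lambda_{\min}(C))$, since shifting $C$ by a multiple of the identity changes neither state.

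Two small repairs are needed in Step 3.

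First, the claim $\|X-\Tr(\rho X)\|_{\mathrm{op}}\le\|X\|_{\mathrm{op}}$ is false. Take $X=\mathrm{diag}(1,-1)$ and $\rho=\mathrm{diag}(1-\varepsilon,\varepsilon)$; the centered operator has norm $2-2\varepsilon$. Knowing that the spectrum lies in $[\lambda_{\min}-\bar x,\lambda_{\max}-\bar x]$ only yields $2\|X\|_{\mathrm{op}}$. Drop that detour and keep the chain you state afterwards, which is what the constant $1$ actually rests on:
$\operatorname{Var}^{KM}_\rho(X)\le\Tr(\rho X^2)-\Tr(\rho X)^2\le\Tr(\rho X^2)\le\|X\|_{\mathrm{op}}^2$.

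Second, restrict the supremum in Step 2 to Hermitian $B$. This is allowed because $\rho'(0)$ is Hermitian: take $B$ to be the difference of its spectral projections onto the positive and nonpositive parts. Then $\int_0^1\Tr(\rho^{1-u}B\rho^uC)\,du$ is a genuine real inner product, and your Cauchy--Schwarz step and the formula $\sum_{ij}|X_{ij}|^2\frac{p_i-p_j}{\log p_i-\log p_j}$ apply as written.
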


\begin{proof}[Proof of \cref{prop:compare tv distance between gibbs state}]
Set $\rho=e^{K+C}/\Tr(e^{K+C})$,
$\sigma=e^K/\Tr(e^K)$, and
$\delta=\|\rho-\sigma\|_1$. Assume wlog that $\delta>0.$
Write $D(\rho\|\sigma)=\Tr[\rho(\log\rho-\log\sigma)].$
Quantum Pinsker's inequality and Schatten--Holder give
\[
\begin{aligned}
\delta^2
&\le D(\rho\|\sigma)+D(\sigma\|\rho)
=\Tr[(\rho-\sigma)(\log\rho-\log\sigma)]
=\Tr[(\rho-\sigma)C]
\le \delta\|C\|_{\mathrm{op}},
\end{aligned}
\]
where the scalar normalization terms cancel since
$\Tr(\rho-\sigma)=0$.
Dividing by $\delta$ proves the claim.
\end{proof}
\begin{proof}[Proof of \cref{prop:addition apporoximation of Pauli observables}]
Let 
\[ 
S_M=e^{H_x/(2M)}e^{H_z/M}e^{H_x/(2M)},\quad
Q'_M=S_M^M, \quad
A=e^{H_x/(2M)}\]
Since
$Q_M=A^{-1}Q'_M A$, and $S_M=C^\dagger C\succ0$, where
$C=e^{H_z/(2M)}e^{H_x/(2M)}$, we have
$\Tr(Q_M)=\Tr(Q'_M) := Z_M>0.$ Let $\rho_M:=Q'_M/Z_M\succ 0,$ then $\|\rho_M\|_1=\Tr(\rho_M) =1$

Conjugation by $O=\prod_i P_i$ gives $O H_x O=H_x-2B$ where $B=\sum_{i\in D}\eta_i X_i$ and $D=\{i:P_i\in\{Y_i,Z_i\}\}. $
Diagonalizing in the $X$-basis gives:
\[
AO A^{-1}O  =e^{H_x/(2M)} O e^{-H_x/2M} O 
=e^{H_x/(2M)}e^{-(H_x-2B)/(2M)} 
=e^{B/M}
\]
and
\[ \Tr(O Q_M) = \Tr(O A^{-1}Q'_M A) = \Tr\left((A O A^{-1} O) (OQ'_M)\right ) = \Tr(e^{B/M} O Q'_M)\]
Schatten--Holder inequality gives
\begin{equation}\label{eq:helper compare Q with Q'}
\begin{split}
\frac{\left|\Tr(O Q_M)
             -\Tr(O Q'_M)\right|}{Z_M}
&=\left|\Tr\!\left((e^{B/M}-I)O\rho_M\right)\right|
\le \|e^{B/M}-I\|_{\mathrm{op}}\norm{O}_{\mathrm{op}} \norm{\rho_M}_1 \\
&\le e^{\|B\|_{\mathrm{op}}/M}-1
 =\exp\!\left(\frac1M\sum_{i\in D}\eta_i\right)-1\leq \frac{\epsilon}{2}
\end{split}.
\end{equation}
By \cite[Lemma 3]{Bravyi2014MonteCS}, $ \norm{\log S_M^M - (H_z + H_x)}_{\mathrm{op}}\leq \frac{12 R^3}{M^2} $  where $ R = 2(\norm{H_z}_{\mathrm{op} } +\norm{H_x}_{\mathrm{op}}) \leq 2 cn.$ 
This combined with Schatten-Holder and \cref{prop:compare tv distance between gibbs state} give:
\[\left|\Tr(O\rho)- \frac{\Tr(O Q'_M)}{Z_M}\right| = |\Tr(O\rho) - \Tr(O\rho_M)|\leq \norm{\rho -\rho_M}_1 \leq \norm{\log S_M^M - (H_z + H_x)}_{\mathrm{op}} \leq \frac{12 R^3}{M^2}\leq \frac{\epsilon}{2} \]
This together with \cref{eq:helper compare Q with Q'} yields the desired result.
\end{proof}
\subsection{From \cref{sec:tractable}}\label{subsec:defer from tractable}
\begin{proof}[Proof of \cref{prop:1D Ising sampling time}]
Let $\pi \equiv \pi_{\mathbf K,\mathbf w}^{\mathrm{1D Ising, CBC}} $ with $ \mathbf K\in \R^M$ and $ \mathbf w\in(\R \cup \set{\pm\infty})^M.$

Let $ \ket{+1} =\begin{bmatrix} 1 \\ 0\end{bmatrix}, \ket{-1}=\begin{bmatrix} 0 \\ 1\end{bmatrix} .$ 

For $w\in \R,$ let $A(w) =\frac{1}{e^w + e^{-w}}\begin{bmatrix} e^w & 0 \\ 0 & e^{-w}\end{bmatrix} .$ Taking the limit as $ w\to \pm\infty$, we define \[ A(+\infty) = \begin{bmatrix} 1 & 0 \\ 0 & 0\end{bmatrix}, \quad A(-\infty) = \begin{bmatrix} 0 & 0 \\ 0 & 1\end{bmatrix}.\]
For $K\in \R,$ let $B(K) =\frac{1}{e^K + e^{-K}}\begin{bmatrix} e^K & e^{-K} \\  e^{-K} & e^K\end{bmatrix} .$ 

    For $\mathbf{s} = (s^{(k)})_{k=0}^{M-1},$ we can rewrite:
    \[ \pi(\mathbf{s}) =\frac{1}{Z} \prod_{k=0}^{M-1}\left(\bra{s^{(k)}} A(w^{(k)}) \ket{s^{(k)}} \bra{s^{(k)}} B(K^{(k)})  \ket{s^{(k+1)}} \right)\text{ with } s^{(M)} = s^{(0)}, \]
    where $Z = \Tr\left(\prod_{k=0}^{M-1} \left(A(w^{(k)})B(K^{(k)})  \right) \right) .$ For $k\in[0,M-1]\cap \N$, let $ E_k = \prod_{\ell=k}^{M-1} \left(A(w^{(\ell)})B(K^{(\ell)})\right).$ We can compute $ E_0, \cdots, E_{M-1}$ in $O(M)$ time using the recursion 
    \[E_k = A(w^{(k)})B(K^{(k)}) E_{k+1}\, \forall k\in[0,M-1]\cap \N , \quad E_M = \begin{bmatrix} 1 & 0 \\ 0 & 1\end{bmatrix}.\] To sample $ \mathbf{s},$ we sample each coordinate $s^{(k)}$ in $O(1)$ time as follows: 
    \begin{enumerate}
        \item Sample $ s^{(0)}\in \set{\pm 1} $  with probability proportional to $ \bra{s^{(0)}} E_0 \ket{s^{(0)}}$
        \item For $k=1, \cdots, M-1,$ sample $ s^{(k)}$ with probability proportional to \[\bra{s^{(k-1)}} B(K^{(k-1)}) \ket{s^{(k)}} \bra{s^{(k)}} E_k \ket{s^{(0)}}.\]
    \end{enumerate}
    We consider the case $\mathbf K=\mathbf K_{\eta,(p)}$ with $\eta =0.$
We need to sample from $\lineartilt_{\mathbf w} \hat\nu_{0,(p),M}$ where $\hat\nu_{0,(p),M}$
is as in \cref{prop:approximation of path functional}. Explicitly,
\[
\hat\nu_{0,(0),M}
=\frac12\bigl(\delta_{\mathbf1}+\delta_{-\mathbf1}\bigr),
\qquad
\hat\nu_{0,(1),M}
=\frac{1}{2M}\sum_{u\in\{\pm1\}}\sum_{r=1}^{M}
  \delta_{\mathbf v^{u,r}},\footnote{Here $\delta_{\mathbf s} $ is the Dirac delta distribution at $ \mathbf{s}.$}
\quad 
(\mathbf{v}^{u,r})^{(k)}
=
\begin{cases}
u, & k<r,\\
-u, & k\ge r.
\end{cases}
\]
Since $ \hat\nu_{0,(0),M}$ and $ \hat\nu_{0,(1),M}$ have support size at most $2M,$ sampling $\lineartilt_{\mathbf w} \hat\nu_{0,(p),M}$ in $O(M)$ time is straightforward. 

Let $S^{+1} = \set{ k: w^{(k)}=+\infty}, S^{-1} = \set{ k: w^{(k)}=-\infty}, S^0 = \set{k: w^{(k)}\in\R},$ and $ S = S^{+1}\cup S^{-1}.$  

For $r\in [M],$ we compute
$
W_r=\sum_{k\in S^0 \cap [0, r-1]}w^{(k)},$
which takes $O(M)$ time in total.

For $p=0,$ the distribution $\lineartilt_{\mathbf w} \hat\nu_{0,(p),M} $ is well-defined iff either $ S^{+1}$ or $S^{-1}$ is empty; in this case, we sample 
$u\in\{\pm1\}$ with probability proportional
to $e^{uW_M}\mathbbm{1}_{S^{-u}=\emptyset}$ and output $u\mathbf1$.

We let
\[
a_u=\max(S^u\cup\{-1\}),\qquad
b_u=\min(S^{-u}\cup\{M\}).
\]
Computing \(a_u,b_u\) for both signs takes \(O(M)\) time, after which each compatibility test takes \(O(1)\) time.

For $p=1$, the vector $\mathbf v^{u,r}$ equals $u$ before $r$
and $-u$ afterward, so it is compatible with the infinite fields iff
\[
S^{-u}\cap[0,r-1]=\emptyset \text{ and }
S^u\cap[r,M-1]=\emptyset \Leftrightarrow a_u < r\leq b_u.
\]
The distribution $\lineartilt_{\mathbf w} \hat\nu_{0,(1),M} $ is well-defined iff at least one pair $(u,r)$
satisfies these conditions.

In this case, sample $(u,r)\in\{\pm1\}\times[M]$ with probability
proportional to
\[  \exp( \sum_{k\in S^0 \cap [0, M-1]} w^{(k)} (\mathbf v^{u,r})^{(k)}) \mathbbm{1}_{a_u < r\leq b_u} =
e^{u(2W_r-W_M)}
\mathbbm{1}_{a_u < r\leq b_u},
\]
and output $\mathbf v^{u,r}$.
The total runtime in both cases is $O(M).$
\end{proof}
\begin{proof}[Proof of \cref{lem:modify counting to sampling}]
    
    We will use the following result.
\begin{proposition}[{\cite{DF98}, restated in \cite[Theorem 2.2]{SVV09}}]
\label{prop:product of mean}
    Let $W_{1},\ldots,W_{\ell}$ be independent random variables that take positive real values, and satisfy 
$\frac{\E[W_i^{2}] }{\E[W_i]^{2}} \le B$ for $i \in [\ell]$. 
Let $\widehat{W} = W_{1} \cdots W_{\ell}$. 
Let $S_i$ be the average of $\lceil 16B\ell/\epsilon^{2}\rceil $ independent random samples 
from $W_i$ for $i \in [\ell]$. 
Let $\widehat{S} = S_{1} S_{2} \cdots S_{\ell}$. 
Then
\[
\Pr[(1-\epsilon)\E[\widehat{W}]
   \le \widehat{S} \le
   (1+\epsilon)\E[\widehat{W}] ]\ge \tfrac{3}{4}.
\]
\end{proposition}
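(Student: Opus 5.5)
The plan is to apply Chebyshev's inequality to the product estimator $\widehat{S}$, using independence to control its relative variance factor by factor. Let $m=\lceil 16B\ell/\epsilon^{2}\rceil$ be the number of samples averaged in each $S_i$.

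First, for each $i$, $S_i$ is an average of $m$ i.i.d.\ copies of $W_i$. Hence $\E[S_i]=\E[W_i]$ and
\[
\frac{\E[S_i^2]}{\E[S_i]^2}=1+\frac{\operatorname{Var}(W_i)}{m\,\E[W_i]^2}=1+\frac{1}{m}\left(\frac{\E[W_i^2]}{\E[W_i]^2}-1\right)\le 1+\frac{B}{m}.
\]

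Second, the $S_i$ are independent, since the samples for different $i$ are drawn independently. So $\E[\widehat{S}]=\prod_i\E[W_i]=\E[\widehat{W}]$, and the second moment also factorizes:
\[
\frac{\E[\widehat{S}^2]}{\E[\widehat{S}]^2}=\prod_{i=1}^{\ell}\frac{\E[S_i^2]}{\E[S_i]^2}\le\left(1+\frac{B}{m}\right)^{\ell}\le \exp\!\left(\frac{B\ell}{m}\right)\le \exp\!\left(\frac{\epsilon^2}{16}\right).
\]
We are in the regime $\epsilon\in(0,1)$ implicit in the statement, so $x=\epsilon^2/16\le 1/16$. For such $x$ we have $e^{x}\le 1+2x$, which gives
\[
\frac{\operatorname{Var}(\widehat{S})}{\E[\widehat{S}]^2}\le \frac{\epsilon^2}{8}.
\]

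Finally, Chebyshev's inequality yields
\[
\Pr\bigl[|\widehat{S}-\E[\widehat{W}]|>\epsilon\,\E[\widehat{W}]\bigr]\le \frac{\operatorname{Var}(\widehat{S})}{\epsilon^2\E[\widehat{W}]^2}\le \frac18\le\frac14,
\]
which is the claim. Positivity of the $W_i$ ensures $\E[W_i]>0$, so all the ratios above are well defined.

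There is no real obstacle. The only point needing care is the multiplicativity of the normalized second moment, which is exactly where independence across the $i$'s is used. Without it the relative variances would not multiply, and the bound $(1+B/m)^{\ell}$ would fail.
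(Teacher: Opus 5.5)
Your proof is correct, and it is the standard second-moment argument of Dyer--Frieze (bound each relative variance by $B/m$, multiply using independence, then apply Chebyshev), which the paper does not reprove but only cites via \cite{DF98,SVV09}. Restricting to $\epsilon\in(0,1)$ is harmless, since that is the only range in which the paper uses the proposition (in the proof of \cref{lem:modify counting to sampling}).
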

 Let $\lambda_i = \frac{i}{L}\forall i\in [0,L]\cap \N.$ 
 For $ i\in [L],$ let $W_i$ be the random variable $e^{ V(\sigma)/ L},$ where $\sigma$ is (independently) sampled from $\mu_{\lambda_{i-1}}.$
We have  $\E[W_i]= \E_{\sigma \sim \mu_{\lambda_{i-1}}}[e^{ V(\sigma)/ L}] =\frac{Z(\lambda_{i})}{Z(\lambda_{i-1})}$ and $ \frac{\E[W_i^2]}{\E[W_i]^2} \leq B$ 
 for $B = e^2.$
 Let $\widehat{W} = W_1 W_2 \cdots W_L.$ Then
 \[\E[\widehat{W} ] = \prod_{i=1}^L \E[W_i]= \frac{Z(1)}{Z(0)} = Z(1)\]

 Let $S_i$ be the average of $\lceil 16B L/\epsilon^{2}\rceil$ independent random samples 
from $W_i$ for $i \in [L]$, and let 
$\widehat{S} = \prod_{i\in [L]} S_{i}$. The algorithm outputs $\hat{Z} = \widehat{S}$.
By \Cref{prop:product of mean},
\[ \Pr[(1-\epsilon) Z(1) \leq \hat{Z} \leq (1+\epsilon) Z(1)] =  \Pr[ (1-\epsilon)\E[\widehat{W}]
   \le \widehat{S} \le
   (1+\epsilon)\E[\widehat{W}] ] \geq \frac{3}{4}.  \]
   The total number of samples used is $K = L \lceil 16B L/\epsilon^{2}\rceil = \Theta (L^2/\epsilon^2).$
   Let $ \hat{Z}'$ be the estimate obtained by running the above algorithm with $\mu_{\lambda}$ replaced by $\hat{\mu}_{\lambda} $ where $d_{TV}(\hat{\mu}_{\lambda},\mu_{\lambda} )\leq \frac{1}{8 K}$ for $\lambda= \lambda_i, i\in \set{0,\cdots, L-1}. $ We couple the samples from $ \hat{\mu}_\lambda$ and $ \mu_{\lambda}$ so that, with probability $ \geq 1 - K \cdot \frac{1}{8K},$ all $K$ samples drawn from $ \hat{\mu}_{\lambda}$ are identical to samples from $\mu_{\lambda}$ and thus $ \hat{Z}' = \hat{Z}.$ Hence, with probability $ \geq 5/8,$  $ (1-\epsilon) Z(1) \leq \hat{Z}' \leq (1+\epsilon) Z(1).$ To boost the success probability to $ 1-\delta,$ we take the median of $\Theta(\log(1/\delta))$ independent trials; the proof is via Chernoff's bound. 
\end{proof}

\begin{proof}[Proof of \cref{thm:semilogconcave}]

The proof is a simplified version of the proofs of  \cite[Theorem 64]{AKV24} (the formal version of \cite[Theorem 9]{AKV24}), and  also uses elements from \cite[Theorem 54]{AKV24}. We elaborate in the most detail on the parts which differ.

Let $J'=J\otimes I_N$ and note that $\|J\|_{\mathrm{op}}=\|J'\|_{\mathrm{op}}$.
We view the spin system as a distribution on $(\mathbb S^{N-1})^n$, and let $(\nu_t)_{t\in [0,1]}$ be the stochastic localization process with the time-invariant driving matrix $C_t = (J \otimes I_N)^{1/2}$ initialized at $\nu_0\equiv \nu$. This yields
\[ \frac{d\nu_t}{d\mu}(x) \propto \exp\left(\sum_i \langle (h_t)_i, x_i \rangle + \frac{1 - t}{2}\sum_{i,j} J_{ij} \langle x_i, x_j \rangle\right) \]
for some adapted process $h_t = ((h_t)_i)_{i=1}^n$ in $\mathbb{R}^{Nn}$. 
Recall the trickle-down equation
\[ 	\Cov(\nu_t) = \E[\Cov(\nu_1) \mid \mathcal{F}_t] + \int_t^1 \E[\Cov(\nu_s) J' \Cov(\nu_s) \mid \mathcal{F}_t] ds 
\]
and applying the triangle inequality yields
\[ \|\Cov(\nu_t)\|_{\operatorname{op}} \le \left\|\E[\Cov(\nu_1) \mid \mathcal{F}_t]\right\|_{\operatorname{op}} + \int_t^1 \|\E[\Cov(\nu_s) J' \Cov(\nu_s) \mid \mathcal{F}_t]\|_{\operatorname{op}} ds. \]
Note that
\[
\nu_1=\bigotimes_{i=1}^n\lineartilt_{(h_1)_i}\mu^{(i)},
\qquad
\operatorname{Cov}(\nu_1)
=\bigoplus_{i=1}^n
\operatorname{Cov}\!\left(\lineartilt_{(h_1)_i}\mu^{(i)}\right)
\preceq\rho I.
\]
Hence
\[
\mathbb E[\operatorname{Cov}(\nu_1)\mid\mathcal F_t]\preceq\rho I.
\]



We therefore have the inequality
\[ \|\Cov(\nu_t)\|_{\operatorname{op}} \le \rho + \|J\|_{\operatorname{op}} \int_t^1 \E[\|\Cov(\nu_s)\|_{\operatorname{op}}^2 \mid \mathcal{F}_t] ds, \]
Applying the scalar
comparison as in \cite[Lemma 55]{AKV24} yields
$\|\Cov(\nu_t)\|_{\operatorname{op}} \leq \frac{\rho}{1- \rho \norm{J}_{\operatorname{op}} (1-t)}. $
The ATE for $\nu$ then follows from a similar argument as in the proof of \cite[Theorem 54]{AKV24}, using entropic stability and the supermartingale property.
For an arbitrary tilt $w,$ applying the same localization argument starting at $ \lineartilt_w\nu_t$ and running for time $1-t$ gives $ \|\Cov(\lineartilt_w\nu_t)\|_{\operatorname{op}} \leq \frac{\rho}{1- \rho \norm{J}_{\operatorname{op}} (1-t)}.$ Hence, $\nu_t$ is  $\frac{\rho \norm{J}_{\operatorname{op}}}{1- \rho \norm{J}_{\operatorname{op}} (1-t)} $-entropically stable with respect to the function $\psi(x,y) = \frac{1}{2} \norm{{J'}^{1/2} (x - y)}_2^2$. 
 Therefore, for any nonnegative function $f$,
\begin{align*}
\operatorname{Ent}_{\nu_0}(f)
&\leq
\exp\left(\int_0^1 \frac{\rho \norm{J}_{\operatorname{op}}}{1- \rho \norm{J}_{\operatorname{op}} (1-t)} \,dt\right)
\mathbb{E}\left[\operatorname{Ent}_{\nu_1}(f)\right] \\
&\leq
\exp\left(\int_0^1 \frac{\rho \norm{J}_{\operatorname{op}}}{1- \rho \norm{J}_{\operatorname{op}} (1-t)} \,dt\right)
\sum_i \mathbb{E}\left[
\mathbb{E}_{X_{\sim i}\sim\nu_1}\left[
\operatorname{Ent}_{\nu_1(\cdot\mid X_{\sim i})}(f)
\right]\right] \\
&\leq
\exp\left(\int_0^1 \frac{\rho \norm{J}_{\operatorname{op}}}{1- \rho \norm{J}_{\operatorname{op}} (1-t)} \,dt\right)
\sum_i \mathbb{E}_{X_{\sim i}\sim\nu_0}\left[
\operatorname{Ent}_{\nu_0(\cdot\mid X_{\sim i})}(f)
\right],
\end{align*}
where in the first inequality we used \cite[Proposition 35]{AKV24} (\cite[Proposition 39]{chen2022localization}), the second inequality is tensorization for the product measure $\nu_1$, and the last inequality is the supermartingale property (\cite[Lemma 39]{AKV24}).
Thus $ \nu$ satisfies ATE with constant at most $C = \exp\left(\int_0^1 \frac{\rho \norm{J}_{\operatorname{op}}}{1- \rho \norm{J}_{\operatorname{op}} (1-t)} \,dt\right) = \frac{1}{1- \rho \norm{J}_{\operatorname{op}}}.$ Hence, the Glauber dynamics on $\nu$ contracts KL-divergence by $\left(1-\frac1{Cn}\right).$ 
\end{proof}
\subsubsection{Gibbs state preparation}\label{sec:Gibbs state preparation}

We use the convention from the proof of \cref{prop:derive expression for observables}. We let $ \ket{+1} =\begin{bmatrix} 1 \\ 0\end{bmatrix}, \ket{-1}=\begin{bmatrix} 0 \\ 1\end{bmatrix} .$ 

For $\mathbf{s}=(s_1,\cdots,s_n) \in \set{\pm 1}^n,$ we let $\ket{\mathbf s } = \bigotimes_{i=1}^n\ket{s_i}$ be the corresponding $Z$-basis vector.

We need the following theorem, which follows from \cite[Corollary 3]{wong2026lee}.
\begin{theorem}\label{thm:marginal to state reduction}
Let $\epsilon_{\mathrm{TV}}\in(0,1)$ and let
$H,R_M\in\mathbb C^{2^n\times2^n}$ be Hermitian with
$\|R_M\|_{\mathrm{op}}\leq\epsilon_{\mathrm{TV}}/8$.
Set $\rho=e^H/\Tr(e^H)$ and suppose
$U_M=e^{(H+R_M)/2}$ is entrywise nonnegative in the
$Z$-basis,
$\bra{x}U_M\ket{y}\geq0$ for all
$x,y\in\{\pm1\}^n$.
For partial configurations $a\in\{0,\pm1\}^n$, where $0$
denotes an unspecified spin, define
\[
P_a:=\prod_{i:a_i\neq0}\frac{I+a_iZ_i}{2},
\qquad
q_M(\ell,r):=
\frac{\Tr(P_\ell U_M P_rU_M)}{\Tr(U_M^2)}.
\]
Suppose there exists a randomized classical algorithm that runs in time $T$
and returns nonnegative estimates
$\widehat q_M(\ell,r;\omega)$\footnote{Here $\omega$ denotes
the algorithm's random tape.}
such that for every $\ell,r\in\{0,\pm1\}^n$,
\[
\Pr_\omega\!\left[
\widehat q_M(\ell,r;\omega)
= e^{\pm \epsilon} q_M(\ell,r)
\right]\geq1-\delta,
\qquad
\epsilon=\frac{\epsilon_{\mathrm{TV}}^2}{256n},
\quad
\delta=\frac{\epsilon_{\mathrm{TV}}}{16\cdot4^n}.
\]
Then there exists a quantum algorithm that prepares a state
$\widehat\rho$
such that $\|\widehat\rho-\rho\|_1\leq\epsilon_{\mathrm{TV}}$
in time $\poly(n,T,\log(1/\epsilon_{\mathrm{TV}}))$.\footnote{We assume
that the classical algorithm admits a finite-precision implementation preserving the stated guarantees with
$\poly(n,T,\log(1/\epsilon_{\mathrm{TV}}))$ bit complexity.}
\end{theorem}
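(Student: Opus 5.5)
The plan is to reduce the statement to coherent sampling of a nonnegative amplitude vector, in the spirit of \cite[Section 3.2]{wong2026lee}, and to control three error sources separately: the perturbation $R_M$, the relative errors in the marginals, and the failure probability $\delta$.

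\textbf{Step 1: a purification of $\rho_M$.} Since $U_M=e^{(H+R_M)/2}$ is Hermitian with nonnegative entries in the $Z$-basis, it is a real symmetric matrix. Set $\rho_M:=U_M^2/\Tr(U_M^2)=e^{H+R_M}/\Tr(e^{H+R_M})$. Consider the $2n$-qubit state
\[
\ket{\psi}=\sum_{x,y\in\{\pm1\}^n}\sqrt{p(x,y)}\,\ket{x}\ket{y},\qquad p(x,y):=\frac{\bra{x}U_M\ket{y}^2}{\Tr(U_M^2)}.
\]
Because the entries of $U_M$ are nonnegative, $\sqrt{p(x,y)}=\bra{x}U_M\ket{y}/\sqrt{\Tr(U_M^2)}$. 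Tracing out the second register therefore gives exactly $\rho_M$.

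By \cref{prop:compare tv distance between gibbs state} with $K=H$ and $C=R_M$, we have $\|\rho_M-\rho\|_1\le\|R_M\|_{\mathrm{op}}\le\epsilon_{\mathrm{TV}}/8$. It thus suffices to prepare a state within trace distance $O(\epsilon_{\mathrm{TV}})$ of $\ket{\psi}$.

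\textbf{Step 2: marginals of $p$ are the quantities $q_M$.} For partial configurations $\ell,r$ we have
\[
\Tr(P_\ell U_M P_r U_M)=\sum_{x\succeq \ell,\,y\succeq r}\bra{x}U_M\ket{y}\bra{y}U_M\ket{x}=\sum_{x\succeq\ell,\,y\succeq r}\bra{x}U_M\ket{y}^2,
\]
where $x\succeq\ell$ means $x$ agrees with $\ell$ on its specified spins. Hence $q_M(\ell,r)=p(\{x\succeq\ell,\,y\succeq r\})$, so every prefix marginal of $p$ is one of the $q_M$'s.

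\textbf{Step 3: Grover--Rudolph preparation.} I would use Grover--Rudolph style preparation, fixing the $2n$ qubits one at a time: first the $x$ spins, then the $y$ spins. At each step, conditioned on the already-fixed prefix $(\ell,r)$ held in superposition, rotate the next qubit by the angle $\theta$ with $\cos^2\theta=\widehat q_M(\ell^{+},r)/(\widehat q_M(\ell^{+},r)+\widehat q_M(\ell^{-},r))$, or the analogous ratio for a $y$ spin. Each angle is computed reversibly from a run of the classical estimator, which costs $\poly(n,T)$ gates under the finite-precision assumption, and the workspace is uncomputed afterwards.

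\textbf{Step 4: fixing the random tape.} The estimator is randomized, so I would fix a single random tape $\omega$ classically before the quantum routine. There are at most $9^n$ relevant query pairs $(\ell,r)$; more carefully, only the prefix-structured ones are queried, and these number at most $O(n)\cdot 4^n$. With $\delta=\epsilon_{\mathrm{TV}}/(16\cdot4^n)$, a union bound shows that with probability at least $1-O(\epsilon_{\mathrm{TV}})$ all queried estimates are $e^{\pm\epsilon}$-accurate. The constants here need to be matched to \cite[Corollary 3]{wong2026lee}; I would follow its exact counting.

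\textbf{Step 5: error analysis.} When all estimates are accurate, each conditional probability is off by a factor $e^{\pm2\epsilon}$. The prepared amplitude vector then has classical distribution $\hat p$ with $\hat p/p\in e^{\pm 4n\epsilon}$, and its amplitudes remain nonnegative. The fidelity $\sum\sqrt{p\hat p}$ is therefore at least $e^{-2n\epsilon}$, so the trace distance is at most $2\sqrt{1-e^{-4n\epsilon}}\lesssim 4\sqrt{n\epsilon}=\epsilon_{\mathrm{TV}}/4$ for $\epsilon=\epsilon_{\mathrm{TV}}^2/(256n)$. On a bad tape, the trace distance is at most $2$, which contributes an additional $O(\epsilon_{\mathrm{TV}})$ on average. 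Adding the $\epsilon_{\mathrm{TV}}/8$ from Step 1 and tracing out the second register (which is contractive) gives $\|\widehat\rho-\rho\|_1\le\epsilon_{\mathrm{TV}}$.

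\textbf{Main obstacle.} The step I expect to be delicate is the coherent use of a randomized, approximate oracle (Step 4). I must ensure that, after fixing $\omega$, the estimator acts deterministically and reversibly, so that no garbage entangles with the state. I must also make the union bound over queried prefixes match the $4^n$ in $\delta$; this is exactly what \cite[Corollary 3]{wong2026lee} handles, and I would invoke it there rather than reprove it.
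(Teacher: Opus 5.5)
Your proposal is correct and follows essentially the same route as the paper. The shared ingredients are: the vectorized purification of $U_M$ (your $\sqrt{p(x,y)}$ equals $\bra{x}U_M\ket{y}/\sqrt{\Tr(U_M^2)}$ by nonnegativity); prefix marginals equal to $q_M(\ell,r)$, with the first register fixed before the second; Grover--Rudolph rotations driven by a single fixed random tape; the fidelity bound $e^{-2n\epsilon}$, giving $4\sqrt{n\epsilon}=\epsilon_{\mathrm{TV}}/4$; and \cref{prop:compare tv distance between gibbs state} for the $R_M$ perturbation.

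The only fix needed is the count in Step~4. The number of queried nonempty prefixes is $\sum_{k=1}^{2n}2^k=2^{2n+1}-2<2\cdot 4^n$, not $O(n)\cdot 4^n$; the latter would only give failure probability $O(n\,\epsilon_{\mathrm{TV}})$. With the correct count, the bad-tape probability is at most $\epsilon_{\mathrm{TV}}/8$. The budget $\epsilon_{\mathrm{TV}}/8+\epsilon_{\mathrm{TV}}/4+2\cdot\epsilon_{\mathrm{TV}}/8$ then leaves $3\epsilon_{\mathrm{TV}}/8$ for the gate-synthesis error of the $2n$ controlled rotations. You should budget that error explicitly; the paper spends $\epsilon_{\mathrm{TV}}/4$ there, via precision $\epsilon_{\mathrm{TV}}/(16n)$ per rotation.
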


\begin{proof}
For two $n$-qubit registers $L,R$, define
\begin{equation}\label{eq:vectorization}
\operatorname{vec}(A)
:=\sum_{x,z\in\{\pm1\}^n}
\langle x|A|z\rangle\,|x\rangle_L\otimes|z\rangle_R,
\qquad
|\Psi_M\rangle
:=\frac{\operatorname{vec}(U_M)}{\sqrt{\Tr(U_M^2)}}.    
\end{equation}
Since $U_M=U_M^\dagger$, this is a normalized purification of
$\rho_M
:=\Tr_R\!\left(|\Psi_M\rangle\langle\Psi_M|\right)
=\frac{U_M^2}{\Tr(U_M^2)}
=\frac{e^{H+R_M}}{\Tr(e^{H+R_M})}.
$
By \cref{prop:compare tv distance between gibbs state}, we have:
\begin{equation}\label{eq:trotter error}
 \|\rho_M-\rho\|_1\leq \|R_M\|_{\mathrm{op}}   
\end{equation}

Order the qubits as $L_1,\ldots,L_n,R_1,\ldots,R_n$.
For a prefix $y=(y_1,\ldots,y_k)\in\{\pm1\}^k$,
where $k\in\{0,\ldots,2n\}$, define
\[
\begin{aligned}
(\ell(y),r(y))
&:=(y_1,\ldots,y_k,0,\ldots,0)\in\{0,\pm1\}^{2n},\\
q(y)&:=q_M(\ell(y),r(y)),\\
\widehat q_\omega(y)
&:=\widehat q_M(\ell(y),r(y);\omega).
\end{aligned}
\]
Thus
\[
q(y)=
\sum_{z\in\{\pm1\}^{2n-k}}
|\langle yz|\Psi_M\rangle|^2.
\]
Sample one random tape $\omega$ and use it for every evaluation.
There are $N=2^{2n+1}-2$ nonempty prefixes, so the event
\[
\mathcal G:=
\left\{
e^{-\epsilon}q(y)\leq\widehat q_\omega(y)
\leq e^\epsilon q(y)
\text{ for every nonempty prefix }y
\right\}
\]
satisfies
$\Pr_\omega(\mathcal G^c)\leq N\delta
\leq\epsilon_{\mathrm{TV}}/8$.

Starting from $|+1\rangle^{\otimes2n}$, successively apply
\[
|y\rangle |+1\rangle
\longmapsto
|y\rangle\sum_{b\in\{\pm1\}}
\sqrt{\widehat p_\omega(b\mid y)}\,|b\rangle,
\qquad
\widehat p_\omega(b\mid y)
:=\frac{\widehat q_\omega(yb)}
{\sum_{c\in\{\pm1\}}\widehat q_\omega(yc)},
\]
taking $\widehat p_\omega(b\mid y)=1/2$ if the denominator is zero.
Each step computes the two child estimates coherently, performs
the rotation, and uncomputes the estimates.
Let $|\widetilde\Psi_\omega\rangle$ be the state obtained after
these $2n$ exact rotations.
Writing $x_{<j}:=(x_1,\ldots,x_{j-1})$, we have
\[
|\widetilde\Psi_\omega\rangle
=\sum_{x\in\{\pm1\}^{2n}}
\left(\prod_{j=1}^{2n}
\widehat p_\omega(x_j\mid x_{<j})\right)^{1/2}|x\rangle.
\]
In $\poly(n,T,\log(1/\epsilon_{\mathrm{TV}}))$ time, we can implement an approximation of these rotations 
with operator-norm error at most \(\epsilon_{\mathrm{TV}}/(16n)\) for each controlled rotation, so that the resulting state $|\widehat\Psi_\omega\rangle$
satisfies:
\begin{equation}\label{eq:implementation error}
    \|\widehat\rho-\widetilde\rho\|_1
\leq
2\mathbb E_\omega
\bigl\|
|\widehat\Psi_\omega\rangle-
|\widetilde\Psi_\omega\rangle
\bigr\|_2
\leq\frac{\epsilon_{\mathrm{TV}}}{4},
\end{equation}
where $\widetilde\rho
:=\mathbb E_\omega\!
\left[\Tr_R
\left(|\widetilde\Psi_\omega\rangle
\langle\widetilde\Psi_\omega|\right)\right],
\widehat\rho
:=\mathbb E_\omega\!
\left[\Tr_R
\left(|\widehat\Psi_\omega\rangle
\langle\widehat\Psi_\omega|\right)\right].$

On $\mathcal G$, zero-mass prefixes are never reached, and
whenever $q(y)>0$,
$
\widehat p_\omega(b\mid y)
=e^{\pm2\epsilon}\frac{q(yb)}{q(y)}.
$

On $\mathcal G$, for every $x\in\{\pm1\}^{2n}$,
$
\langle x|\widetilde\Psi_\omega\rangle
\geq e^{-2n\epsilon}\sqrt{q(x)}
=e^{-2n\epsilon}\langle x|\Psi_M\rangle.
$
Thus
\begin{equation}\label{eq:bound on good event}
\langle\Psi_M|\widetilde\Psi_\omega\rangle
\geq e^{-2n\epsilon},
\qquad
\left\|
|\widetilde\Psi_\omega\rangle\langle\widetilde\Psi_\omega|
-|\Psi_M\rangle\langle\Psi_M|
\right\|_1
\leq4\sqrt{n\epsilon}
=\frac{\epsilon_{\mathrm{TV}}}{4}.    
\end{equation}

Recall that
$\rho_M=\Tr_R|\Psi_M\rangle\langle\Psi_M|$. Contractivity under partial trace gives:
\[
\begin{aligned}
\|\widetilde\rho-\rho_M\|_1
&\leq \left\|\mathbb E_\omega
|\widetilde\Psi_\omega\rangle\langle\widetilde\Psi_\omega|
-|\Psi_M\rangle\langle\Psi_M|
\right\|_1\leq 
\mathbb E_\omega
\left\|
|\widetilde\Psi_\omega\rangle\langle\widetilde\Psi_\omega|
-|\Psi_M\rangle\langle\Psi_M|
\right\|_1\\
&\leq \mathbb E_\omega\left[
\mathbbm{1}_{\mathcal G} \left\|
|\widetilde\Psi_\omega\rangle\langle\widetilde\Psi_\omega|
-|\Psi_M\rangle\langle\Psi_M|
\right\|_1\right]+ \mathbb E_\omega\left[
\mathbbm{1}_{\mathcal G^c} \left\|
|\widetilde\Psi_\omega\rangle\langle\widetilde\Psi_\omega|
-|\Psi_M\rangle\langle\Psi_M|
\right\|_1\right]\\
&\leq
4\sqrt{n\epsilon}
+2\Pr_\omega(\mathcal G^c),
\end{aligned}
\]
where we use \eqref{eq:bound on good event} on $ \mathcal G ,$ and the trivial bound  $\left\|
|\widetilde\Psi_\omega\rangle\langle\widetilde\Psi_\omega|
-|\Psi_M\rangle\langle\Psi_M|
\right\|_1\leq 2$ on $\mathcal G^c$.
This, combined with \cref{eq:trotter error}, \cref{eq:implementation error}, gives:
\[
\|\widehat\rho-\rho\|_1
\leq
\|\widehat\rho-\widetilde\rho\|_1
+\|\widetilde\rho-\rho_M\|_1 +\|\rho_M-\rho\|_1
\leq
\|R_M\|_{\mathrm{op}}
+4\sqrt{n\epsilon}
+\frac{\epsilon_{\mathrm{TV}}}{4}
+2\Pr_\omega(\mathcal G^c)
\leq\epsilon_{\mathrm{TV}}.
\]

\end{proof}

\begin{proof}[Proof of \cref{thm:Gibbs state preparation}]
Define $H_z,H_x$ as in \eqref{eq:Hz and Hx}.
Conjugating by $\prod_{i:\eta_i<0}Z_i$, we may assume
$\eta_i\geq0$; we undo this conjugation on the output state.
 Let $\epsilon,\delta\in (0,1)$ be as in
\cref{thm:marginal to state reduction}.
Let
\[ c  =  \norm{J}_{\infty} +\norm{\mathbf{h}^z}_\infty+ \norm{\boldsymbol{\eta}}_\infty+1,\quad L = n \lceil \norm{J}_\infty \rceil +1,\quad
    M := 2M' ,\quad M'=\lceil 500\cdot \max \left\{\frac{n c^2 }{\epsilon} , \left (\frac{nc}{\epsilon}\right)^{3/2}\right\}\rceil.\]
Let $U_M:=
\left(e^{H_z/(2M)}e^{H_x/(M)}e^{H_z/(2M)}\right)^{M'}
=e^{(H+R_M)/2}.$
By \cite[Lemma 3]{Bravyi2014MonteCS} and the choice of $M$,
$R_M$ is Hermitian and $\|R_M\|_{\mathrm{op}}
\leq\frac{\epsilon_{\mathrm{TV}}}{8}.$ We note also that $U_M$ is the product of matrices with nonnegative entries, and thus has nonnegative entries.
We apply \cref{thm:marginal to state reduction}.

For partial configurations $\ell,r\in \set{0, \pm 1}^n$, let $ P_\ell, P_r$ be as in \cref{thm:marginal to state reduction}, and let:
\[q_M(\ell,r)
=\frac{\Tr(P_\ell U_M P_rU_M)}{\Tr(U_M^2)} = \frac{\Tr(P_\ell \tilde{U}_M P_r \tilde{U}_M)}{\Tr(Q_M)},\text{ } \tilde{U}_M = \left(e^{H_z/M}e^{H_x/M}\right)^{M'}, \text{ } Q_M = \tilde{U}_M^2 = \left(e^{H_z/M}e^{H_x/M}\right)^M. \]
Here we use the fact that $ P_\ell, P_r$ are diagonal in the $Z$-basis and thus commute with $e^{H_z/M}$.

We use the setup in \cref{prop:derive expression for observables}
with $S=\varnothing$, and write
$\pi=\tilde\nu = \bigotimes_{i=1}^n\tilde{\nu}^{(i)}$ with 
\[\tilde{\nu}^{(i)} =\pi_{\mathbf{K}_{\eta_i, (0)}, w_i}^{\mathrm{1D Ising, CBC}}, w_i = (h_i^z/M) \cdot \mathbf{1}.
 \]
We express $q_M(\ell,r)$ in terms of $\pi$ and $W$ using 
the same argument as in the proof of \cref{prop:derive expression for observables}, and approximate $q_M(\ell,r)$ using \cref{thm:main mixing theorem}. 

For $s^{(0)}, s^{(M')} \in \set{\pm 1}^n, $ let \[ \mathbb{I}_{\ell,r,s^{(0)}, s^{(M')}} =\mathbbm{1} \{s_i^{(0)}=\ell_i\ \text{ if } \ell_i\neq0 \text{ and }
s_i^{(M')}=r_i\ \text{ if } r_i\neq0 \text{ for every } i \in [n]\}.\] 

Let \[A_i:=e^{h_i^z Z/M}e^{\eta_i X/M}, \quad
C:=\prod_{i=1}^n\Tr(A_i^M).
\]
We have: $\Tr(Q_M) = C \mathbb E_\pi[e^W]$ and
\begin{equation}\label{eq:express marginal}
\begin{aligned}
   \Tr(P_\ell \tilde{U}_M P_r \tilde{U}_M)&=\sum_{s^{(0)}, s^{(M')} \in \set{\pm 1}^n} \mathbb{I}_{\ell,r,s^{(0)}, s^{(M')}} \bra{\mathbf{s}^{(0)}}\tilde{U}_M \ket{\mathbf{s}^{(M')}}\bra{\mathbf{s}^{(M')}} \tilde{U}_M\ket{\mathbf{s}^{(0)}}\\
 &= \sum_{(\mathbf s^{(k)})_{k=0}^{M}} \mathbb{I}_{\ell,r,s^{(0)}, s^{(M')}}
 \langle \mathbf s^{(M)} | \mathbf s^{(0)}\rangle
 \prod_{k=0}^{M-1}
 \left(
 \bra{\mathbf s^{(k)}}e^{H_z/M}\ket{\mathbf s^{(k)}}
 \bra{\mathbf s^{(k)}}e^{H_x/M}\ket{\mathbf s^{(k+1)}}
 \right)\\
 &= C \sum_{(\mathbf s^{(k)})_{k=0}^{M-1}} \mathbb{I}_{\ell,r,s^{(0)}, s^{(M')}}  \exp\!\left(\frac{1}{2M}\sum_{k=0}^{M-1}\sum_{i,j}
                  J_{ij} \mathbf s^{(k)}_i s^{(k)}_j\right)
\prod_{i=1}^n \tilde{\nu}^{(i)}
\bigl((s_i^{(k)})_{k=0}^{M-1}\bigr)
\end{aligned}    
\end{equation}
Let 
$
E_{\ell,r}:=\left\{( s^{(k)})_{k=0}^{M-1} :
\mathbb{I}_{\ell,r,s^{(0)}, s^{(M')}} = 1
\right\}.
$
Note that 
\[
\pi(E_{\ell,r})
=\prod_{i=1}^n
\frac{\Tr(D_{\ell_i}A_i^{M'} D_{r_i}A_i^{M'})}
     {\Tr(A_i^{2M'})},
\quad D_0:=I,\quad D_{\pm1}:=\frac{I\pm Z}{2}.
\]
We compute $\pi(E_{\ell,r})$ in $O(nM)$ time. Note that if $\pi(E_{\ell,r}) =0,$ then $ q_M(\ell, r) =0,$ so we simply return $\widehat{q_M(\ell,r)}= 0.$  We assume $\pi(E_{\ell,r}) \neq 0,$ which implies the distribution $ \tilde{\rho} \equiv \pi(\cdot\mid E_{\ell,r})$ is well-defined, and
\[ \Tr(P_\ell \tilde{U}_M P_r \tilde{U}_M) = C\pi(E_{\ell,r}) \mathbb E_{\pi(\cdot\mid E_{\ell,r})}[e^W] \text{ thus } q_M(\ell,r) =\pi(E_{\ell,r})
\frac{\mathbb E_{\tilde{\rho}}[e^W]}
     {\mathbb E_\pi[e^W]}.\]
     
We can write $ \tilde{\rho} =\bigotimes_{i=1}^n\tilde{\rho}^{(i)}$ 
where 
\[\tilde{\rho}^{(i)} = \lineartilt_{\tilde{w}_i} \tilde{\nu}^{(i)} = \pi_{\mathbf{K}_{\eta_i, (0)}, \tilde{w}_i + w_i}^{\mathrm{1D Ising, CBC}}, \text{ with }
\tilde{w}_i = (\tilde{w}_i^{(k)})_{k=0}^{M-1}, \tilde{w}_i^{(k)} = \begin{cases} +\infty &\text{ if } \ell_i = +1, k=0 \\ -\infty &\text{ if } \ell_i = -1, k= 0 \\ +\infty &\text{ if } r_i = +1, k= M'\\ -\infty &\text{ if } r_i = -1, k = M' 
\\ 0   &\text{ else }
\end{cases} .\footnote{Recall from \cref{def:1D Ising model CBC} that conditioning is equivalent to taking external fields to $ \pm \infty.$}\]

 We use 
\cref{thm:main mixing theorem} with accuracy parameters $\epsilon/4, \delta/2$ to compute approximations $ \widehat{\mathbb E_{\tilde{\rho}}[e^W]}$ of $\mathbb E_{\tilde{\rho}}[e^W] $ and $\widehat{\mathbb E_\pi[e^W] }$ of $ \mathbb E_\pi[e^W] ,$ and obtain $ \widehat{q_M(\ell,r)  } =\pi(E_{\ell,r}) \frac{\widehat{\mathbb E_{\tilde{\rho}}[e^W]}}{\widehat{\mathbb E_\pi[e^W] }}$ which satisfies:
\[
\Pr\!\left[
\widehat q_M(\ell,r)=e^{\pm\epsilon}q_M(\ell,r)
\right]\geq1-\delta.
\]
We justify the runtime using the same argument as in the proof of \cref{thm:main theorem tractable} (see \eqref{eq:gap condition} and \eqref{eq:total runtime details}).
As in \eqref{eq:gap condition}, our choice of $M$ gives
\[
(1-1/L)\Delta(J)\alpha_{M}(\eta_{\min})
\leq1-\frac1{2L}.
\]
By \eqref{eq:total runtime details}, the estimator runs in time
\[
T=O\!\left(
n^2L^3M\epsilon^{-2}\log(L/\epsilon)\log\delta^{-1}
\right)
=\poly\!\left(
n,\max_{i,j}|J_{ij}|,\max_i|h_i^z|,
\max_i|\eta_i|,\epsilon_{\mathrm{TV}}^{-1}
\right).
\]
In \cref{sec:finite precision}, we show that this estimator admits
a finite-precision implementation that preserves its approximation
guarantees with only polynomial overhead.
\cref{thm:marginal to state reduction} then gives a quantum algorithm that runs in time $ \poly\!\left(
n,\max_{i,j}|J_{ij}|,\max_i|h_i^z|,
\max_i|\eta_i|,\epsilon_{\mathrm{TV}}^{-1}
\right)$\footnote{We assume rational input coefficients of polynomial bit length
and suppress the polynomial overhead of finite-precision arithmetic.} and prepares a quantum state $\widehat\rho$ such that
$\|\widehat\rho-\rho\|_1\leq\epsilon_{\mathrm{TV}}$.
\end{proof}
\subsection{From \cref{sec:lower bound}} \label{subsec:defer from lower bound}
\begin{proof}[Proof of \cref{prop:constant field deviation}]
Let \(\Lambda:=\Lambda_{\nu_\eta}\) and \(\Phi:=\Phi_{b,\eta}\).
\eqref{eq:derivation of observables continuous} gives
\begin{align*}
    \Lambda(r\mathbf{1}) = \log \frac{\Tr(\exp(\eta X + rZ)) }{\Tr(\exp(\eta X))} = \log \frac{\cosh R(r)}{\cosh \eta} 
\end{align*}
since the eigenvalues of $\eta X  + rZ = \begin{bmatrix} r & \eta \\ \eta & -r\end{bmatrix}$ are $ \pm R(r).$ 
Let $\phi(r) =\Phi(r\mathbf{1}) =  \log \frac{\cosh\left( R(r) \right)}{\cosh(\eta)} - \frac{r^2}{2b}.$ Then 
\[  \frac{d}{d r}\Phi(r\mathbf{1})  =  \phi'(r)= r(\frac{\tanh R(r)}{R(r)} - \frac{1}{b}). \]

    The function $g(R) = \frac{\tanh R}{R}$ is strictly decreasing on $(0,\infty)$. 
    
    Indeed, $g'(R) = \frac{R\sech^2(R) -\tanh R}{R^2}.$ Let  $ h(R) :=(R\sech^2(R) -\tanh R).$ Then $h$ vanishes at $0$ and has derivative $ h'(R) = -2R\tanh R\sech^2 R < 0$ for $ R >0$, so $g'(R) < 0$ for $R>0$. 
    Since $ g(\eta) > 1/b$ by \eqref{eq:assumption on b and eta} and $ \lim_{R \to +\infty} g(R)  = 0 < 1/b $, there exists a unique $R_* >\eta$ satisfying $g(R_*) =1/b.$ Let $ m_* = b^{-1}\sqrt{R_*^2 - \eta^2}.$ Then $\phi'(r)|_{r = \pm b m_*} =0,$ and $m_*\in (0,1),$ since $bm_* \leq R(b m_*)=R_* = b \tanh R_* < b.$ 
    


We let
    \[
\psi(R) := \log\frac{\cosh R}{\cosh\eta}
           - \frac{R^2-\eta^2}{2b},
\text{  }
\phi(r) = \psi(R(r)),
\qquad
\psi'(R) = G(R) = \tanh R- \frac{R}{b}.
\]
    Since $G''(R) = -2 \tanh(R) \sech^2(R),$ $G$ is concave on $(0,+\infty).$ Let $ c_0 =\frac{G(\eta)}{R_*-\eta} >0.$ Using concavity and $ G(R_*)=0,$ we have $|G(R)| \geq c_0 |R_*-R|$ for $ R\geq \eta.$
Thus, for $m\geq 0$:
\begin{equation}
  \phi( bm_*) - \phi(b m) = \int_{R(bm)}^{R_*} G(t) d t \geq \frac{c_0}{2} (R(bm)-R_*)^2 \geq c_1' (m-m_*)^2  
\end{equation}
for $c_1' = \frac{c_0 (b^2 m_*)^2}{8 R_*^2}>\frac{\eta}{8} (\frac{b\tanh (\eta)}{\eta} - 1)=c_1 >0$. Here, we use \[\left|R(bm)-R_*\right| = b|m-m_*|\cdot  \frac{bm+bm_*}{R(bm) + R_*}\geq b|m-m_*| \cdot \frac{b m_*}{2 R_*}.\]
Indeed, writing $x=bm$ and $u=bm_*$, and using $\sqrt{\eta^2+x^2}\le\eta+x$ and $u\le\eta+R_*,$ we have:
\[
\frac{x+u}{\sqrt{\eta^2+x^2}+R_*}
\ge \frac{x+u}{x+\eta+R_*}
\ge \frac{u}{\eta+R_*}
\ge \frac{u}{2R_*}.\qedhere\]
\end{proof}
\begin{proof}[Proof of \cref{prop:sign change}]
Let $\nu\equiv \nu_{\eta},$ and  let \[ A = \E_{\nu}[e^{\langle |f|, \sigma\rangle}] = \E_{\nu}[\cosh(\langle |f|, \sigma\rangle)], \quad B =  \E_{\nu}[e^{\langle f, \sigma\rangle}]=\E_{\nu}[\cosh(\langle f, \sigma\rangle)] ,\] where we use the fact that $d\nu (\sigma) = d\nu(-\sigma).$ We will show:
\begin{equation}\label{eq:difference between exponential}
   A- B = \E_{\nu}[\cosh(\langle |f| ,\sigma\rangle) - \cosh(\langle f, \sigma\rangle)] \geq  \sech(\eta) \cdot \left(\cosh(\norm{f}_1) -\cosh\left(\int f(t) dt\right)\right)  \geq 0   
\end{equation}
\eqref{eq:difference between exponential} together with $\log A - \log B = \int_{B}^A \frac{1}{t} dt\geq \frac{A-B}{A}$ and $e^{-a}(\cosh(a) -\cosh(b)) = \frac{(1-e^{b-a}) (1- e^{-b-a})}{2} $ gives:
\begin{align*}
    \Lambda (|f|) - \Lambda(f) &\geq \sech \eta
    \cdot \exp (- \norm{f}_1) \left(\cosh(\norm{f}_1) -\cosh\left(\int f(t) dt\right)\right) \\
    &= \frac{\sech(\eta)}{2}\cdot (1-e^{\left|\int_{\mS} f dt\right| - \norm{f}_1  } )( 1- e^{-\left|\int_{\mS} f dt\right| -\norm{f}_1} )
\end{align*}
Note that $ 1-e^{-\left|\int_{\mS} f dt\right| - \norm{f}_1  } \geq 1 - e^{-\norm{f}_1} \geq 1 - e^{-f_{\min}}\geq 0. $
    This combined with the inequality $ 1- e^{-x} = \int_0^x e^{-t} dt \geq x e^{-x}$, where $ x :=\norm{f}_1-\left|\int_{\mS} f dt\right| \in [0, f_{\max}] $, gives
    \[ (1-e^{-\left|\int_{\mS} f dt\right| - \norm{f}_1  }) (1-e^{\left|\int_{\mS} f dt\right| - \norm{f}_1  } ) \geq \left(\norm{f}_1-\left|\int_{\mS} f dt\right|\right) e^{-f_{\max}} (1 - e^{-f_{\min}}). \]
To finish the proof, we prove \eqref{eq:difference between exponential}. To do so, we show that for every even integer $d\geq 0$, and $\rho = \nu(\cdot | \, |\mathcal{J}(\sigma)| = d )$:
\begin{equation}\label{eq:helper sign change conditioned on number of jump}
   \Delta_d := \mathbb{E}_{\rho}\left[ \cosh(\langle |f|, \sigma \rangle) - \cosh(\langle f, \sigma \rangle) \right] \geq 0 
\end{equation}
then conclude that:
\begin{align*}
   A-B= \mathbb{E}_{\nu}\left[ \cosh(\langle |f|, \sigma \rangle) - \cosh(\langle f, \sigma \rangle) \right] 
    &\geq \Pr_{\nu}\big[ |\mathcal{J}(\sigma)| = 0 \big] \cdot \Delta_0 \\
    &= \sech(\eta) \cdot \left(\cosh(\norm{f}_1) -\cosh\left(\int f(t) dt\right)\right) 
\end{align*}
Using the Taylor expansion $\cosh x = \sum_{k\geq 0} \frac{x^{2k}}{(2k)!},$ we have:
\begin{align*}
    \Delta_d 
    = & \mathbb{E}_{\rho}\left[ \sum_{k\geq 0}\frac{1}{(2k)!} \cdot\left\{ \left(\int |f(t)| \sigma(t) dt \right)^{2k} -\left(\int f(t) \sigma(t) dt \right)^{2k}\right\}\right]\\
  = & \sum_{k\geq 0} \frac{1}{(2k)!}\cdot \int  \left(\prod_{i=1}^{2k}|f(t_i)| -\prod_{i=1}^{2k}f(t_i)\right) \E_{\rho}\left[\prod_{i=1}^{2k}\sigma(t_i)\right] dt_1\cdots d t_{2k}
\end{align*}
The exchanges of summation, expectation, and integration are justified by Fubini's theorem, since
\begin{align*}
    &\sum_{k\geq 0}   \frac{1}{(2k)!}\cdot \E_{\rho}\left[ \int\left| \left(\prod_{i=1}^{2k}|f(t_i)| -\prod_{i=1}^{2k}f(t_i)\right) \prod_{i=1}^{2k}\sigma(t_i) \right|dt_1\cdots d t_{2k}  \right] \\
    \leq&2 \sum_{k\geq 0}   \frac{1}{(2k)!}\cdot \int \prod_{i=1}^{2k}|f(t_i)| dt_1\cdots d t_{2k}
    \\
    = &
\sum_{k\geq 0}\frac{2\|f\|_1^{2k}}{(2k)!}
=2\cosh(\|f\|_1)\leq 2\cosh(\|f\|_2) < \infty.
\end{align*}
Hence, it suffices to show $ \E_{\rho} \left[ \prod_{i=1}^{2k}\sigma(t_i) \right]\geq 0. $ Wlog assume that $ t_1 \leq t_2 \leq \cdots \leq t_{2k}.$ Note that by the construction of $\sigma$
\[  \prod_{i=1}^{2k}\sigma(t_i) =\prod_{i=1}^k \left(\sigma(t_{2i-1})\sigma(t_{2i})\right) = \prod_{i=1}^k (-1)^{|\mathcal{J}(\sigma) \cap (t_{2i-1}, t_{2i}] |}  = (-1)^{ |\mathcal{J}(\sigma) \cap \bigcup_{i=1}^k(t_{2i-1}, t_{2i}]|}\]
Note that $ |\mathcal{J}(\sigma) \cap \bigcup_{i=1}^k(t_{2i-1}, t_{2i}]| \sim \mathrm{Bin}(d, p)$ with $p = \sum_{i=1}^k (t_{2i}-t_{2i-1}).$ Hence 
\[ \E_{\rho}\left[\prod_{i=1}^{2k}\sigma(t_i)\right] = \sum_{r=0}^d (-1)^r\binom{d}{r}  p^r (1-p)^{d-r} = (1-2p)^d \geq 0. \]
\end{proof}

  \begin{proof}[Proof of \eqref{eq:justify edge interaction}]
  Let
 \(a=bm_*>0\). 
By \eqref{eq:derivation of observables continuous}
\[
F(v)=(2\cosh\eta)^2 \mathbb E_{\sigma,\tau\sim\nu_\eta}\left[
\exp\!\left(a\langle\sigma, \mathbf{1}\rangle +a\langle\tau, \mathbf{1}\rangle 
+v\langle\sigma,\tau\rangle\right)\right]= (2\cosh\eta)^2
\mathbb E_{\sigma,\tau\sim\nu_\eta}\left[
e^{aA_\sigma+aA_\tau
+v B}\right],
\]
where \(A_\sigma= \langle\sigma, \mathbf{1}\rangle=\int_{\mathbb S}\sigma(t)\,dt\),  $B:=\langle\sigma,\tau\rangle
  =\int_{\mathbb S}\sigma(t)\tau(t)\,dt$, and
\(\sigma\) and \(\tau\) are independent. 

Fix $v >0.$
Subtracting the expressions for \(F(v)\) and \(F(-v)\), and using the Taylor series for $\sinh(x)$, we obtain
\begin{align}
\frac{F(v)-F(-v)}{(2\cosh\eta)^2}
&=\mathbb E_{\sigma,\tau}\left[
e^{aA_\sigma+aA_\tau}(e^{vB}-e^{-vB})
\right] =2\sum_{\substack{k\ge1\\k\text{ odd}}}
\frac{v^k}{k!}\,
\mathbb E_{\sigma,\tau}\left[
e^{aA_\sigma+aA_\tau}B^k
\right].
\label{eq:F-odd-expansion}
\end{align}
For each \(k\ge1\), we have:
\begin{align*}
B^k
&=\left(\int_{\mathbb S}\sigma(t)\tau(t)\,dt\right)^k=\int_{\mathbb S^k}
\prod_{j=1}^k \sigma(t_j)\tau(t_j)\,
dt_1\cdots dt_k.
\end{align*}
Consequently, by Fubini's theorem,
\begin{align*}
\mathbb E_{\sigma,\tau}\left[
e^{aA_\sigma+aA_\tau}B^k
\right]
&\quad=
\int_{\mathbb S^k}
\mathbb E_{\sigma,\tau}\left[
\left(e^{aA_\sigma}\prod_{j=1}^k\sigma(t_j)\right)
\left(e^{aA_\tau}\prod_{j=1}^k\tau(t_j)\right)
\right]dt_1\cdots dt_k\\
&\quad=
\int_{\mathbb S^k}
\mathbb E_{\nu_\eta}\left[
e^{aA_\sigma}\prod_{j=1}^k\sigma(t_j)
\right]
\mathbb E_{\nu_\eta}\left[
e^{aA_\tau}\prod_{j=1}^k\tau(t_j)
\right]dt_1\cdots dt_k\\
&\quad=
\int_{\mathbb S^k}
\left(
\mathbb E_{\nu_\eta}\left[
e^{aA_\sigma}\prod_{j=1}^k\sigma(t_j)
\right]\right)^2dt_1\cdots dt_k.
\end{align*}
The second equality uses independence, and the third uses the fact
that \(\sigma\) and \(\tau\) have the same law. Substituting into
\eqref{eq:F-odd-expansion} gives
\[
\frac{
F(v)-F(-v)}{
2(2\cosh\eta)^2} =
\sum_{\substack{k\ge1\\ k\text{ odd}}}\frac{v^k}{k!}
\int_{\mathbb S^k}
\left(
\mathbb E_{\nu_\eta}\!\left[
e^{aA_\sigma}\prod_{j=1}^k\sigma(t_j)
\right]\right)^2
dt_1\cdots dt_k 
\]
The exchanges of summation, expectation, and integration above are justified by Fubini's theorem and the fact that
\(|A_\sigma|,|A_\tau|,|B|\le1\). In particular, \[\int_{\mathbb S^k}\left|
\mathbb E_{\sigma,\tau}\left[
\left(e^{aA_\sigma}\prod_{j=1}^k\sigma(t_j)\right)
\left(e^{aA_\tau}\prod_{j=1}^k\tau(t_j)\right)
\right]\right| dt_1\cdots dt_k \leq e^{2a+v} \] and \[\sum_{\substack{k\ge1\\ k\text{ odd}}}\frac{v^k}{k!}
\int_{\mathbb S^k}
\left|
\mathbb E_{\nu_\eta}\!\left[
e^{aA_\sigma}\prod_{j=1}^k\sigma(t_j)
\right]\right|^2
dt_1\cdots dt_k\leq e^{2a+v}  .\] 

The summand corresponding to each $k$ is nonnegative. The \(k=1\) term is strictly positive, since by Holder's inequality $\int_{\mathbb S}\mathbb E_{\nu_\eta}
[e^{aA_\sigma}\sigma(t)]^2 \,dt \geq (\int_{\mathbb S}\mathbb E_{\nu_\eta}
[e^{aA_\sigma}\sigma(t)]\,dt)^2$ and
\[
\int_{\mathbb S}\mathbb E_{\nu_\eta}
[e^{aA_\sigma}\sigma(t)]\,dt
=\mathbb E_{\nu_\eta}[A_\sigma e^{aA_\sigma}]
=\mathbb E_{\nu_\eta}[A_\sigma\sinh(aA_\sigma)]   >0,
\]
where the second equality follows from spin-flip symmetry, i.e., $\nu_\eta(d\sigma)= \nu_\eta(d(-\sigma)),$ and the inequality is by
\[ \mathbb E_{\nu_\eta}[A_\sigma\sinh(aA_\sigma)] \geq \E_{\nu_\eta}[A_\sigma\sinh(aA_\sigma) \mid |\mathcal{J}(\sigma)| = 0]  \cdot \Pr_{\nu_\eta}\big[ |\mathcal{J}(\sigma)| = 0 \big] = \frac{\sinh (a)}{\cosh \eta } > 0\]
where we recall the definition of $ \mathcal{J}(\sigma)$ from \cref{def:general path definition}.
 Hence \(F(v)>F(-v)\) for every \(v>0\).
\end{proof}

\section{Finite precision}\label{sec:finite precision}
Assume the notation of \cref{thm:main theorem tractable}.
Suppose all model parameters $J_{ij}, h_i^z, \eta_i$ are rational numbers with bit length bounded by $b.$ 
Let $\Lambda = \max \{n, \max_{i,j} |J_{ij}|, \max_i |h_i^z|, \max_i |\eta_i| \}.$ 

We show how to implement the algorithms for approximating $ \Tr(X_S e^H)$ and $ \Tr(O \rho)$ from \cref{thm:main theorem tractable} and the subroutines to approximate $q_M(\ell, r) $ in \cref{thm:Gibbs state preparation} 
with $ \poly(b, \Lambda,\epsilon^{-1}) \cdot \log \delta^{-1} $ cost in the word-RAM model. This allows implementing the quantum state-preparation
algorithm in the quantum circuit model over a fixed finite
universal gate set with $ \poly(b, \Lambda,\epsilon_{TV}^{-1})$ cost. 

We use the same notation and setup as in the proof of \cref{thm:main theorem tractable}; in particular, we assume $\eta_i \geq 0\forall i$, and for the algorithms from \cref{thm:main theorem tractable} and \cref{thm:Gibbs state preparation},
choose $M, L$ as in the corresponding proofs.
These algorithms and subroutines call the sampler from \cref{thm:main mixing theorem} part 1, which in turn calls the sampler from \cref{prop:1D Ising sampling time} with input $ \tilde{\mathbf{K}} = \mathbf{K}_{\eta_i, (p_i)}$ from \cref{prop:approximation of path functional}, and $\tilde{\mathbf{w}}$ where \[\forall k: \text{ either } \tilde{w}^{(k)}\in \set{\pm \infty} \text{  or } |\tilde{w}^{(k)}| \leq O(\Lambda)\]
and $\forall k: |\tilde{K}^{(k)}| = K_M(\eta_i) .$ If $\eta_i =0$, $ K_M(\eta_i) =+\infty$, else $ K_M(\eta_i) =\frac{1}{2}\log \coth(\frac{\eta_i}{M})\leq \frac{1}{2}\log \frac{2M}{\eta_i} \leq O(b + \log M)$ by \cite[Proposition 18]{SV26}, and the facts that $ \left|\frac{\eta_i}{M}\right|\leq 1$ by the choice of $M$ and that $\eta_i\neq 0$ has bit length bounded by $b.$ 

For $\varepsilon_{TV}\in (0,1)$, we sample from $\hat{\pi}_{\tilde{\mathbf K},\tilde{\mathbf w}}^{\mathrm{1D Ising, CBC}} $ s.t. $ d_{TV} (\hat{\pi}_{\tilde{\mathbf K},\tilde{\mathbf w}}^{\mathrm{1D Ising, CBC}}, \pi_{\tilde{\mathbf K},\tilde{\mathbf w}}^{\mathrm{1D Ising, CBC}}) \leq \varepsilon_{TV}$ 
by rounding the arithmetic operations in the sampler from \cref{prop:1D Ising sampling time}\footnote{In the sampler from \cref{prop:1D Ising sampling time}, all arithmetic used to compute probabilities involves finite
real numbers: infinite external fields are handled exactly, and the
case $\eta_i=0$ is treated separately.}
using $O(B)$ bits of precision, with \[B = O\left( M (b +\log M + \Lambda)  + \log \frac{M}{\varepsilon_{TV}}\right).\]
Set $\varepsilon_{TV}=\frac{\epsilon_{TV}}{4N}$ with $N = O(nL \log (L/\epsilon_{TV})).$ Recall that $N$ gives an upper bound for the number of calls to the sampler from \cref{prop:1D Ising sampling time} by the (approximate) sampler for $\mu_\lambda$ with $\lambda \in \set{\frac{i}{L}\mid i\in [0,L-1]\cap \N}$ from \cref{thm:main mixing theorem} part 1. Hence, by a standard coupling argument, the above rounding procedure, plus other standard approximations, e.g., for sampling $i\sim \text{Uniform}([n])$, gives an algorithm that samples from $\hat{\mu}_\lambda$ with $d_{TV} (\hat{\mu}_\lambda, \mu_\lambda) \leq \epsilon_{TV}$ by operating on $O(B)$-bit numbers and having runtime bounded by
\[
O\!\left(
n^2ML\log\frac{2L}{\epsilon_{TV}}\,
\operatorname{poly}(B)
\right)
=\operatorname{poly}(b,\Lambda,M,L,\epsilon_{TV}^{-1}).
\]
Plugging this sampler into the algorithms in \cref{thm:main theorem tractable} and subroutines in \cref{thm:Gibbs state preparation}\footnote{We use a similar rounding procedure to approximate $D$, $C$, $\pi(E_{\ell,r}),$ and $e^{W/L}$ with relative accuracy; the case $\pi(E_{\ell,r})=0$ can be detected exactly.} gives the desired result.
\end{document}